\documentclass{article}
\usepackage{hyperref, graphicx}
\usepackage{amsmath}
\usepackage{tocloft}
\usepackage{amssymb}
\usepackage{slashed} 
\usepackage{yfonts}
\usepackage[T1]{fontenc}
\usepackage[utf8]{inputenc}
\usepackage[english]{babel}
\usepackage{amsthm}
\usepackage{tikz-cd}
\usepackage{quiver}
\usepackage{mathabx}
\usepackage{stmaryrd}
\usepackage{braket}
\usepackage{physics}
\usepackage{geometry}
\geometry{a4paper,
total={170mm, 257mm},
left=1.2in,
top=1.2in,
bottom=1.2in,
right=1.2in}
\usepackage[
backend=biber,
style=numeric-comp,
sorting=none,
]{biblatex}
\addbibresource{biblio2.bib}

\newtheorem{theorem}{Theorem}[section]
\newtheorem{lemma}{Lemma}[section]
\newtheorem{corollary}{Corollary}[section]
\newtheorem{proposition}{Proposition}[section]

\usepackage{authblk}
    \allowdisplaybreaks

\hypersetup{
pdfstartview = {FitH},
}
\hypersetup{
	colorlinks=true,         
	linkcolor=blue,          
	citecolor=blue,        
	urlcolor=blue            
}

\DeclareMathAlphabet{\mathpzc}{OT1}{pzc}{m}{it}

\theoremstyle{remark}
\newtheorem{remark}{Remark}[section]

\theoremstyle{definition}
\newtheorem{definition}{Definition}[section]

\usepackage{mathtools}

\usepackage{mathrsfs}

\numberwithin{equation}{section}

\newcommand{\beq}{\begin{eqnarray}}
\newcommand{\eeq}{\end{eqnarray}}


\def\g{\mathfrak g}

\def\h{\mathfrak h}
\def\l{\mathfrak l}
\def\L{\mathscr{L}}

\def\G{\mathfrak{G}}
\def\D{\mathfrak{D}}
\def\t{\mathfrak{t}}


\def\R{\mathbb{R}}
\def\bbC{\mathbb{C}}


\def\rd{\mathrm{d}}

\newcommand{\cA}{{\cal A}}

\newcommand{\cF}{{\cal F}}

\newcommand{\cG}{{\cal G}}

\newcommand{\sfx}{{\mathsf{x}}}
\newcommand{\sfy}{{\mathsf{y}}}

\makeatletter
\newsavebox{\@brx}
\newcommand{\llangle}[1][]{\savebox{\@brx}{\(\m@th{#1\langle}\)}%
  \mathopen{\copy\@brx\kern-0.5\wd\@brx\usebox{\@brx}}}
\newcommand{\rrangle}[1][]{\savebox{\@brx}{\(\m@th{#1\rangle}\)}%
  \mathclose{\copy\@brx\kern-0.5\wd\@brx\usebox{\@brx}}}
\makeatother

\newcommand{\id}{\operatorname{id}}


\begin{document}

\title{Higher Gauge Theory and Integrability}
\author[1]{{ \sf Hank Chen}\thanks{hank.chen@uwaterloo.ca}}

\author[2]{{\sf Joaquin Liniado}\thanks{jliniado@iflp.unlp.edu.ar}}
\affil[1]{\small Department of Applied Mathematics, University of Waterloo, 200 University Avenue West, Waterloo, Ontario, Canada, N2L 3G1}
\affil[2]{\small Instituto de Física La Plata (IFLP; UNLP-CONICET), Diagonal 113, Casco Urbano, B1900, La Plata, Provincia de Buenos Aires, Argentina}

\maketitle

\begin{abstract}
  In recent years, significant progress has been made in the study of integrable systems from a gauge theoretic perspective. This development originated with the introduction of $4$d Chern-Simons theory with defects, which provided a systematic framework for constructing two-dimensional integrable systems. In this article, we propose a novel approach to studying higher-dimensional integrable models employing techniques from higher category theory. Starting with higher Chern-Simons theory on the $4$-manifold $\mathbb{R}\times Y$, we complexify and compactify the real line to $\mathbb{C}P^1$ and introduce the disorder defect $\omega=z^{-1}\rd z $. This procedure defines a holomorphic five-dimensional variant of higher Chern-Simons theory, which, when endowed with suitable boundary conditions, allows for the localisation to a three-dimensional theory on $Y$. The equations of motion of the resulting model are equivalent to the flatness of a $2$-connection $(L,H)$, that we then use to construct the corresponding higher holonomies. We prove that these are invariants of homotopies relative boundary, which enables the construction of conserved quantities. The latter are labelled by both the categorical characters of a Lie crossed-module and the infinite number of homotopy classes of surfaces relative boundary in $Y$. Moreover, we also demonstrate that the $3$d theory has left and right acting symmetries whose current algebra is given by an infinite dimensional centrally extended affine Lie 2-algebra. Both of these conditions are direct higher homotopy analogues of the properties satisfied by the 2d Wess-Zumino-Witten CFT, which we therefore interpret as facets of integrable structures. 

\end{abstract}

\newpage 

\tableofcontents

\newpage

\section{Introduction}

Integrable models represent a unique domain of exploration where complex systems exhibit a remarkable level of order. This is typically expressed through an infinite number of symmetries, which in turn, allow for the construction of an infinite number of independently conserved charges. The existence of such a large number of conserved quantities imposes stringent constraints on the system, which may render it soluble to some extent; it is precisely this solvable character what makes integrable systems so special. 

However, the strength of integrable theories is also their greatest weakness: identifying the infinite number of conserved charges is at the same time, the most complicated challenge. Significant progress was made in this direction with the introduction of the Lax formalism \cite{Lax:1968fm} for two-dimensional models. Indeed, this framework provides a systematic way to identify and construct the conserved charges. Specifically, in the case of two dimensional field theories, one looks for a $\mathfrak{g}^{\mathbb{C}}$-valued \emph{Lax connection} which is on-shell flat and depends meromorphically on a $\mathbb{C}P^1$-valued parameter known as the \emph{spectral parameter}. If such an object is found, then its holonomy along Cauchy slices can be used as a generating functional for the conserved charges, by expanding it in powers of the spectral parameter $z\in \mathbb{C}P^1$.

Similarly, in the context of four-dimensional systems, integrability has also been expressed in terms of the \emph{on-shell} flatness of some connection; the prototypical example being the anti-self-dual Yang-Mills (ASDYM) equations. In fact, it has long been established that two-dimensional integrable models arise as symmetry reductions of the ASDYM equations. Of course, anti-self-duality is not quite a flatness condition on the curvature, but instead, the requirement $F=-\star F$. However, the Penrose-Ward correspondence \cite{Ward:1977ta} identifies solutions to the ASDYM equations on four-dimensional space-time, with holomorphic vector bundles over twistor space. In particular, for a vector bundle to be holomorphic, the corresponding connection must be flat.

Despite the inherent elegance of these geometric characterizations of integrability, both suffer from the same issue: although they provide a systematic procedure for constructing the conserved charges once the on-shell flat connection is found, they offer no instruction whatsoever on how to find the connection in the first place.  

In 2013, Costello conceives a beautifully unifying framework to address this problem, originally, for the case of two-dimensional models\footnote{To be precise, the first series of papers dealt with discrete integrable systems in two-dimensions, whereas only the last article of the series discusses two-dimensional field theories. In particular, the latter is the one we will be interested in.} \cite{Costello:2013zra, Costello:2013sla}, which was further refined in a series of seminal papers written in collaboration with Witten and Yamazaki \cite{Costello:2017dso, Costello:2018gyb, Costello:2019tri}. Loosely speaking, one starts with three-dimensional Chern-Simons (CS$_3$) theory, whose fundamental field $A$ is a gauge connection which is \emph{on-shell} flat: at the very least, a tempting candidate to feature in this construction. There are nonetheless, two immediate concerns. First, we are looking for a Lax connection which is defined over a $2$-fold, whereas the gauge field $A$ lives in a three dimensional manifold. This is not \emph{really} a problem, as we can always gauge fix one of the components of the connection to zero. The real issue, is that we want the fields of the two-dimensional theory to be sections of a $G$-bundle over spacetime, not the connection itself! 

To resolve these issues, Costello complexifies and compactifies one of the real directions of the $3$-fold where CS$_3$ is defined, to obtain a $4$-fold $X=\mathbb{C}P^1 \times \Sigma$. Furthermore, he introduces a disorder operator $\omega$ which is the key object that defines $4$d Chern-Simons theory (CS$_4$), whose action is given by
\begin{equation}
    S_{\mathrm{CS}_4}=\int_X \omega \wedge \langle A, \mathrm{d}A+\tfrac{2}{3}A\wedge A\rangle \,.
\end{equation}
The disorder operator $\omega$ is a meromorphic $(1,0)$-form with poles (and eventually zeros) on $\mathbb{C}P^1$. These singularities, which are punctures on the Riemmann sphere, act as a boundary of the form $\{\text{poles of }\omega\}\times \Sigma$ for the $4$-fold $X$. Conceptually, by introducing a boundary on the theory we are partially breaking gauge symmetry. Now gauge transformations identify field configurations which are physically indistinguishable; said differently, they kill \emph{would-be} degrees of freedom. Thus, breaking the gauge symmetry of the theory will \emph{resurrect} \cite{Tong:2016kpv} these would-be degrees of freedom exactly where the symmetry was broken, namely, \emph{on the boundary} $\partial X = \{\text{poles of }\omega\}\times \Sigma$. These boundary degrees of freedom which we will refer to as \emph{edge-modes} are the fields of the two-dimensional theory. Crucially, the bulk of the theory is unaltered, so the flatness of the gauge field $A$, which will then become the Lax connection of the boundary theory, will be implied by construction, and with this, the integrability of the theory. 

Four-dimensional Chern-Simons theory has been an incredibly successful program since its conception. Most of the previously known two-dimensional integrable field theories (IFTs) have been constructed from this perspective, together with new models as well \cite{Delduc:2019whp, Costello:2020lpi,Fukushima:2020kta,Fukushima:2020tqv,Fukushima:2021ako, Ashwinkumar:2020gxt, Caudrelier:2020xtn, Berkovits:2024reg}. See also \cite{Lacroix:2021iit} for a review of its original formulation. Moreover, alternative approaches have been constructed in order to include more general choices of the meromorphic $(1,0)$-form $\omega$ using techniques from homotopical methods \cite{Benini_2019,Lacroix:2021iit, Liniado:2023uoo}. In addition, the framework has also been extended to the case in which $\mathbb{C}P^1$ is replaced with higher genus Riemmann surfaces \cite{Lacroix:2023qlz, Lacroix:2024wrd}. 

In an interesting turn of events, inspired by an idea suggested in a seminar by Costello, Bittleston and Skinner show that CS$_4$ can be described from an even more general perspective, starting with six-dimensional holomorphic Chern-Simons theory (hCS$_6$) on twistor space \cite{Bittleston:2020hfv}. Indeed, holomorphic Chern-Simons theory had first been considered in \cite{Witten:1992fb} to describe the open string sector of the type B topological string. In this context, the target space must be Calabi-Yau, which ensures the existence of a globally defined holomorphic $(3,0)$-form. Twistor space however, is not Calabi-Yau and it therefore does not admit a trivial canonical bundle. Costello's suggestion to overcome this problem was to consider a meromorphic $(3,0)$-form rather than a holomorphic one; in other words, introducing disorder defects. 

This led to a remarkable generalization of CS$_4$ theory because not only it provided a systematic way to construct four-dimensional integrable field theories (in the sense of ASDYM) but it also managed to include the symmetry reduction of ASDYM to 2d integrable models in a unique coherent scheme. More precisely, Bittleston and Skinner introduce a diamond of correspondence of theories as the one shown in figure \ref{fig:diamond}.
\begin{figure}[ht]
\label{fig:diamond}
\centering
\begin{tikzpicture}
\node at (0,2) {hCS$_6$ on $\mathbb{C}P^1\times \mathbb{R}^4$};
\node at (-2,0) {CS$_4$ on $\mathbb{C}P^1\times \mathbb{R}^2$};
\node at (2,0) {IFT$_4$ on $\mathbb{R}^4$};
\node at (0,-2) {IFT$_2$ on $\mathbb{R}^2$};
\draw[->,thick,decorate, decoration={snake, segment length=12pt, amplitude=2pt}] (-0.4,1.6)--(-1.6,0.4);
\draw[->,thick] (0.4,1.6)--(1.6,0.4);
\draw[->,thick,decorate, decoration={snake, segment length=12pt, amplitude=2pt}] (1.6,-0.4)--(0.4,-1.6);
\draw[->, thick] (-1.6,-0.4)--(-0.4,-1.6);
\end{tikzpicture}
\caption{Diamond of correspondence of theories}
\end{figure}
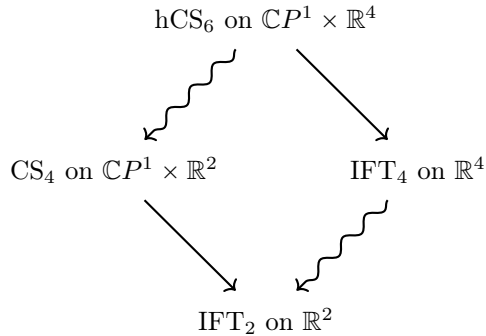

The straight lines represent integration along $\mathbb{C}P^1$. That is, one can morally do the same procedure to go from hCS$_6$ on (Euclidean) twistor space $\mathbb{PT}\cong \mathbb{C}P^1\times \mathbb{R}^4$ to an IFT on $\mathbb{R}^4$ than the one introduced in \cite{Costello:2019tri} to go from CS$_4$ on $\mathbb{C}P^1\times \mathbb{R}^2$ to an IFT on $\mathbb{R}^2$. The wiggly lines, in turn, represent symmetry reduction, which schematically consists of quotienting out a copy of $\mathbb{R}^2$ from the copy of $\mathbb{R}^4$. This was formally implemented in \cite{Bittleston:2020hfv, Penna:2020uky} for various choices of meromorphic $(3,0)$-forms, and it was shown that following both sides of the diamond leads to the same 2d IFT. This work was further generalized in \cite{Cole:2023umd, Cole:2024ess, Cole:2024noh}.

It is clear that some of the fundamental features of integrable structures appearing in field theories can be understood from a gauge theoretic perspective. So far, the strategy to study integrable models in higher dimensions from this perspective, has been to identify the structural similarities between hCS$_6$ on $\mathbb{PT}$ and CS$_4$ on $\mathbb{C}P^1\times \mathbb{R}^2$ and exploit them.

\medskip

In this article we propose a different alternative, based on the "categorical ladder = dimensional ladder" proposal \cite{Crane:1994ty, Baez:1995xq}. It states that higher-dimensional physics can be described by higher categorical structures, and that one can "climb" the dimensions by categorification. In brief, a category can be thought of as a collection of objects together with morphisms on them, such that certain "coherence conditions" are satisfied. Category theory itself can thus be understood as the study of \textit{structure} and the relations between them; for a comprehensive overview of category theory, one can consult \cite{maclane:71}. \textit{Categorification} is then, abstractly, a way to impart relations between structures in a coherent manner. Applying this idea to categories themselves gives rise to the notion of higher categories, which consist of objects, relations between these objects, relations between these relations, and so on. Indeed, as one climbs this categorical ladder, the structures that appear are suited, in each step, for describing higher dimensional data.

As expected, the theory of higher categories can become extremely complicated and abstract very rapidly (see eg. \cite{Lurie:2009}). Nevertheless, the idea of applying such intricate higher categorical tools to study higher-dimensional physics has been very popular in recent years \cite{Gaiotto:2014kfa,Kapustin:2013uxa,Wen:2019,Zhu:2019,Kong:2020wmn,Dubinkin:2020kxo,chen:2022,Delcamp:2023kew,Chen2z:2023}, and has led to many fruitful classification results. Examples include, but are not limited to, the study of higher-dimensional gapped topological phases in condensed matter theory \cite{Johnson-Freyd:2020,KongTianZhou:2020}, as well as the study of anomalies in quantum field theory (QFT) \cite{Freed:2014,Cordova:2018cvg,Benini_2019}. The striking power of functorial topological QFTs (TQFTs) since Atiyah-Segal \cite{Atiyah:1988}, in particular, to produce novel topological invariants \cite{lurie2008classification,Tillmann_2004,Freed_2021,Reutter:2020bav,Douglas:2018} from abstract categorical data cannot be overstated.

Higher category theory has also been used to study gauge theories in higher dimensions, leading to the notion of \emph{higher gauge theories}. These comprise a very rich and intricate system of gauge principles governed by categorical structures \cite{Baez:2002highergauge,Baez:2004in,Wockel2008Principal2A,Kim:2019owc}, which give rise to observables that are sensitive to the topology and geometry of surfaces, in analogy to the Wilson lines in the usual 3d Chern-Simons setup \cite{Reshetikhin:1991tc,Turaev:1992hq,Witten:1988hc,Guadagnini2008DeligneBeilinsonCA}. The existence of such theories, invites the following question:

\medskip

\begin{quote}
\centering
{\em What is the interplay between higher-gauge theories and higher-dimensional integrable systems?} 
\end{quote}

Our paper is dedicated to answering this question in a deep manner. Specifically, we will consider higher Chern-Simons theory based on a Lie 2-group/Lie group crossed module and its corresponding Lie 2-algebra \cite{Song_2023, Zucchini:2021bnn}. Schematically speaking, every object appearing in regular gauge theory is replaced by its corresponding higher \emph{counterpart}: the Lie group $G$ is replaced by a $2$-group $\mathsf{H}\rightarrow G$, the Lie algebra $\mathfrak{g}$ by a Lie 2-algebra $\h\rightarrow \g$, the connection $A$ by a $2$-connection $(A,B)$, and so on and so forth.  Crucially, introducing these higher categorical objects \emph{requires} increasing the dimension of the manifold where the theory is defined. Indeed, 2-Chern-Simons theory is defined over a four dimensional manifold $X$ and it is constructed so that its equations of motion correspond to the (higher) flatness of the 2-connection $(A,B)$ (see \S \ref{sec:higherchernsimons} for details). Our idea, is thus to explore in what sense this categorified flatness condition is related to integrability. 

We thus proceed \emph{a lá} Costello \cite{Costello:2019tri}. We write $X=\mathbb{R}\times Y$ for a $3$-fold $Y$, we complexify and compactify the copy of $\mathbb{R}$, and introduce a disorder operator $\omega$. This procedure defines a five-dimensional theory, which can be localised to a three-dimensional boundary theory on $Y$, with equations of motion equivalent to the higher flatness of the 2-connection. In this article we will focus on a specific choice of meromorphic $(1,0)$-form given by $\omega=z^{-1}\rd z$, inspired by the resulting 2d theory corresponding to this choice of $\omega$ in the context of CS$_4$: the 2d Wess-Zumino-Witten (WZW) model.

\subsection{Summary of results}
We explicitly derive the localized 3d theory on $Y$, whose fields are given by a smooth function $g \in C^{\infty}(Y)\otimes G$ and a $1$-form $\Theta \in \Omega^1(Y)\times \h$ parameterizing the {broken} gauge symmetries at the poles of $\omega$. We prove in {\bf Proposition \ref{extended2flat}} that the equations of motion of the theory  give rise to a flat $2$-connection $(L,H)$. {\bf Theorem \ref{descending}} states that the remaining, unbroken symmetries of the 5d bulk 2-Chern-Simons theory on $X=Y\times\mathbb{C}P^1$ descend to residual global symmetries of the 3d theory on $Y$, in complete analogy with the CS$_4$ -- IFT$_2$ story \cite{Costello:2019tri}. We also show in \S \ref{thf} that we recover {\it exactly} the Chern-Simons/matter coupling studied in the context of 3-brane/5-brane couplings in the A-model \cite{Aganagic:2017tvx}.

By making use of higher groupoids, we prove in {\bf Theorem \ref{relbdy}} that the higher holonomies $(V,W)$ arising in our 3d theory are invariants of {\it homotopies relative boundary}, and that they are consistent with the Eckmann-Hilton argument \cite{Gaiotto:2014kfa,Dubinkin:2020kxo}. This can be understood as an improved version of the "surface independence" notion discovered in \cite{Alvarez:1997ma}. Conserved quantities can then be obtained from $(V,W)$, labelled by \textit{categorical characters} \cite{Ganter:2006,Ganter:2014,davydov2017lagrangian,Sean:private,Huang:2024} of the Lie 2-group $\mathbb{G}$ and homotopy classes of surfaces relative boundary in $Y$. If certain technical conditions are met, the above results are fortified in {\bf Theorem \ref{bordinv}} to show that the higher holonomies in fact define bordism invariants. These can be understood as a "global" version of the statement of conservation for the currents $J =(L,H)$; a more detailed explanation (with images) is given in \S \ref{holonomies}.

Finally, by analyzing the conserved Noether charges associated to the (infinitesimal) symmetries of the 3d action, we prove in {\bf Theorem \ref{gradedbrackets}} that the currents $(L,H)$ form an infinite dimensinoal Lie 2-algebra. Moreover, by leveraging a transverse holomorphic foliation (THF) on $Y$, {\bf Theorem \ref{affine2alg}} further illustrates that they form a {\it Lie 2-algebra extension} \cite{Angulo:2018}; we call these current algebras the \textbf{affine Lie 2-algebras of planar currents}. The existence of such an infinite dimensional symmetry algebra of our three-dimensional model is the origin of its integrable structure. 

It is worth emphasizing that both the existence of a spectral-parameter independent on-shell flat $2$-connection, and every result we have proven in \S \ref{dgaffinecurrents}, are direct higher homotopy analogues of the properties of the 2d WZW model: the anti/chiral anti/holomorphic currents satisfy a centrally extended current algebra, which is the underlying affine Lie algebra of the 2d WZW model \cite{KNIZHNIK198483}. 


\subsection{Organization of the paper}

The paper is organized as follows. In section \S \ref{sec:higherchernsimons} we introduce the concepts of a Lie group crossed-module and it's corresponding Lie 2-algebra. This will allow us to define a homotopy Maurer-Cartan theory, of which 2-Chern-Simons theory is a particular example. We proceed with the introduction of the 2-Chern-Simons action and describe some of its properties. In \S \ref{sec:h2csth} we define \emph{holomorphic} 2-Chern-Simons theory by introducing the disorder operator $\omega$. We present in full detail the procedure of localisation to a three-dimensional boundary theory. In \S \ref{3dift} we study the family of three-dimensional actions obtained in \S \ref{sec:h2csth} and in particular, we show that the equations of motion are equivalent to the higher flatness of the $2$-connection $(L,H)$. In \S \ref{residuals} we analyze the symmetries of these theories in depth. In \S \ref{sec:holonomies} we discuss the integrable properties associated to the flatness of the $2$-connection $(L,H)$. Specifically , we show that their $2$-holonomies are conserved and homotopy invariant. Finally, in section \S \ref{dgaffinecurrents} we describe the current algebra associated to the symmetries of the $3$d theory studied in \S \ref{residuals}. This is a graded affine Lie $2$-algebra with a central extension, which we interpret as the higher version of the affine Lie algebra that appears in 2d WZW. We conclude the article with an outline of future interesting directions to explore.

\medskip

\subsubsection*{Acknowledgements} The authors would like to thank Charles Young, Kevin Costello, Roland Bittleston, Lewis Cole and Benoit Vicedo for very insightful discussions. JL would like to thank Horacio Falomir for his endless support as a PhD advisor. HC would like to thank Florian Girelli, Nicolas Cresto and Christopher Andrew James Pollack for valuable discussions. The work of JL is supported by CONICET. We would also like to thank the anonymous reviewer for the very insightful comments and suggestions.

\subsubsection*{Note of the Authors} A few weeks ago, a pre-print \cite{Schenkel:2024dcd} was published in the arXiv in which 2-Chern-Simons theory with disorder defects is studied. While it is fortunate that the scopes of our articles are quite different, it is important to clarify that our work is not a follow-up to their publication. Rather, we independently conceived the same idea, performed our works independently, and they happened to publish their findings first. Our research and conclusions were developed without knowledge of their article, underscoring the originality and integrity of our work.

\section{Higher Chern-Simons Theory}
\label{sec:higherchernsimons}


We will define higher Chern-Simons theory from the perspective of \emph{homotopy Maurer-Cartan} theories. As we will see, these theories are naturally described using higher structures, yet they all share a common ingredient: their equations of motion are flatness of some (eventually higher) connection. In this section we introduce the necessary mathematical background and definitions, together with the 2-Chern-Simons action and some of its properties.

The notion of a higher principal bundle with connection has been developed recently in the literature \cite{Baez:2004in,Martins:2006hx,chen:2022,Porst2008Strict2A}. The kinematical data attached to such higher-gauge theories is captured by the structure of \textit{homotopy $L_\infty$-algebras}, which are graded vector spaces equipped with higher $n$-nary Lie brackets \cite{Kim:2019owc,Baez:2004} satisfying Koszul identities \cite{Bai_2013}. In this paper, we will focus our attention on $L_2$-algebras/strict Lie 2-algebras and their corresponding Lie 2-groups, which we now define. 

\begin{definition}\label{2grpdef}
A {\bf (Lie) group crossed-module} $\mathbb{G}=(\mathsf{H}\xrightarrow{\bar\mu_1}G,\rhd)$ consists of two (Lie) groups $\mathsf{H},G$, a (Lie) group homomorphism $\bar\mu_1:\mathsf{H}\rightarrow G$ and an (smooth) action $\rhd:G\rightarrow \operatorname{Aut}\mathsf{H}$ such that the following conditions
\begin{equation}
    \bar\mu_1(x\rhd y) = x\bar\mu_1(y)x^{-1}\,,\qquad (\bar\mu_1(y))\rhd y'=yy'y^{-1}\label{pfeif2}
\end{equation}
are satisfied for each $x\in G$ and $y,y'\in \mathsf{H}$.
\end{definition}
The infinitesimal approximation of a Lie 2-group gives rise to a Lie 2-algebra, which is a strict 2-term $L_\infty$-algebra \cite{Baez:2003fs}. More precisely, we have

\begin{definition}\label{lie2alg}
Let $\mathbb{K}$ denote a field of characteristic zero (such as $\R$ or $\bbC$). A {\bf $L_2$-algebra} $\mathfrak{G}=\h\rightarrow \g$ over $\mathbb{K}$ consist of two Lie algebras $\big(\h,[-,-]_\h\big)$ and $\big(\g,[-,-]\big)$ over $\mathbb{K}$ and the tuple of maps,
\begin{equation*}
    \mu_1: \h\rightarrow\g,\qquad \mu_2: \g\wedge\h \rightarrow \h\, ,\footnotemark
\end{equation*}
\footnotetext{Here $\wedge$ denotes the skew-symmetric tensor product and $\odot$ denotes the symmetric tensor product.} subject to the following conditions for each $\sfx, \sfx',\sfx'' \in\g$ and $\sfy,\sfy'\in\h$:
\begin{enumerate}
    \item The $\g$-equivariance of $\mu_1$ and the Peiffer identity,
    \begin{equation}
        \mu_1(\mu_2(\sfx,\sfy))=[\sfx,\mu_1(\sfy)]\,,\qquad \mu_2(\mu_1(\sfy),\sfy') =  [\sfy,\sfy']_\h =- \mu_2(\mu_1(\sfy'),\sfy)\,.\label{pfeif1}
\end{equation}
\noindent Note $\operatorname{ker}\mu_1\subset \h$ is an Abelian ideal due to the Peiffer identity \eqref{pfeif1}.

\item Graded Jacobi identities,
\begin{align}
\label{ec:gradjacobid}
        0&=[\sfx,[\sfx',\sfx'']]+[\sfx',[\sfx'',\sfx]]+[\sfx'',[\sfx,\sfx']], \\
        0&= \mu_2(\sfx,\mu_2(\sfx',\sfy)) - \mu_2(\sfx',\mu_2(\sfx,\sfy)) - \mu_2([\sfx,\sfx'],\sfy)\,.
    \end{align}
\end{enumerate}
Moreover, we call $\G$ {\bf balanced} \cite{Zucchini:2021bnn} if it is equipped with a graded symmetric non-degenerate bilinear form $\langle -,-\rangle: \g\odot\h \rightarrow \mathbb{K}$ which is invariant
\begin{equation}
    \langle \sfx,\mu_2(\sfx',\sfy)\rangle = \langle [\sfx,\sfx'],\sfy\rangle\,,\qquad \langle \mu_1(\sfy),\sfy'\rangle = \langle \mu_1(\sfy'),\sfy\rangle\label{inv}
\end{equation}
for each $\sfx,\sfx'\in\g$ and $\sfy,\sfy'\in \h$.
\end{definition}
\noindent The unary and binary brackets $\mu_1,\mu_2$ are integrated respectively to a Lie group map $\bar\mu_1:\mathsf{H}\rightarrow G$ and an action $G\rightarrow\operatorname{Der}\h$ of $G$ on the degree-(-1) Lie algebra $\h$. As an abuse of notation, we shall denote this action also by $\rhd$.\footnote{This action $\rhd:G\rightarrow\operatorname{Der}\h$ is part of the 2-adjoint action of $\mathbb{G}$ on its own Lie 2-algebra $\G=\operatorname{Lie}\mathbb{G}$ \cite{Bai_2013,Angulo:2018,chen:2022}.}

Based on the structure of a balanced Lie 2-algebra $\big(\G,\mu_1,\mu_2,\langle -,-\rangle\big)$, we construct 2-Chern-Simons theory as a homotopy Maurer-Cartan theory from the Batalin–Vilkovisky (BV) formalism \cite{Jurco:2018sby}. We will also describe the relation with the derived superfield formalism considered in \cite{Zucchini:2021bnn}. 

\subsection{Homotopy Maurer-Cartan Theory}\label{hMC}

Three-dimensional Chern-Simons theory can be described as the simplest example of what is known as a \emph{homotopy Maurer-Cartan theory}. These were first introduced in \cite{Zwiebach:1992ie} in the context of string field theory and are structured upon $L_\infty$-algebras. Indeed, CS$_3$ is a homotopy Maurer-Cartan theory for an $L_1$-algebra, that is, a plain Lie algebra. We will therefore define 2-Chern-Simons theory by considering the simplest graded structure within the $L_\infty$-algebra, namely, an $L_2$-algebra. 

We thus begin by turning the Lie $2$-algebra $\G=\h\rightarrow\g$ into a differential graded (dg) algebra by tensoring $\G$ with the de Rham complex $\Omega^\ast(X)$ over a space $X$. This gives rise to a Lie 2-algebra with the graded components
\begin{equation*}
    \L_n = (\Omega^\bullet(X)\otimes\G)_n = \bigoplus_{i+j=n} \Omega^i(X)\otimes \g_j,\qquad \g_{-1}=\h,~\g_0=\g,
\end{equation*}
together with the differential $\ell_1 = \mathrm{d}-\mu_1$ and $\ell_n = \mu_n\otimes\wedge^n$ for all $n\geq 2$.

\begin{definition}

An element $\mathcal{A}\in\L_1$ is a {\bf Maurer-Cartan element} if its curvature vanishes, 
\begin{equation}
\label{ec:curvature}
   \mathscr{F}(\cA) = \sum_{n=1}^2 \frac{1}{n!}\ell_n(\mathcal{A},\dots,\mathcal{A})=0 \,.
\end{equation}
We denote the space of Maurer-Cartan elements by $\mathsf{MC}\subset\Omega^\bullet(X)\otimes \G$.

\end{definition}

Let us see what this means by writing out these objects explicitly. An element $\cA \in \L_1$, by definition, is of the form $\mathcal{A}=(A,B)$ where $A\in \Omega^1(X)\otimes\g$ and $B \in \Omega^2(X)\otimes\h$. The curvature \eqref{ec:curvature} is thus
\begin{align*}
   \mathscr{F}(\mathcal{A})=  \ell_1(\mathcal{A}) + \frac{1}{2}\ell_2(\mathcal{A},\mathcal{A})&= \rd A - \mu_1(B) + \frac{1}{2}[A, A]  + \rd B +\mu_2(A, B)\,.
\end{align*}
Organizing this quantity by degree, we see that we obtain two equations
\begin{equation}
\label{ec:2flatnessss}
    \mathcal{F}=\rd A + \tfrac{1}{2}[A,A] - \mu_1(B) =0\,,\quad \mathcal{G}=\rd B + \mu_2(A,B) =0\,,
\end{equation}
where $\cF$ and $\cG$ are known as the {\bf fake-curvature} and {\bf 2-curvature} respectively. Equation \eqref{ec:2flatnessss} then corresponds to the {\bf fake-/2-flatness conditions} \cite{Kim:2019owc,Baez:2004in,Martins:2006hx,Chen:2022hct}.

\medskip

\begin{remark}
    In the case of CS$_3$, a degree $1$ element $\mathcal{A} \in \L_1$ is simply a $\g$-valued $1$-form $A$. In particular, it will be a Maurer-Cartan element if it's curvature $F(A) = \mathrm{d}A+\tfrac{1}{2}[A,A]$ vanishes. 
\end{remark}

\medskip

The goal is to define an action whose variational principle is associated to the zero-curvature condition; in other words, the minimal locus of the action consists only of Maurer-Cartan elements. This brings us to the following definition

\begin{definition}

We define the \textbf{2-Maurer-Cartan action} as the action functional

\begin{equation}
\label{ec:2mcaction}
    S_\text{2MC}[\mathcal{A}]\coloneqq \sum_{m=1}^2\frac{1}{(m+1)!}\int_X \langle \mathcal{A},\ell_m(\mathcal{A},\dots,\mathcal{A})\rangle \,.
\end{equation}
In particular, for an arbitrary variation $\delta \cA$ we find that $\delta S_{\mathrm{2MC}}[\cA]=0$ if and only if $\mathscr{F}(\mathcal{A}) =0$. 
    
\end{definition}

\begin{remark}
    Note that $\cA=(A,B) \in \L_1$  consists of a $\mathfrak{g}$-valued $1$-form $A$ and an $\mathfrak{h}$-valued $2$-form $B$. Similarly $\mathscr{F}=(\mathcal{F},\mathcal{G}) \in \L_2$ consists of a $\g$-valued $2$-form $\cF$ and an $\h$-valued $3$-form $\cG$. Since the pairing on the balanced Lie algebra $\mathfrak{G}$ pairs elements of $\g$ with those of $\h$, we conclude that the integrand in \eqref{ec:2mcaction} is a $4$-form, and thus $X$ must be a $4$-manifold. This explains why dg Lie algebras (dgla's), at the very least, are {\it required} to define a 4d analogue of Chern-Simons theory.
\end{remark}
Going back to \eqref{ec:2mcaction}, we make use of the invariance property \eqref{inv} to bring the 2-Maurer-Cartan action into the form 
\begin{align}
    S_\text{2CS}[\mathcal{A}] = \int_X \langle \rd A + \tfrac{1}{2}[A, A]- \mu_1(B), B\rangle \,,\label{2chernsimons}
\end{align}
which is what we shall refer to in the rest of the paper as \textbf{2-Chern-Simons theory} (2CS) \cite{Zucchini:2021bnn}. This theory is also called the "4d BF-BB theory" in some literature \cite{Baez:1995ph,Geiller:2022fyg,Girelli:2021zmt}. By construction, the equations of motion of \eqref{2chernsimons} will be precisely $\mathscr{F}(\mathcal{A})=0$. Explicitly, it is simple to check that 
\begin{align*}
    \delta_B S_\text{2CS}[A,B]=0 \,: &\quad \mathcal{F}=\rd A + \tfrac{1}{2}[A,A] - \mu_1(B)=0\,,\\
    \delta_A S_\text{2CS}[A,B]=0\,:&\quad \mathcal{G}=\rd B + \mu_2(A, B) = 0\,.
\end{align*}

\subsection{Gauge Symmetries of 2CS Theory}
\label{sec:gsof2CS}

As in any gauge theory, we will be interested in gauge symmetries of the action. These are given by a higher analogue of usual gauge transformations and are characterized by a smooth map $h \in C^{\infty}(X,G)$ and a $1$-form $\Gamma \in \Omega^{1}(X)\otimes \mathfrak{h}$ which act on the gauge fields as \cite{Baez:2004in}
\begin{equation}
\begin{aligned}
\label{ec:2gaugetr1}
    &A \mapsto A^{(h,\Gamma)}=\operatorname{Ad}_h^{-1}A + h^{-1}\mathrm{d}h + \mu_1(\Gamma) \\
    &B \mapsto B^{(h,\Gamma)}= h^{-1}\rhd B + \mathrm{d}\Gamma + \mu_2(A^{(h,\Gamma)}, \Gamma) - \tfrac{1}{2}[\Gamma,\Gamma] \,.    
\end{aligned}
\end{equation}
We call the simultaneous transformations of the gauge fields  a {\bf $2$-gauge transformation}. Note that the above expressions correspond to finite gauge transformations of the gauge fields, which in this article will play a crucial role. In the literature people often consider only infinitesimal gauge transformations, which can be obtained by a formal expansion $h \sim 1+\alpha$ and neglecting terms quadratic in $\alpha$ and $\Gamma$, see for instance \textbf{Prop. 2.9} of \cite{Baez:2004in}. With a simple computation, it can be shown that under these gauge transformations, the fake-curvature and the $2$-curvature transform covariantly, 
\begin{align*}
    \mathcal{F}\big(A^{(h,\Gamma)},B^{(h,\Gamma)}\big) = \operatorname{Ad}_h^{-1}\mathcal{F}(A,B)\,,\qquad \mathcal{G}\big(A^{(h,\Gamma)},B^{(h,\Gamma)}\big) = h^{-1}\rhd \mathcal{G}(A,B) + \mu_2(\mathcal{F}(A,B), \Gamma) \,,
\end{align*}
whence the transformations \eqref{ec:2gaugetr1} leave the subspace $\mathsf{MC}\subset\Omega^\bullet(X)\otimes\G$ of Maurer-Cartan elements invariant.

\begin{remark}\label{homtrans}
Notice \eqref{ec:2gaugetr1} includes a class of transformations that translates the 1-form field. As such, the value of $A$ is only defined modulo the image of $\mu_1$ up to 2-gauge redundancy. However, to properly treat this seemingly large gauge freedom in the Batalin–Vilkovisky (BV) formalism, one would have to make use of {\it homotopy transfer}. Given the BV(-BRST) $L_\infty$-algebra $\L$ encoding all of the fields in the theory, it was argued in \cite{Arvanitakis:2020rrk} that integrating out degrees of freedom is equivalent to projecting onto its cohomology $H(\L)$. If $H(\L)\neq0$ is not trivial, then the homotopy transfer theorem states that higher homotopy brackets --- in particular a trilinear tertiary "homotopy map" $\tilde\ell_3$ \cite{Chen:2012gz} --- will be induced on $H(\L)$ through this projection \cite{stasheff2018,Arvanitakis:2020rrk}.

We shall explain the mathematics behind homotopy transfer in \S \ref{sec:homtransf}. But the upshot is that integrating out/gauging this shift symmetry $A\rightarrow A+\mu_1\Gamma$ away will in general lead to a non-trivial deformation of the theory, and the 2-Chern-Simons theory will end up differing perturbatively from the 4d BF theory. Though if $\mu_1$ \textit{were} invertible, then $A$ can indeed be completely gauged away. In this case, the cohomology of the BV-BRST complex $\L$ would be trivial, hence performing a homotopy transfer to project away the field $A$ leads to a trivial theory, at least perturbatively. In this case, the 2CS theory becomes the so-called 4d BF-BB theory, which is conjectured to coincide with the Crane-Yetter-Broda TQFT \cite{Baez:1995ph} that hosts only non-perturbative degrees-of-freedom. It is sensitive only to the ($G$-)bordism class of the underlying 4-manifold. 
\end{remark}

We are interested in the variation of 2CS theory under a gauge transformation, which we describe in the following proposition.

\begin{proposition}
\label{prop:gvar}
Under the 2-gauge transformations \eqref{ec:2gaugetr1} the 2CS action transforms as
\begin{equation}
\label{ec:actionundergt}
\begin{aligned}
S_{\mathrm{2CS}}\big[A^{(h,\Gamma)},B^{(h,\Gamma)}\big] &= S_{\mathrm{2CS}}[A,B] \\
&\hspace{.7cm}+\int_X\mathrm{d}\big[\langle \mathrm{Ad}_h^{-1}F(A) ,\Gamma\rangle + \tfrac{1}{2} \langle\mathrm{Ad}_h^{-1}A+h^{-1}\mathrm{d}h,[\Gamma,\Gamma]\rangle + \tfrac{1}{2}L_{\mathrm{CS}}(\Gamma)\big]    
\end{aligned}
\end{equation}
where
\begin{equation}
\label{ec:basicCS}
  L_{\mathrm{CS}}(\Gamma) = \langle\mu_1(\Gamma),\rd\Gamma+\tfrac{1}{3}[\Gamma,\Gamma]\rangle\,. 
\end{equation}
  
\end{proposition}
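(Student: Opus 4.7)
The plan is to decompose the 2-gauge transformation as a composition of two simpler moves: the pure $G$-gauge transformation by $h$ with $\Gamma=0$, yielding $(\tilde A, \tilde B) := (A^h, B^h) = (\mathrm{Ad}_h^{-1}A + h^{-1}\mathrm{d}h,\; h^{-1}\rhd B)$, followed by a pure $\h$-valued $1$-form shift by $\Gamma$ with $h=\mathrm{id}$, yielding $(\tilde A + \mu_1\Gamma,\; \tilde B + \Delta B)$. Using the Peiffer identity $\mu_2(\mu_1\Gamma, \Gamma) = [\Gamma, \Gamma]_\h$, the shift of $B$ simplifies to $\Delta B = \mathrm{d}\Gamma + \mu_2(\tilde A,\Gamma) + \tfrac{1}{2}[\Gamma,\Gamma]_\h$. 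The first (pure $G$) leg is easy: a standard computation gives $F(\tilde A) = \mathrm{Ad}_h^{-1} F(A)$, and the $G$-equivariance of $\mu_1$ together with the $\mathrm{Ad}$-invariance of the pairing $\langle -,-\rangle$ imply $S_{\mathrm{2CS}}[\tilde A, \tilde B] = S_{\mathrm{2CS}}[A, B]$, so the first leg contributes nothing.

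For the second leg, the key observation is that the fake curvature is \emph{strictly} invariant under the shift. Indeed, expanding $F(\tilde A + \mu_1\Gamma)$ and applying the Peiffer identities $[\tilde A, \mu_1\Gamma] = \mu_1\mu_2(\tilde A,\Gamma)$ and $[\mu_1\Gamma,\mu_1\Gamma] = \mu_1[\Gamma,\Gamma]_\h$ produces exactly $F(\tilde A) + \mu_1(\Delta B)$, so that
\begin{equation*}
\mathcal{F}\bigl(A^{(h,\Gamma)}, B^{(h,\Gamma)}\bigr) = \mathcal{F}(\tilde A, \tilde B) = \mathrm{Ad}_h^{-1}\mathcal{F}(A,B)\,.
\end{equation*}
Therefore the whole variation collapses to
$S_{\mathrm{2CS}}[A^{(h,\Gamma)}, B^{(h,\Gamma)}] - S_{\mathrm{2CS}}[A, B] = \int_X \langle \mathcal{F}(\tilde A,\tilde B),\, \Delta B\rangle$,
and it remains to recast the right-hand side as the claimed total derivative.

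Expand $\Delta B = D^{\tilde A}\Gamma + \tfrac{1}{2}[\Gamma,\Gamma]_\h$ with $D^{\tilde A}\Gamma := \mathrm{d}\Gamma + \mu_2(\tilde A,\Gamma)$. The $F(\tilde A)$ piece of $\langle \mathcal{F},\Delta B\rangle$ collapses cleanly: the Bianchi identity $\mathrm{d}F(\tilde A) = [F(\tilde A),\tilde A]$ together with the invariance $\langle[x,x'], y\rangle = \langle x, \mu_2(x',y)\rangle$ yield $\langle F(\tilde A), D^{\tilde A}\Gamma\rangle = \mathrm{d}\langle F(\tilde A),\Gamma\rangle$, while the leftover $\tfrac{1}{2}\langle F(\tilde A),[\Gamma,\Gamma]\rangle$ is split via $F(\tilde A) = \mathrm{d}\tilde A + \tfrac{1}{2}[\tilde A,\tilde A]$ and integrated by parts, producing $\tfrac{1}{2}\mathrm{d}\langle\tilde A,[\Gamma,\Gamma]\rangle$ plus precisely the combination $\tfrac{1}{2}\mathrm{d}L_{\mathrm{CS}}(\Gamma)$ after the graded Jacobi identity on $\g$ absorbs the $[[\tilde A,\tilde A],\cdot]$ term. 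For the $-\mu_1\tilde B$ piece one uses the crucial identity $\mu_1(\Delta B) = F(\tilde A+\mu_1\Gamma) - F(\tilde A)$ established above, combined with the symmetry $\langle\mu_1 y, y'\rangle = \langle \mu_1 y', y\rangle$, to show that all $\tilde B$-dependent terms cancel — consistent with the manifestly $B$-independent right-hand side of \eqref{ec:actionundergt}.

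The main obstacle is the bookkeeping of graded signs in the mixed form/Lie-algebra degree setting and, in particular, the precise cancellation of the $\mu_1\tilde B$ contributions, which is not visible at the level of the schematic expansion and only emerges after repeated application of the Peiffer identities and of the symmetry of the pairing. A secondary subtlety is that the Chern-Simons $3$-form $L_{\mathrm{CS}}(\Gamma)$ appearing in the final answer is not exactly the naive Chern-Simons form of $\mu_1\Gamma \in \g$: the coefficient $\tfrac{1}{3}$ rather than $\tfrac{2}{3}$ reflects the mixed $\g\odot \h$ nature of the bilinear form, and correctly identifying this numerical factor requires careful tracking of the prefactors arising from integration by parts on the $\tfrac{1}{2}[\Gamma,\Gamma]$ pieces.
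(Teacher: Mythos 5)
Your overall strategy — split $(h,\Gamma)$ into a pure $h$-transformation followed by a pure $\Gamma$-shift, dispose of the first leg by covariance of $F$ and invariance of the pairing, and handle the second leg with the Peiffer identities — is sound and is essentially the route the paper takes. The problem is the headline identity you base the second leg on. Writing $\Delta B = \rd\Gamma+\mu_2(\tilde A,\Gamma)+\tfrac12[\Gamma,\Gamma]_\h$ and using $F(\tilde A+\mu_1\Gamma)=F(\tilde A)+\mu_1(\Delta B)$, a direct expansion of the action gives
\begin{equation*}
S_{\mathrm{2CS}}\big[\tilde A+\mu_1\Gamma,\tilde B+\Delta B\big]-S_{\mathrm{2CS}}[\tilde A,\tilde B]
=\int_X \langle F(\tilde A),\Delta B\rangle+\tfrac12\langle \mu_1(\Delta B),\Delta B\rangle\,,
\end{equation*}
where the $\tilde B$-dependent cross terms $\langle\mu_1(\Delta B),\tilde B\rangle$ and $-\langle\mu_1(\tilde B),\Delta B\rangle$ cancel against each other by the symmetry $\langle\mu_1 y,y'\rangle=\langle\mu_1 y',y\rangle$. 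This is \emph{not} equal to your claimed $\int_X\langle \mathcal{F}(\tilde A,\tilde B),\Delta B\rangle=\int_X\langle F(\tilde A)-\mu_1(\tilde B),\Delta B\rangle$: your expression still carries an uncancelled $-\langle\mu_1(\tilde B),\Delta B\rangle$ (there is nothing left for it to cancel against, contradicting the $B$-independence of the right-hand side of \eqref{ec:actionundergt}), and it is missing the term $\tfrac12\langle\mu_1(\Delta B),\Delta B\rangle$ entirely. On-shell of fake-flatness your right-hand side would even vanish identically, which the correct variation does not.

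This is not a cosmetic slip: the missing $\tfrac12\langle\mu_1(\Delta B),\Delta B\rangle$ is precisely where $\tfrac12 L_{\mathrm{CS}}(\Gamma)$ comes from (in particular the $\langle\mu_1(\Gamma),\rd\Gamma\rangle$ piece, which cannot arise from $\langle F(\tilde A),\Delta B\rangle$), and it also supplies the terms $\tfrac12\langle[\tilde A,\mu_1\Gamma],\mu_2(\tilde A,\Gamma)\rangle+\langle\rd\Gamma,\mu_2(\tilde A,\Gamma)\rangle$ that must cancel, via the graded Jacobi identity and invariance, against the non-exact leftovers $\tfrac14\langle[\tilde A,\tilde A],[\Gamma,\Gamma]\rangle-\langle\rd\Gamma,\mu_2(\tilde A,\Gamma)\rangle$ produced by your treatment of $\langle F(\tilde A),\Delta B\rangle$. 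With your starting point neither the generation of $L_{\mathrm{CS}}(\Gamma)$ nor these cancellations can occur, so the derivation does not close. Replace the collapse step with the identity displayed above and the rest of your outline (Bianchi identity, invariance of the pairing, graded Jacobi) goes through exactly as in the paper. As a minor aside, the coefficient $\tfrac13$ in $L_{\mathrm{CS}}(\Gamma)$ is just the standard Chern--Simons normalization in the $\tfrac12[-,-]$ convention (equivalently $\tfrac23\,\Gamma\wedge\Gamma$); it has nothing to do with the mixed $\g\odot\h$ pairing.
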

\begin{proof}
Let us begin by simplifying the expressions that appear in the Lagrangian $L_{\mathrm{2CS}}\big[A^{(h,\Gamma)},B^{(h,\Gamma)}\big]$ conveniently. First, we note that
\begin{equation}
    F\big(A^{(h,\Gamma)}\big) = F(A^h)+F(\Gamma)+[A^h,\mu_1(\Gamma)]
\end{equation}
where $A^h = h^{-1}Ah+h^{-1}\rd h$. On the other hand, we can write
\begin{equation}
    B^{(h,\Gamma)}=h^{-1}\rhd B + F(\Gamma)+\mu_2(A^h,\Gamma)\,.
\end{equation}
Thus, we find that
\begin{equation}
    F\big(A^{(h,\Gamma)}\big)-\tfrac{1}{2}\mu_1\big(B^{(h,\Gamma)}\big) = \mathrm{Ad}_h^{-1}\big[F(A)-\tfrac{1}{2}\mu_1(B)\big] + \tfrac{1}{2}\big(\mu_1(F(\Gamma))+[A^h,\mu_1\Gamma]\big)\,,
\end{equation}
where we have used the covariance $F(A^h)=\mathrm{Ad}_h^{-1}F(A)$ of the usual curvature. Hence we have
\begin{equation}
\begin{aligned}    L_{\mathrm{2CS}}\big[A^{(h,\Gamma)},B^{(h,\Gamma)}\big]= L_{\mathrm{2CS}}[A,B] &+ \langle \mathrm{Ad}_h^{-1}\big[F(A)-\tfrac{1}{2}\mu_1(B)\big], F(\Gamma)+\mu_2(A^h,\Gamma)\rangle \\
& \hspace{1cm}+\tfrac{1}{2}\langle \mathrm{Ad}_h^{-1}\mu_1(B)+ F(\Gamma)+\mu_2(A^h,\Gamma),F(\Gamma)+\mu_2(A^h,\Gamma)\rangle \,.
\end{aligned}
\end{equation}
The terms containing $B$ cancel each other. Expanding the last term and doing some algebraic manipulations we get
\begin{equation}
\label{ec:formula1}
\begin{aligned}
    \tfrac{1}{2}\langle F(\Gamma)+\mu_2(A^h,\Gamma),F(\Gamma) &+\mu_2(A^h,\Gamma)\rangle = \\
    &\tfrac{1}{2}\rd L_{\mathrm{CS}}(\Gamma)+\tfrac{1}{2}\langle [A^h,\mu_1(\Gamma)],\mu_2(A^h,\Gamma)\rangle + \langle \rd \Gamma,\mu_2(A^h,\Gamma)\rangle\,,
\end{aligned}
\end{equation}
where we have used the Jacobi identity to eliminate terms cubic in $\Gamma$ and $L_{\mathrm{CS}}(\Gamma)$ is given in \eqref{ec:basicCS}. The remaining term is after some manipulations,
\begin{equation}
\label{ec:formula2}
\begin{aligned}
    \langle \mathrm{Ad}_h^{-1}F(A),F(\Gamma)+\mu_2(A,\Gamma)\rangle 
    &= \rd\langle A^h,\tfrac{1}{2}[\Gamma,\Gamma]\rangle +  \rd\langle F(A^{h}),\Gamma \rangle -\langle \rd \Gamma,\mu_2(A^h,\Gamma)\rangle   \\
    & \hspace{5.5cm} +\tfrac{1}{4}\langle [A^h,A^h],[\Gamma,\Gamma]\rangle
\end{aligned}
\end{equation}
where we have used the Bianchi identity $\rd F(A^h)+[A^h,F(A^h)]=0$. Putting together \eqref{ec:formula1} and \eqref{ec:formula2} we see that the non-exact terms cancel using the graded Jacobi identity \eqref{ec:gradjacobid}, and we arrive to the desired result.

\end{proof}

    Note that the gauge variation is a total boundary term. In other words, violation to gauge non-invariance of 2-Chern-Simons theory is completely hologrpahic, in contrast with CS$_3$ whose gauge variation contains the well-known Wess-Zumino-Witten term, which must be defined in the $3$d bulk.

\subsubsection{Secondary gauge transformations}
\label{secondarygaugetransf}
A special feature of 2-gauge theory in general is that there are redundancies in the \textit{2-gauge transformations} $(h,\Gamma)$ \eqref{ec:2gaugetr1} itself \cite{chen:2022,Baez:2004,Soncini:2014ara,Kapustin:2013uxa}. This can be attributed to the fact that the gauge symmetries in \eqref{2chernsimons} form in actuality a 2-group, whose objects are the gauge parameters $(h,\Gamma)$.\footnote{This is an example of a categorical {\it gauge} symmetry, which is distinct from the categorical global symmetries that have been discussed recently in various places, eg. \cite{Gaiotto:2014kfa,Delcamp:2023kew}.} We call these \textbf{secondary gauge transformations}, which can be formalized in a categorical context as in \cite{Baez:2004,Baez:2004in,Soncini:2014ara}.

More explicitly, these secondary gauge transformations are in essence deformations/"morphisms" $(h,\Gamma)\rightarrow (h',\Gamma')$ such that $(h',\Gamma')$ remains a 2-gauge transformation as in \eqref{ec:2gaugetr1}, 
\begin{equation}\label{ec:2ndgt}
    (A^{(h,\Gamma)},B^{(h,\Gamma)}) \rightarrow (A^{(h',\Gamma')},B^{(h',\Gamma')}) 
\end{equation}
In full generality, these morphisms are labelled by a pair of fields (the so-called "ghosts-of-ghosts") $\varphi\in C^\infty(X)\otimes\mathsf{H}, E\in\Omega^1(X)\otimes\h.$ Essentially, the 0-form $\varphi$ induces a translation $h\rightarrow \bar\mu_1(\varphi)h$ along the base while the 1-form $E$ induces a translation $\Gamma\rightarrow\Gamma+ E$ along the faces. We leave the interested reader to find the detailed differential constraints these fields must satisfy in the relevant literature \cite{Soncini:2014ara,Baez:2004in}. 

As we are in the context of a {\it strict} 2-gauge theory, it is safe to not make crucial use of these data. For {\it weak/semistrict} 2-gauge theories, on the other hand, it has been noted that they in fact play a key role \cite{Baez:2004,Soncini:2014ara}, even in the case where the 2-group is finite \cite{Kapustin:2013uxa}. Though this requires one to always be on-shell of the fake-flatness condition $F - \mu_1B=0$ \cite{Kim:2019owc}, which is undesirable in our considerations.    

\subsection{The derived superfield formulation}
We pause here momentarily to comment that the above higher-gauge theory computations can be reformulated using the general operational framework developed in \cite{Zucchini:2019pbv,Zucchini:2019rpp,Zucchini:2021bnn}. This formalism makes use of \emph{derived} structures, and provides a direct link with ordinary gauge theory. We will content ourselves with presenting the key expressions, the details can be found in \cite{Zucchini:2021bnn}.

Given the Lie $2$-group $\mathbb{G}$ we define the derived Lie $2$-group $\mathsf{D}\mathbb{G}$ as the set of maps $\R[1]\rightarrow \mathbb{G}$ labelled by the semidirect product $G \ltimes \h$,
\begin{equation}
    \alpha \mapsto (x,e^{\alpha\cdot \sfy})\,, \qquad \alpha\in \R\,,\qquad x\in G,~\sfy\in\h \,.
\end{equation}
In particular, $\mathsf{D}\mathbb{G}$ is a graded Lie group; it shall make appearance again in \S \ref{dgaffinecurrents}. Similarly, one can also define the corresponding derived Lie 2-algebra $\mathsf{D}\G$ as the space of maps $\R[1]\rightarrow \g\ltimes \h$.
We may now describe our higher connection and higher gauge transformations as elements valued in $\mathsf{D}\mathfrak G$ and $\mathsf{D}\mathbb{G}$ respectively. Explicitly, for the $2$-connection we have a degree-$1$ polyform
\begin{equation}
    \cA (\alpha)=(A,\alpha \cdot B) \in (\Omega^\bullet(X)\otimes \mathsf{D}\mathfrak G)_1 \,,
\end{equation}
whereas for the $2$-gauge transformation, a degree-$0$ polyform
\begin{equation}
    U(\alpha)=(h, e^{\alpha\cdot\Gamma}) \in (\Omega^\bullet(X)\otimes \mathsf{D}\mathbb{G})_0 \,.
\end{equation}
Note $\Gamma\in\Omega^1(X)\otimes\h$, and $e^{\alpha\cdot\Gamma} = 1 + \alpha\cdot\Gamma + \frac{1}{2}[\alpha\cdot\Gamma,\alpha\cdot\Gamma]$ terminates due to the Jacobi identity \cite{Baez:2004in}. The tuples $(A,B),(h,\Gamma)$ that we have used previously can therefore be understood respectively as the labels for these derived polyforms $\mathcal{A},U$. 

With these expressions at hand, we may write a general $2$-gauge transformation \eqref{ec:2gaugetr1} in the compact form
\begin{equation}
\label{ec:superfieldgauge}
    \mathcal{A} \to \mathcal{A}^U =~ _2\operatorname{Ad}_U^{-1}\mathcal{A} + U^{-1}\tilde{\mathrm{d}} U\,
\end{equation}
as a function of the derived parameter $\alpha$, where the expressions for the $2$-adjoint action $_2\operatorname{Ad}$  and the Maurer-Cartan form $U^{-1}\tilde{\mathrm{d}}U$ are given, respectively, in eqs. (3.2.8) and (3.2.14) of \cite{Zucchini:2021bnn}, and $\tilde{\mathrm{d}}= \mathrm{d}-\mu_1\tfrac{\partial}{\partial \alpha}$. Finally, we may induce a pairing $(-,-)$ on $\mathsf{D}\mathfrak G$ from the pairing $\langle -,- \rangle$ on $\frak G$ (see eq. (3.2.15) in \cite{Zucchini:2021bnn}) to write the four-dimensional $2$CS action \eqref{2chernsimons} as
\begin{equation}
    S_{2\mathrm{CS}}[\cA] = \int_X (\cA,\tilde{\mathrm{d}}\cA + \tfrac{1}{3}[\cA,\cA]) \,.
\end{equation}
As such, the derived superfield formalism allows us to rewrite expressions in higher gauge theory as analogues of those in ordinary gauge theory.

\medskip

In addition to the compactness of this derived operational framework, it also introduces an \textit{internal} grading attached to the graded field algebra: in contrast to ordinary gauge theory, derived gauge theory has two gradings \cite{Zucchini:2019pbv, Zucchini:2019rpp}.
This additional grading endows both connections and gauge transformations with \emph{ghostlike} partners, in a way similar to the AKSZ formulation of BV theory \cite{Alexandrov:1995kv,Ikeda:2012pv,Calaque:2021sgp}. The derived formalism therefore gives a useful and unifying description of higher gauge theory, also known as "non-Abelian bundle gerbes with connection" \cite{Baez:2004in,Nikolaus2011FOUREV,Schreiber:2013pra}.

\section{Holomorphic 2-Chern-Simons Theory}
\label{sec:h2csth}

Having introduced 2-Chern-Simons theory together with its equations of motion and its behaviour under gauge transformations, we are in conditions to define its holomorphic variant. Following \cite{Costello:2019tri}, we take our $4$-manifold $X=\mathbb{R} \times Y$ and we complexify and compactify the copy of $\R$ to $\mathbb{C}P^1$. Taking coordinates $z,\bar z$ on $\mathbb{C}P^1$ and $x^i$ with $i=1,2,3$ on $Y$, we define \textbf{holomorphic 2-Chern-Simons theory} (h2CS) by the action functional 
\begin{equation}
\label{ec:h2cs}
    S_\text{h2CS}[A,B] = \frac{1}{2 \pi i} \int_X \omega \wedge \langle F(A) - \tfrac{1}{2}\mu_1(B), B\rangle
\end{equation}
where $\omega \in \Omega^{(1,0)}(X)$ is a meromorphic $(1,0)$-form on $\mathbb{C}P^1$. In this article we will make a particular choice of $\omega$ given by 
\begin{equation}
    \omega = \frac{\mathrm{d}z}{z} \,.
\end{equation}
Note that $\omega$ is nowhere vanishing and has simple poles at $z=0$ and $z=\infty$. For future use, we write down the gauge fields in components as 
\begin{align}
    & A= A_{\bar z} \mathrm{d}\bar z + A_i \mathrm{d}x^i \\
    & B= B_{i\bar z}\,\mathrm{d}x^i \wedge \mathrm{d}\bar z + B_{ij}\,\mathrm{d}x^i \wedge \mathrm{d}x^j \,,
\end{align}
where we have ignored the $z$-components of the gauge fields provided $\omega \in \Omega^{(1,0)}(X)$.  

Without any additional constraints, the action \eqref{ec:h2cs} is not well defined: although the singularities in $\omega$ do not break the local integrability of the action (see \textbf{Lemma 2.1} in \cite{Benini:2020skc}), boundary terms arise under an arbitrary field variation. Indeed, we have
\begin{equation}
\label{ec:fullvar}
    \delta S_{\mathrm{h2CS}} = \frac{1}{2\pi i}\int_X  \omega\wedge \langle \delta \mathcal{F},B \rangle - \omega\wedge \langle \delta A, \mathcal{G} \rangle + \mathrm{d}\omega \wedge \langle \delta A,B \rangle.
\end{equation}
The first two terms are bulk terms which give rise to the equations of motion. The third term, however, is a distribution localized at the poles of $\omega$. Indeed, we can integrate along $\mathbb{C}P^1$, to find (see \textbf{Lemma 2.2} in \cite{Delduc:2019whp})
\begin{equation}
\label{ec:bdaryvar}
    \frac{1}{2\pi i}\int_{\mathbb{C}P^1\times Y} \mathrm{d}\omega \wedge \langle \delta A,B\rangle = \int_Y \iota_Y^*\langle \delta A,B\rangle|_{z=0}-\iota_Y^*\langle \delta A,B\rangle_{z=\infty}\,,
\end{equation}
where $\iota_Y: Y\hookrightarrow X$ is an embedding of the 3-manifold $Y$ into $X=\mathbb{C}P^1 \times Y$.
Thus, to have a well defined action principle, namely $\delta S_{\text{h2CS}}=0$, we must impose boundary conditions on the gauge fields $A$ and $B$ at $z=0,\infty$. Throughout the bulk of the article we will make different choices of boundary conditions depending on the additional structure we introduce on $Y$. However, the main example to which we will consistently come back to simplify discussions is given by the choice
\begin{equation}
\begin{aligned}
    \label{ec:bconditions1}
    & A_1|_{z=0}=A_2|_{z=0}=B_{12}|_{z=0}=0 \,,\\
    & A_3|_{z=\infty}=B_{13}|_{z=\infty}=B_{23}|_{z=\infty}=0 \,,
\end{aligned} 
\end{equation}
and we restrict to variations satisfying the same boundary conditions. It is immediate to verify that this choice of boundary conditions makes \eqref{ec:bdaryvar} vanish.

\begin{remark}
    The quantity \eqref{ec:bdaryvar} defines the {\it symplectic form} $\varpi_\text{blk}$ on the on-shell fields (i.e. Maurer-Cartan elements, c.f. \S \ref{hMC}) $\mathsf{MC}_\text{blk}\subset \iota_Y^* \Omega(X)\otimes \G$ of the bulk $5$d theory. In particular, a choice of boundary conditions such that $\delta S_{\mathrm{h2CS}}=0$ is then equivalent to a choice of a Lagrangian subspace $\mathcal{L}\subset \mathsf{MC}_\text{blk}$ such that
\begin{equation*}
    \varpi_\text{blk}\vert_\mathcal{L} =0.
\end{equation*}
Indeed, fields $(\iota_Y^*A,\iota_Y^*B)\in\mathcal{L}$ by definition will satisfy $\delta S_\text{2CS}[A,B] = 0$. The variational quantities such as $\delta A,\delta B$, are naturally interpreted as covectors on {\it field space}. This is made precise in the covariant phase space formalism \cite{Julia:2002df,Geiller:2022fyg}.
\end{remark}


In the above language, the boundary conditions \eqref{ec:bconditions1} correspond to the choice of Lagrangian subspace
\begin{equation*}
    \mathcal{L} = \operatorname{Span}\{A_3,B_{23},B_{13}\}\vert_{z=0} \cup \operatorname{Span}\{A_1,A_2,B_{12}\}\vert_{z=\infty}\,,
\end{equation*}
which is simply given by the components of the fields that survive at the respective poles $z=0,z=\infty$. We shall see in \S \ref{3dift} that this example is in fact part of a covariant family of such "chiral" Lagrangian subspaces $\mathcal{L}_\ell\subset\mathsf{MC}_\text{blk}$, which are parameterized by a unit vector $\ell\in S^2$ called the \textbf{chirality}. Moreover, in \S \ref{covbcs} we shall see that, if additional structure such as a \emph{transverse holormophic foliation} is imposed on $Y$, the different choices of the chirality vector $\ell$ can lead to very different three dimensional actions. 

\medskip

Going back to our h2CS action, the bulk equations of motion are readily obtained from \eqref{ec:fullvar}. Indeed, they imply the Maurer-Cartan condition --- namely the fake- and 2-flatness, see \S \ref{hMC} --- of the 2-connection $\mathcal{A}=(A,B)$. In components, these are 
\begin{align}
    \label{ec:eom1}
   & \partial_{\Bar{z}}A_i - \partial_{i}A_{\bar z}+[A_{\bar z},A_i]-\mu_1(B_{\Bar{z}i})=0 \\
   \label{ec:eom2}
    & \partial_j A_i - \partial_i A_j +[A_j,A_i]-\mu_1(B_{ij})=0  \\
    \label{ec:eom3}
    & \partial_{i}B_{j\Bar{z}}+ \partial_{\bar z}B_{ij}+\mu_2(A_i,B_{j \Bar{z}})+\mu_2(A_{\bar z},B_{ij}) =0 \\
    \label{ec:eom4}
    & \partial_{i}B_{jk}+\mu_2(A_i,B_{jk}) =0
\end{align}
where $i,j,k\in \{1,2,3\}$ and where we have explicitly separated the $\bar z$ component equations for later convenience. Preempting what will happen after localisation to a three-dimensional boundary theory, equations \eqref{ec:eom1} and \eqref{ec:eom3} will imply that $A$ and $B$ are holomorphic on $\mathbb{C}P^1$ and therefore constant along $\mathbb{C}P^1$. On the other hand, equations \eqref{ec:eom2} and \eqref{ec:eom4} will result on fake flatness and 2-flatness of the three-dimensional localised fields.

\subsection{Gauge Symmetries of \texorpdfstring{$\mathrm{h2CS}$}{V}}
\label{sec:gaugesym5d}

As discussed in \S \ref{sec:gsof2CS}, 2-Chern-Simons theory is invariant (modulo boundary terms) under an arbitrary 2-gauge transformation. In the holomorphic case, the introduction of the meromorphic $1$-form $\omega$ breaks this gauge symmetry, in the sense that only a subset of gauge transformations leave \eqref{ec:h2cs} invariant. In particular, it is precisely the breaking of this gauge symmetry which gives rise to the edge-modes on the boundary, which will be the fields of our three-dimensional theory. Moreover, the transformations which remain symmetries of the 5d action will descend to symmetries of the 3d boundary theory.

This is entirely analogous to what happens in the case of 4d Chern-Simons theory, as well as in holomorphic Chern-Simons theory on Twistor space. Notably, in these cases, the gauge transformations which remain symmetries of the action after the introduction of $\omega$ are those which preserve the boundary conditions imposed on the gauge fields. We will show that this is not true any more in h2CS theory. 

Indeed, in proposition \ref{prop:gvar} we have shown that under the 2-gauge transformation \eqref{ec:2gaugetr1} the variation is a total boundary term 
\begin{equation}
\label{ec:omega}
    \Omega= \langle \mathrm{Ad}_h^{-1}F(A) ,\Gamma\rangle + \tfrac{1}{2} \langle\mathrm{Ad}_h^{-1}A+h^{-1}\mathrm{d}h,[\Gamma,\Gamma]\rangle + \tfrac{1}{2}L_{\mathrm{CS}}(\Gamma)\,.
\end{equation}
With the introduction of the meromorphic $1$-form we have that under a finite gauge transformation
\begin{equation}
\begin{split}
        S_{\text{h2CS}}\big[A^{(h,\Gamma)},B^{(h,\Gamma)}\big]
        &=S_{\text{h2CS}}[A,B]+\frac{1}{2\pi i}\int_X \omega \wedge \mathrm{d}\Omega[A,B,h,\Gamma] \\
        &= S_{\text{h2CS}}[A,B] + \int_{Y}\iota_Y^*\Omega[A,B,h,\Gamma]|_{z=0}-\iota_Y^*\Omega[A,B,h,\Gamma]|_{z=\infty}\,,
\end{split}
\end{equation}
where in the second line we have integrated by parts to get the distribution $\mathrm{d}\omega$ and then we have integrated along $\mathbb{C}P^1$ to localize at the poles. Let us consider the boundary conditions \eqref{ec:bconditions1} and restrict to gauge transformations which preserve these boundary conditions. In other words, we constraint our gauge parameters $(h,\Gamma)$ by requiring that $A^{(h,\Gamma)}$ and $B^{(h,\Gamma)}$ satisfy the same boundary conditions than $A$ and $B$. From \eqref{ec:2gaugetr1} we find the set of constraints
\begin{align}
   z=0: \quad & h^{-1}\partial_1h + \mu_1(\Gamma_1) = h^{-1}\partial_2 h + \mu_1(\Gamma_2) = \partial_1 \Gamma_2 - \partial_2 \Gamma_1 -[\Gamma_1,\Gamma_2] =0  \label{ec:gtbczero} \\    z=\infty:\quad & h^{-1}\partial_3h  + \mu_1(\Gamma_3)=\partial_i \Gamma_3 - \partial_3 \Gamma_i + \mu_2(A^{(h,\Gamma)}_i,\Gamma_3) - [\Gamma_i,\Gamma_3] =0 \,,\quad i=1,2\,. \label{ec:gtbcinfty}
\end{align}
Writing down $\iota_Y^*\Omega$ explicitly, using the boundary conditions \eqref{ec:bconditions1} and the constraints \eqref{ec:gtbczero} and \eqref{ec:gtbcinfty} we find
\begin{align}
     \int_{Y}\iota_Y^*\Omega[A,B,h,\Gamma]|_{z=0}&-\iota_Y^*\Omega[A,B,h,\Gamma]|_{z=\infty} =\int_Y \mathrm{vol}_3 \big[ \langle \mu_1 (\partial_3 \Gamma_1),\Gamma_2 \rangle |_{z=0} \nonumber\\
     &\qquad -\big(\langle A_2,h\rhd[\Gamma_1,\Gamma_3]\rangle-\langle [A_1,A_2],h\rhd \Gamma_3\rangle -\langle \mu(\Gamma_1),\partial_3 \Gamma_2\rangle\big)|_{z=\infty}\big]\,.\label{ec:deltaSgamma1}
\end{align}
In other words, we see that preserving the boundary conditions is not enough to attain the symmetry. This is in fact a surprising result, because of equation \eqref{ec:bdaryvar}. Indeed there we have imposed the same boundary conditions on the gauge fields and its variations and this led to the vanishing of $\delta S$ on-shell. In particular, if we take the arbitrary variations $\delta A,\delta B$ to be infinitesimal gauge transformations then this would imply that \eqref{ec:deltaSgamma1} should vanish on-shell, which it does if we neglect terms quadratic in $\Gamma$.\footnote{The attentive reader might worry about the term $\langle [A_1,A_2],h\rhd \Gamma_3\rangle$ not being quadratic in $\Gamma$. However, the claim is that $\delta S =0$ \emph{on-shell}. In particular, using the equations of motion and the boundary conditions, one can show that this term indeed vanishes for infinitesimal gauge transformations.} The failure of the vanishing of the boundary variation in going from infinitesimal to finite gauge transformations can be attributed to the presence of the 3d Chern-Simons $L_{\mathrm{CS}}(\Gamma)$ term appearing in \eqref{ec:omega}, which we know is sensitive to the global structure of the $3$-manifold $Y$.

This implies that if we want our transformations to be symmetries of h2CS theory, we must impose further constraints on the gauge parameters $(h,\Gamma)$. In order to do this in a consistent manner, we introduce some structure. Recall that in \S \ref{sec:h2csth} we considered Lagrangian subspaces to define boundary conditions. In particular, the condition that the gauge transformations should preserve the boundary conditions can be expressed as follows. Given a choice of Lagrangian subspace $\mathcal{L} \subset\mathsf{MC}_\text{blk}$ of the on-shell Maurer-Cartan fields $(A,B)$, we require 
\begin{equation}\label{def2alg}
    \big(\iota_Y^*A^{(h,\Gamma)},\iota_Y^*B^{(h,\Gamma)}\big)\in \mathcal{L}\,,\qquad \forall~ (\iota_Y^*A,\iota_Y^*B)\in\mathcal{L}\,.
\end{equation}
This imposes differential and algebraic constraints on the derived 2-gauge parameters $(h,\Gamma)\in \mathsf{D}\L_0 =(\Omega^\bullet(X)\otimes \mathsf{D}\mathbb{G})_0$, and thus it defines a {\it defect subcomplex} $\mathsf{D}\L_\text{def}\subset \mathsf{D}\L_0$. This is in fact a derived Lie 2-subgroup of $\mathsf{D}\L_0$, since the composition law in the derived 2-group $\mathsf{D}\L_0$ satisfies\footnote{This follows from the closure of the 2-gauge algebra \cite{Martins:2010ry,Mikovic:2016xmo,Baez:2004in,chen:2022}. For strict Lie 2-algebras $\G=\operatorname{Lie}\G$, this is true in general. However, if $\G$ were weak, then it is known that the 2-gauge algebra closes only on-shell of the fake-flatness condition \cite{Kim:2019owc}.}
\begin{equation*}
    \big(\iota_Y^*A^{(h,\Gamma)\cdot (h',\Gamma')},\iota_Y^*B^{(h,\Gamma)\cdot (h',\Gamma')}\big) = \bigg(\iota_Y^*\big(A^{(h,\Gamma)}\big)^{ (h',\Gamma')},\iota_Y^*\big(B^{(h,\Gamma)}\big)^{(h',\Gamma')}\bigg).
\end{equation*}
Therefore, if $(h,\Gamma),(h',\Gamma')\in\mathsf{D}\L_\text{def}$ lie in the defect subcomplex, then \eqref{def2alg} state that so does $(h,\Gamma)\cdot (h',\Gamma')$. In the example considered above, we have that $\mathsf{D}\L_{\mathrm{def}}$ is given by the set of 2-gauge transformations which satisfy the boundary constraints \eqref{ec:gtbczero} and \eqref{ec:gtbcinfty}. 

As these constraints are localized at the poles, are independent, and only involve $Y$-dependencies, they are constraints on 2-gauge transformations of the \textbf{archipelago type} \cite{Delduc:2019whp}. Formally, they are finite 2-gauge transformations that are non-trivial only at a neighborhood around each puncture $z=0,z=\infty$. The precise definition is the following.
\begin{definition}
    We say a 2-gauge parameter $(h,\Gamma)\in\mathsf{D}\L_0$ is of \textbf{archipelago type} if and only if there exists open discs $D_z\subset\mathbb{C}P^1$ of radius $R_z$ containing each puncture $z=0,\infty$ such that:
    \begin{enumerate}
        \item $(h,\Gamma)=(1,0)$ is trivial away from $D_0\coprod D_\infty$,
        \item the restriction $(h,\Gamma)\vert_{Y\times D_z}$ depends on $Y$ and the radial coordinate $0<r<R_z$ in $D_z$, and
        \item there exists an open disc $V_z\subset D_z$ such that $(h,\Gamma)\vert_{Y\times V_z}$ depends only on $Y$.
    \end{enumerate}
\end{definition}
It suffices to consider 2-gauge transformations of archipelago type in $\mathsf{D}\L_\text{def}$; the restriction map $-\vert_{Y\times V_0} \times -\vert_{Y\times V_\infty}$ to the two open discs gives a surjection 
\begin{equation*}
    \mathsf{D}\L_\text{def} \rightarrow \mathsf{D}\L_\text{def}^0\times \mathsf{D}\L_\text{def}^\infty \subset (\Omega^\bullet(Y)\otimes\mathsf{D}\mathbb{G})_0^{2\times}
\end{equation*}
onto those $(h,\Gamma)\in\mathsf{D}\L_\text{def}^0$, depending only on $Y$, satisfying \eqref{ec:gtbczero} and those $(h',\Gamma')\in\mathsf{D}\L_\text{def}^\infty$ satisfying \eqref{ec:gtbcinfty}. Conversely, the elements in $\mathsf{D}\L_\text{def}^0\times \mathsf{D}\L_\text{def}^\infty$ can always be patched along $\mathbb{C}P^1$ into a 2-gauge transformation of archipelago type, by using bump functions along the radial coordinate on $D_z\supset V_z$.

Let us use this structure to impose the additional constraints on the gauge parameters so they become symmetries of h2CS theory. We therefore introduce a subcomplex $\mathsf{D}\L_\text{sym}\subset\mathsf{D}\L_\text{def}$ consisting of those derived gauge parameters which are symmetries of the action. In other words, we define $\mathsf{D}\L_\text{sym}$ such that
\begin{equation}
    (h,\Gamma)\in\mathsf{D}\L_\text{sym} \quad \text{if and only if}\quad  \Omega[A,B,h,\Gamma]\vert_{z=0} - \Omega[A,B,h,\Gamma]\vert_{z=\infty}=0 \,.
\end{equation}
The additional constraints we impose on $(h,\Gamma)$ will be dictated by the requirement that $\mathsf{D}\L_\text{sym}\subset \mathsf{D}\L_\text{def}$ forms a derived Lie 2-subgroup. More precisely, we have
\begin{proposition}
The derived Lie 2-group of defect symmetries $\mathsf{D}\L_\text{sym}$ associated to h2CS and the Lagrangian subspace defined by \eqref{ec:bconditions1} is  
\begin{equation}
    \mathsf{D}\L_\text{sym} \cong \mathsf{D}\L_\text{sym}^0\times \mathsf{D}\L_\text{sym}^\infty\,,\qquad \begin{cases} \mathsf{D}\L_\text{sym}^0 = \{(h,\Gamma)\in\mathsf{D}\L_\text{def}^0\mid \Gamma_1,\Gamma_2 \in \Omega^1(Y)\otimes \t\} \\ \mathsf{D}\L_\text{sym}^\infty = \{(h',\Gamma')\in \mathsf{D}\L_\text{def}^\infty\mid  \Gamma_3'=0\}\end{cases}\label{defectgaugeconstraints}
\end{equation}
where $\t\subset\h$ is the maximal Abelian subalgebra of $\h$.  
\end{proposition}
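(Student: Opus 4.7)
The plan is to exploit the archipelago decomposition $\mathsf{D}\L_\text{def}^0\times\mathsf{D}\L_\text{def}^\infty$ already established, so that the symmetry subgroup can be analyzed pole by pole. Since the boundary variation \eqref{ec:deltaSgamma1} is a sum of a contribution at $z=0$ and a contribution at $z=\infty$ with disjoint support, and since the elements of $\mathsf{D}\L_\text{def}^0$ and $\mathsf{D}\L_\text{def}^\infty$ can be chosen independently, the vanishing of the total boundary variation reduces to the vanishing of each piece separately. Thus I will identify $\mathsf{D}\L_\text{sym}^0$ and $\mathsf{D}\L_\text{sym}^\infty$ by cutting out the zero locus of the respective integrands in \eqref{ec:deltaSgamma1} within the affine scheme cut out by \eqref{ec:gtbczero} and \eqref{ec:gtbcinfty}.

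At $z=\infty$, the integrand contains three terms, two of which carry an explicit factor of $\Gamma_3'$ and the third being $\langle \mu_1(\Gamma_1'),\partial_3\Gamma_2'\rangle$. To cancel the first two uniformly in the bulk fields $A,B$, the natural and in fact minimal covariant choice is to set $\Gamma_3'=0$. Once $\Gamma_3'=0$ is imposed, the constraints \eqref{ec:gtbcinfty} degenerate to $h'^{-1}\partial_3 h'=0$ and $\partial_3\Gamma_i'=0$ for $i=1,2$, so that $h'$ and $\Gamma_1',\Gamma_2'$ become $x^3$-independent. This kills the remaining term $\langle \mu_1(\Gamma_1'),\partial_3\Gamma_2'\rangle$ automatically, giving the claimed description of $\mathsf{D}\L_\text{sym}^\infty$.

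At $z=0$, the only surviving term is $\langle \mu_1(\partial_3\Gamma_1),\Gamma_2\rangle$. Requiring it to vanish for all $x^3$-dependences of $\Gamma_1$ forces $\mu_1(\t)$ to be isotropic with respect to the pairing against $\t$. Using the symmetry $\langle\mu_1(\sfy),\sfy'\rangle=\langle\mu_1(\sfy'),\sfy\rangle$ from \eqref{inv}, this is equivalent to imposing that the induced bilinear form $(\sfy,\sfy')\mapsto \langle \mu_1(\sfy),\sfy'\rangle$ on $\h\odot\h$ vanishes on $\t\odot\t$. The closure requirement under the derived 2-group product (see below) then forces $\t$ to be simultaneously closed under $[-,-]_\h$; combining these constraints, $\t$ is exactly the maximal Abelian subalgebra of $\h$, since by the Peiffer identity \eqref{pfeif1} abelianness is equivalent to $\mu_2(\mu_1-,-)=0$, and the non-degeneracy of $\langle-,-\rangle$ then ties this to the orthogonality condition above.

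The hard part, and in fact the step where the statement earns the qualifier "Lie 2-subgroup", will be verifying closure of $\mathsf{D}\L_\text{sym}^0\times\mathsf{D}\L_\text{sym}^\infty$ under the product of the derived 2-group $\mathsf{D}\mathbb{G}$. Concretely, I will compute $(h,\Gamma)\cdot(h',\Gamma')$ using the explicit formulas of the derived 2-group composition (the twisted semidirect structure in \S on the derived superfield formulation) and verify that (i) at $z=\infty$ the condition $\Gamma_3''=0$ is preserved because the composite $\Gamma_3''$ is a sum of $\Gamma_3$-type and $h\rhd\Gamma_3'$-type terms, both vanishing; (ii) at $z=0$, the $i=1,2$ components of the composite $\Gamma''$ lie in $\t$ because $\t$ is abelian and closed under $\rhd$ restricted to the relevant subgroup. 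The main technical subtlety is verifying that $G$-conjugation by the $h$-parts preserves $\t$; this uses the fact that $h$ is constrained by the first equations of \eqref{ec:gtbczero}, \eqref{ec:gtbcinfty}, so its effective action on $\t$ is governed by elements of $\operatorname{Im}\mu_1$, under which a maximal Abelian subalgebra is automatically stabilized via Peiffer. Once closure is established, surjectivity of the restriction map already produced in the excerpt upgrades the pole-wise characterization to the global product form asserted in \eqref{defectgaugeconstraints}.
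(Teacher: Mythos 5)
Your treatment of the pole at $z=\infty$ is correct and matches the paper: $\Gamma_3'=0$ kills the two terms carrying an explicit $\Gamma_3'$, and the degenerated constraint \eqref{ec:gtbcinfty} then gives $\partial_3\Gamma_i'=0$, which removes the remaining term. The problem is your analysis at $z=0$. You demand that the integrand $\langle\mu_1(\partial_3\Gamma_1),\Gamma_2\rangle$ vanish pointwise, and conclude that $\t$ must be isotropic for the symmetric form $(\sfy,\sfy')\mapsto\langle\mu_1(\sfy),\sfy'\rangle$. That is neither necessary nor the condition asserted in the statement. The paper instead integrates by parts, using the symmetry $\langle\mu_1(\sfy),\sfy'\rangle=\langle\mu_1(\sfy'),\sfy\rangle$ to write $2\int\langle\mu_1(\partial_3\Gamma_1),\Gamma_2\rangle=\int\big(\langle\mu_1(\partial_3\Gamma_1),\Gamma_2\rangle-\langle\mu_1(\partial_3\Gamma_2),\Gamma_1\rangle\big)$, and then substitutes the constraint $\mu_1(\Gamma_i)=-h^{-1}\partial_i h$ from \eqref{ec:gtbczero} to obtain
\begin{equation*}
2\int_Y\langle\mu_1(\partial_3\Gamma_1),\Gamma_2\rangle=\int_Y\langle h^{-1}\partial_3 h,[\Gamma_1,\Gamma_2]\rangle\,,
\end{equation*}
which vanishes precisely when $[\Gamma_1,\Gamma_2]=0$, i.e.\ when $\Gamma_1,\Gamma_2$ are valued in an Abelian subalgebra. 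This integration by parts is the step that produces the commutator and hence identifies $\t$ as the maximal \emph{Abelian} subalgebra; without it you are cutting out a different locus (the radical of $\langle\mu_1(-),-\rangle$ restricted to $\t$), which generically does not coincide with any Abelian subalgebra.

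Your attempt to repair this by invoking closure under the 2-group product is also not sound. Non-degeneracy of the pairing relates $[\sfy,\sfy']_\h=\mu_2(\mu_1(\sfy),\sfy')=0$ to $\langle[\sfx,\mu_1(\sfy)],\sfy'\rangle=0$ for all $\sfx\in\g$, not to $\langle\mu_1(\sfy),\sfy'\rangle=0$, so the claimed equivalence between your isotropy condition and abelianness does not follow. The closure of $\mathsf{D}\L_\text{sym}$ under composition is, in the paper, a short observation (the extra constraints concern only the $\Gamma$-parameters and are manifestly stable under the composition law); it is not the mechanism that forces abelianness, and using it as such makes the argument circular.
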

\begin{proof}
     Since these constraints concern only the 1-form gauge parameters $\Gamma$, it is easy to see that $\mathsf{D}\L_\text{sym}\subset\mathsf{D}\L_\text{def}$ is a derived 2-subgroup and the constraints are closed under 2-group composition law. To see that \eqref{defectgaugeconstraints} indeed eliminates \eqref{ec:deltaSgamma1}, we note that $\Gamma'_3=0$ eliminates the first two terms evaluated at $z=\infty$. Moreover, for the last term at $z=\infty$, we first notice that if $\Gamma_3'=0$ then \eqref{ec:gtbcinfty} implies
\begin{equation*}
    \partial_{i} \Gamma'_3 - \partial_3 \Gamma'_{i}=- \partial_3 \Gamma'_{i}=0\,,\quad i=1,2\,,
\end{equation*}
hence the last term automatically drops. Finally, for the term evaluated at $z=0$ we have
\begin{equation}
   2\langle \mu_1(\partial_3\Gamma_1),\Gamma_2\rangle=\langle \mu_1(\partial_3\Gamma_1),\Gamma_2\rangle-\langle \mu_1(\partial_3\Gamma_2),\Gamma_1\rangle= \langle h^{-1}\partial_3h, [\Gamma_1,\Gamma_2]\rangle \label{defectcomputation}
\end{equation}
where we have repeatedly integrated by parts and used the boundary constraints \eqref{ec:gtbczero}. In particular, this term drops iff $\Gamma_1,\Gamma_2$ are Abelian.
\end{proof}


We will prove in \S \ref{residuals} that $\mathsf{D}\L_\text{sym}$ indeed descends to global symmetries of the 3d field theory upon localization. In fact, we will also demonstrate in \S \ref{dgaffinecurrents} that these global symmetries exhibit conserved Noether charges that satisfy a differential graded analogue of the affine Virasoro Lie algebra that lives in the 2d Wess-Zumino-Witten model \cite{KNIZHNIK198483}.

\subsection{Field Reparametrization}
\label{sec:fieldreparam}

We now begin the localization procedure for 5d h2CS theory \eqref{ec:h2cs}, to obtain a 3-dimensional boundary theory. This will proceed, in spirit, entirely analogous to that in the context of CS$_4$ theory \cite{Costello:2019tri} and hCS$_6$ theory on twistor space \cite{Bittleston:2020hfv}. However, this has not been done explicitly before in the context of 2-Chern-Simons theory, and we will therefore describe it in detail.

Towards this, we first introduce new field variables $A' \in \Omega^1(X)\otimes \mathfrak{g}$, $\hat g \in C^{\infty}(X,G)$, $B'\in \Omega^{2}(X)\otimes \mathfrak{h}$ and $\hat \Theta \in \Omega^1(X)\otimes \mathfrak{h}$ that reparameterize our original fields $(A,B)$ in \eqref{ec:h2cs}, with
\begin{align}
\label{ec:reparA}
    &A= \hat g^{-1}A'\hat g  +\hat g^{-1}\mathrm{d} \hat g + \mu_1(\hat \Theta)\,,\\
\label{ec:reparB}
    &B = \hat g^{-1}\rhd B' + \mathrm{d}\hat \Theta + \mu_2(A,\hat \Theta) -\tfrac{1}{2}[\hat \Theta, \hat \Theta] \,.
\end{align}
This expression implies that, if $(\hat g,\hat \Theta) \in \mathsf{D}\L_{\mathrm{sym}}$ then $S_{\text{2CS}}[A,B]=S_{\text{2CS}}[A',B']$, in which case such reparametrizations are just a gauge redundancy. Crucially, however, we will intentionally take $(\hat g,\hat \Theta) \notin \mathsf{D}\L_{\mathrm{sym}}$. 

On the other hand, the reparametrisations \eqref{ec:reparA} and \eqref{ec:reparB} are in fact, redundant. More precisely, we can perform transformations on $A',B',\hat g$ and $\hat \Theta$ which leave $A$ and $B$ invariant. These are characterized by the following proposition
\begin{proposition}
\label{prop:internalsym}
    Given $u \in C^{\infty}(X,G)$ and $\Lambda \in \Omega^1(X) \otimes \mathfrak{h}$, the transformations
    \begin{align}
    \label{ec:actiononu}
    & \hat g\mapsto u^{-1}\hat g,\hspace{3.05cm} A'\mapsto u^{-1}A'u + u^{-1}\mathrm{d}u+ \mu_1(\Lambda)\\
    \label{ec:actionontheta}
    &\hat \Theta\mapsto \hat \Theta -(\hat g^{-1}u)\rhd \Lambda \qquad \qquad B'\mapsto u^{-1}\rhd B' +d\Lambda + \mu_2(A'^{(u,\Lambda)},\Lambda)-\tfrac{1}{2}[\Lambda,\Lambda].
\end{align}
leave the gauge fields $A$ and $B$ invariant. 
\end{proposition}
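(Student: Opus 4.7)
My plan is to recognize that the reparametrizations \eqref{ec:reparA}--\eqref{ec:reparB} have \emph{exactly} the same shape as the finite 2-gauge transformation formula \eqref{ec:2gaugetr1} with gauge parameter $(\hat g,\hat\Theta)$ acting on $(A',B')$. In other words, the content of the proposition is really that $(A,B)=(A',B')^{(\hat g,\hat\Theta)}$, and the proposed action \eqref{ec:actiononu}--\eqref{ec:actionontheta} is nothing but the 2-group composition law in the derived 2-group $\mathsf{D}\mathbb{G}$.

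First I would rewrite \eqref{ec:actiononu} as the statement that $(A',B')\mapsto (A',B')^{(u,\Lambda)}$ is itself a finite 2-gauge transformation by $(u,\Lambda)$. Then the question reduces to: given $(u,\Lambda)$ and the desired new parametrization data $(\hat g',\hat\Theta')$, when does
\begin{equation*}
\bigl((A',B')^{(u,\Lambda)}\bigr)^{(\hat g',\hat\Theta')}=(A',B')^{(\hat g,\hat\Theta)}\,?
\end{equation*}
By the closure of 2-gauge transformations discussed after \eqref{def2alg}, this is equivalent to the group-theoretic identity $(u,\Lambda)\cdot(\hat g',\hat\Theta')=(\hat g,\hat\Theta)$ in $\mathsf{D}\mathbb{G}$. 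The candidate $\hat g'=u^{-1}\hat g$, $\hat\Theta'=\hat\Theta-(\hat g^{-1}u)\rhd\Lambda$ of the proposition is then precisely the formula one obtains by inverting the composition rule.

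Concretely, I would verify this by direct substitution. For the degree-0 field $A$, plugging the transformed data into \eqref{ec:reparA} produces three types of terms: the conjugation $\hat g^{-1}A'\hat g$ reappears cleanly, the Maurer-Cartan piece $\hat g^{-1}\mathrm{d}\hat g$ is reproduced via the standard cancellation $\hat g^{-1}u\,\mathrm{d}(u^{-1})\hat g+\hat g^{-1}\mathrm{d}(u^{-1}\hat g)^{-1}=-\hat g^{-1}\mathrm{d}u\,u^{-1}\hat g$-type manipulation, and the $\mu_1(\Lambda)$ piece is transported across $\hat g$ using $\operatorname{Ad}_{\hat g^{-1}u}\mu_1(\Lambda)=\mu_1\bigl((\hat g^{-1}u)\rhd\Lambda\bigr)$, which is the integrated form of the $\g$-equivariance of $\mu_1$ in \eqref{pfeif1} (equivalently the first relation in \eqref{pfeif2}). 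Collecting, the $\mu_1$-terms combine as $\mu_1(\hat\Theta'+(\hat g^{-1}u)\rhd\Lambda)=\mu_1(\hat\Theta)$, which is exactly the choice of $\hat\Theta'$ in \eqref{ec:actionontheta}.

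For the degree-1 field $B$, the same strategy applies but with the crossed-module action $\rhd$ on $\h$ replacing adjoint conjugation; the terms $(\hat g^{-1}u)\rhd\mathrm{d}\Lambda$ and $\mathrm{d}\hat\Theta'$ recombine into $\mathrm{d}\hat\Theta$, while the quadratic pieces in $\Lambda$ and $\mu_2$-cross-terms reassemble into $\mu_2(A,\hat\Theta)-\tfrac{1}{2}[\hat\Theta,\hat\Theta]$ using the second Peiffer identity in \eqref{pfeif1} and $\G$-equivariance of $\mu_2$, together with the already-established invariance of $A$ (which allows us to replace $A'^{(u,\Lambda)}$ inside $\mu_2$ by the unchanged $A$ after conjugation). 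The only place where care is required is this last step: the $\mu_2$-cross-term $\mu_2(A'^{(u,\Lambda)},\Lambda)$ produces exactly the mixed term needed to cancel against $-\mu_2(A,(\hat g^{-1}u)\rhd\Lambda)$ coming from $\hat\Theta'$, and verifying this cancellation in detail using the Peiffer and graded Jacobi identities of Definition \ref{lie2alg} is the main obstacle, being essentially the same bookkeeping that underlies the closure of the 2-gauge algebra. Once this is checked, $B=(A',B')^{(\hat g,\hat\Theta)}_2$ is recovered and the proposition follows.
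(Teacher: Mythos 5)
Your proposal is correct, and it adds a structural layer that the paper's own proof does not have. The paper proves Proposition \ref{prop:internalsym} by bare substitution: it transforms each of the three terms in \eqref{ec:reparA} and each of the four terms in \eqref{ec:reparB} separately, and checks that everything cancels, with the final cancellation for $B$ coming from recognizing the combination $\hat g^{-1}A'\hat g+\hat g^{-1}\mathrm{d}\hat g+\mu_1(\hat\Theta)$ as $A$ itself. Your opening reduction is genuinely different and, to my mind, cleaner: since $(A,B)=(A',B')^{(\hat g,\hat\Theta)}$ is literally a finite $2$-gauge transformation of the form \eqref{ec:2gaugetr1}, and since composition of such transformations obeys $(h,\Gamma)\cdot(h',\Gamma')=(hh',\,h'^{-1}\rhd\Gamma+\Gamma')$ on the parameters, invariance of $(A,B)$ is equivalent to $(u,\Lambda)\cdot(\hat g',\hat\Theta')=(\hat g,\hat\Theta)$; one checks directly that $\hat g'=u^{-1}\hat g$ and $\hat\Theta'=\hat\Theta-(\hat g^{-1}u)\rhd\Lambda$ invert this law, since $(\hat g')^{-1}\rhd\Lambda=(\hat g^{-1}u)\rhd\Lambda$. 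This buys a one-line conceptual proof modulo the closure of finite $2$-gauge transformations under composition (which the paper asserts with references in the footnote following \eqref{def2alg}, and which for the $B$-component is exactly the Peiffer/Jacobi bookkeeping you flag as the main obstacle). Your fallback "concrete verification by direct substitution" is then the paper's proof verbatim: the same three-way split for $A$, the same use of the integrated equivariance $\operatorname{Ad}_{\hat g^{-1}u}\mu_1(\Lambda)=\mu_1((\hat g^{-1}u)\rhd\Lambda)$ from \eqref{pfeif2}, and the same reassembly of the $\mu_2$-cross-terms against $-\mu_2(A,(\hat g^{-1}u)\rhd\Lambda)$ using the expression for $A$. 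So the two routes coincide at the level of detail; the added value of yours is that it identifies the proposition as an instance of inverting the $2$-group composition law rather than an ad hoc cancellation.
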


\begin{proof}
    The proof follows by explicit computation. Indeed, from equation \eqref{ec:reparA} we find
    \begin{align*}
    \hat g^{-1}A' \hat g &\mapsto  \hat g^{-1}A'\hat g + \mathrm{Ad}_{\hat g}^{-1} \mathrm{d}uu^{-1} + \mathrm{Ad}_{\hat{g}^{-1}u} \mu_1(\Lambda)\\
     \hat g^{-1}\mathrm{d}\hat g &\mapsto  -\mathrm{Ad}_{\hat g}^{-1} \mathrm{d}uu^{-1} + \hat g^{-1}\mathrm{d}\hat g \\
     \mu_1(\Theta) &\mapsto \mu_1(\Theta) - \mathrm{Ad}_{\hat{g}^{-1}u} \mu_1(\Lambda)\,,
\end{align*}
from where invariance of $A$ follows. Similarly, from equation \eqref{ec:reparB} we find
\begin{align}
\label{ec:Binv1}
    \hat g^{-1}\rhd B' &\mapsto \hat{g}^{-1}\rhd \big[ B' + u\rhd (\mathrm{d}\Lambda + \tfrac{1}{2}[\Lambda,\Lambda]) +\rhd\mu_2(A',u\rhd \Lambda)\big] + \mu_2(\mathrm{d}uu^{-1},u\rhd \Lambda) \\
    \label{ec:Binv2}
     \mathrm{d}\hat \Theta &\mapsto  \mathrm{d}\hat \Theta + \mu_2(\hat g^{-1}\mathrm{d}\hat g,\hat g u^{-1}\rhd \Lambda)- \mu_2(\mathrm{d}uu^{-1},u\rhd \Lambda)-\hat g^{-1}u\rhd \mathrm{d}\Lambda\\
     \label{ec:Binv3}
    \mu_2(A,\hat \Theta) &\mapsto \mu_2(A,\hat \Theta)-\mu_2(A,\hat g^{-1}u\rhd \Lambda)\\
    \label{ec:Binv4}
     -\tfrac{1}{2}[\hat\Theta,\hat\Theta] &\mapsto -\tfrac{1}{2}[\hat\Theta,\hat\Theta] + \mu_2(\mu_1(\hat \Theta),\hat g^{-1}u\rhd \Lambda)-\hat g^{-1}u \rhd \tfrac{1}{2}[\Lambda,\Lambda]\,.
\end{align}
Adding all of these together we are left after some cancellations with 
\begin{equation}
    B \mapsto B - \mu_2(A,\hat g^{-1}u\rhd \Lambda) + \mu_2(\hat g^{-1}A'\hat g  +\hat g^{-1}\mathrm{d} \hat g + \mu_1(\hat \Theta),\hat g^{-1}u\rhd \Lambda) =B \,,
\end{equation}
where the last equality follows from the definition of $A$ in \eqref{ec:reparA}. 
\end{proof}

\begin{definition}
    We call the transformations $(u,\Lambda)$ of proposition \ref{prop:internalsym} {\bf internal}  2-gauge transformations, and the invariance of $A$ and $B$ under these transformations the {\bf reparametrization symmetry}. 
\end{definition}

The idea is to use this reparametrization symmetry to fix the $\bar z$ components of $A'$ and $B'$ to zero. This will allow us to interpret the (gauge fixed) gauge fields as a flat 2-connection on $Y$ after localisation. Thus, we take an internal 2-gauge transformation $(u_1,\Lambda_1)$ such that 
\begin{equation}
    L' = A'^{(u_1,\Lambda_1)}\,,\quad H'=B'^{(u_1,\Lambda_1)}
\end{equation}
satisfy $L'_{\bar z}=0$ and $H'_{i\bar z}=0$. This amounts to a set of constraints on $u_1$ and $\Lambda_1$ given by 
\begin{align}
    & L'_{\bar z}= u_1^{-1}A'_{\bar z}u_1 + u_1^{-1}\partial_{\bar z}u_1 + \mu_1(\Lambda_{1,\bar z})=0 \label{1gauconst1} \\
    &H'_{i\bar z}= u_1^{-1}\rhd B'_{i\bar z} + \partial_{i}\Lambda_{1,\bar z}-\partial_{\bar z}\Lambda_{1,i} + \mu_2(L'_i,\Lambda_{1,\bar z})-\mu_2(L'_{\bar z},\Lambda_{1,i})- [\Lambda_{1,i},\Lambda_{1,\bar z}]=0.\label{2gauconst1}
\end{align}

\begin{lemma}
There exist an internal 2-gauge transformation satisfying \eqref{1gauconst1} and \eqref{2gauconst1}.  
\end{lemma}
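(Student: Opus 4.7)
The plan is to solve the two constraints \eqref{1gauconst1} and \eqref{2gauconst1} sequentially, by partially fixing the internal gauge freedom. First, I set $\Lambda_{1,\bar z}=0$; this is a consistent partial choice because the two equations are essentially decoupled in the remaining unknowns: \eqref{1gauconst1} then constrains only $u_1$, while \eqref{2gauconst1} constrains the spatial components $\Lambda_{1,i}$ once $u_1$ is known. With this choice, \eqref{1gauconst1} reduces to the linear first-order PDE
\begin{equation*}
    \partial_{\bar z} u_1 = -A'_{\bar z}\, u_1,
\end{equation*}
a $\bar\partial$-type equation for the $G$-valued function $u_1$. I would solve it by a path-ordered exponential along the $\bar z$ direction, exactly as in the analogous localization step for 4d Chern-Simons theory \cite{Costello:2019tri}.

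With $u_1$ so determined and $L'_{\bar z}=0$ now achieved, the remaining constraint \eqref{2gauconst1} collapses after substituting $\Lambda_{1,\bar z}=0$ and $L'_{\bar z}=0$ to
\begin{equation*}
    \partial_{\bar z} \Lambda_{1,i} = u_1^{-1}\rhd B'_{i\bar z},
\end{equation*}
which is a $\bar\partial$-equation, now valued in $\h$. Integrating in $\bar z$ against the right-hand side produces the required $\Lambda_{1,i}$, completing the construction of the pair $(u_1,\Lambda_1)$.

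The main obstacle I anticipate is global well-posedness on the compact Riemann sphere: the above $\bar\partial$-equations on $\mathbb{C}P^1$ can in principle carry Dolbeault cohomological obstructions, and one must also ensure compatibility with the archipelago structure of $\mathsf{D}\L_\text{sym}$ imposed in \S \ref{sec:gaugesym5d}. My intended remedy is to solve the two equations only on trivializing neighbourhoods $D_z$ of each pole $z=0,\infty$ of $\omega$ and then patch them together using bump functions along the radial $\mathbb{C}P^1$-direction. The resulting $(u_1,\Lambda_1)$ is then of archipelago type and circumvents any global cohomological issues, paralleling the strategy successfully employed in the 4d Chern-Simons localization of \cite{Costello:2019tri}.
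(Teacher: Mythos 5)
Your core construction is sound and is essentially the paper's argument in a slightly more rigid gauge: where you set $\Lambda_{1,\bar z}=0$ outright, the paper only demands $u_1\rhd\Lambda_{1,\bar z}\in\ker\mu_1$, reduces \eqref{1gauconst1} to the same $\bar\partial$-type equation $\partial_{\bar z}u_1u_1^{-1}=-A'_{\bar z}$ for $u_1$, kills the quadratic term $[\Lambda_{1,i},\Lambda_{1,\bar z}]$ via the Peiffer identity, and then solves the residual inhomogeneous covariant equations for $\Lambda_{1,\bar z}$ and $\Lambda_{1,i}$ by appealing to the Fredholm alternative. Your reduction of \eqref{2gauconst1} to $\partial_{\bar z}\Lambda_{1,i}=u_1^{-1}\rhd B'_{i\bar z}$ after imposing $\Lambda_{1,\bar z}=0$ and $L'_{\bar z}=0$ is correct, and your choice actually makes the second step cleaner than the paper's (which splits one equation into two decoupled PDEs under an extra independence assumption). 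Up to this point the two proofs buy the same thing at the same level of rigor.

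The final paragraph, however, contains a genuine error. The internal transformation $(u_1,\Lambda_1)$ is \emph{not} allowed to be of archipelago type: the whole point of the lemma is to achieve $L'_{\bar z}=0$ and $H'_{i\bar z}=0$ \emph{globally} on $X=\mathbb{C}P^1\times Y$, since the subsequent localization uses the bulk equation $\partial_{\bar z}L=0$ to conclude that $L$ is constant along $\mathbb{C}P^1$, and uses $H_{i\bar z}=0$ to discard $\langle\mu_1(H),H\rangle$ and reduce \eqref{ec:reparaction2} to a boundary term. If $u_1=1$ outside small discs around the poles, then $L'_{\bar z}=A'_{\bar z}\neq 0$ there and the gauge-fixing fails precisely where you need it. (The archipelago condition in \S\ref{sec:gaugesym5d} applies to the \emph{external} defect parameters in $\mathsf{D}\L_\text{def}$, not to the internal reparametrization.) The good news is that the obstruction you fear is smaller than you think: for the $\h$-valued equation $\partial_{\bar z}\Lambda_{1,i}=u_1^{-1}\rhd B'_{i\bar z}$ there is no Dolbeault obstruction at all, since $H^{0,1}(\mathbb{C}P^1,\mathcal{O})=0$, so it can be solved fibrewise over $Y$ on the whole sphere. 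The only genuine global issue is the $G$-valued equation for $u_1$, which requires the holomorphic bundle defined by $A'_{\bar z}$ on $\mathbb{C}P^1$ to be trivial; this is the standard (perturbative) assumption made implicitly in both the paper and the $\mathrm{CS}_4$ literature, and it should be stated as a hypothesis rather than patched away.
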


\begin{proof}
    Consider first the constraint \eqref{1gauconst1}, which can be written as
\begin{equation}
    \partial_{\bar z}u_1 u_1^{-1} = A'_{\bar z} + \mu_1(u_1\rhd \Lambda_{1,\bar z})\,.
\end{equation}
Due to the graded nature of the problem, the most natural constraint is the following:
\begin{equation}
    u_1\rhd \Lambda_{1,\bar z} \in \operatorname{ker}\mu_1\,,\qquad \partial_{\bar z}u_1 u_1^{-1} = A'_{\bar z}\,.
\end{equation}
Since $\operatorname{ker}\mu_1$ is an ideal, this implies $\Lambda_{1,\bar z}\in\operatorname{ker}\mu_1$. The quantity $u_1$ is then only subject to the differential equation $\partial_{\bar z}u_1u_1^{-1}=A'_{\bar z}$, which fixes the $\bar z$-dependence of $u_1$. The above solution also implies
\begin{equation*}
    [\Lambda_{1,i},\Lambda_{1,\bar z}] = -\tfrac{1}{2}\mu_2(\mu_1(\Lambda_{1,\bar z}),\Lambda_{1,i})=0 \,,
\end{equation*}
by the Peiffer identity, hence the quadratic term in \eqref{2gauconst1} drops. Assuming further that each component of $\Lambda_1$ are independent, this secondary differential constraint can be written as a decoupled set of PDEs 
\begin{align}
    &\partial_i \Lambda_{1,\bar z} + \mu_2(L_i',\Lambda_{1,\bar z}) = -u_1^{-1}\rhd B'_{i\bar z}\,,\\
    &\partial_{\bar z} \Lambda_{1,i} + \mu_2(L_{\bar z}',\Lambda_{1,i}) = -u_1^{-1}\rhd B'_{i \bar z}\,.
\end{align}
If we introduce the covariant derivatives on $X=\mathbb{C}P^1\times Y$
\begin{equation}
    Y: \nabla^{L'}_i = \partial_i +\mu_2(L'_i,-) \,,\qquad \mathbb{C}P^1: \bar\partial^{L'} = \partial_{\bar z} + \mu_2(L'_{\bar z},-)\,,
\end{equation}
then these equations take the form of inhomogeneous covariant curvature equations
\begin{equation}
    \nabla^{L'}_i \Lambda_{1,\bar z} = -u^{-1}_1\rhd B'_{i\bar z}\,,\qquad \bar\partial^{L'}\Lambda_{1,i} = -u_1^{-1}\rhd B'_{i \bar z}\,.
\end{equation}
By Fredholm alternative, \textbf{provided the differential operators $\nabla^{L'},\bar\partial^{L'}$ have trivial kernel}, then we can find {\it unique} solutions $(\Lambda_{1,\bar z},\Lambda_{1,i})$ to these inhomogeneous PDEs for each value of the forcing $(u_1\rhd B'_{i\bar z},u_1^{-1}\rhd B'_{i \bar z})$.
\end{proof}

Having used part of the reparametrization symmetry to fix the $\bar z$ components of the fields to zero, we look for any remaining internal gauge transformation which preserves this condition. In other words, we look for $(u_2,\Lambda_2)$ such that 
\begin{equation}
    L \coloneqq L'^{(u_2,\Lambda_2)}\,,\quad H \coloneqq H'^{(u_2,\Lambda_2)} \,
\end{equation}
satisfy $L_{\bar z}=0$ and $H_{i\bar z}=0$. This amounts to
\begin{align}
    & L_{\bar z}=  u_2^{-1}\partial_{\bar z}u_2 + \mu_1(\Lambda_{2,\bar z}) =0 \\
    &H_{i\bar z} = \partial_{i}\Lambda_{2,\bar z}-\partial_{\bar z}\Lambda_{2,i} + \mu_2(L_i,\Lambda_{2,\bar z})-\mu_2(L_{\bar z},\Lambda_{2,i})- [\Lambda_{2,i},\Lambda_{2,\bar z}]=0\,,
\end{align}
which can be achieved by taking $(u_2, \Lambda_2)$ with 
\begin{equation}
    \partial_{\bar z}u_2=0\,,\quad \Lambda_{2,\bar z}=0 \,,\quad \partial_{\bar z}\Lambda_{2,i} =0\,.
\end{equation}
Note that these conditions on $(u_2,\Lambda_2)$ give us complete freedom on $u_2|_Y$ and $\Lambda_{2,i}|_Y$, since they only constrain the $\mathbb{C}P^1$-dependence of the latter. 

Hence, we can use this internal gauge transformation to fix the values of $\hat g$ and $\hat \Theta$ at one of the two punctures, either $z=0$ or $z=\infty$. Taking $(u_2,\Lambda_2)$ such that $u_2|_Y=\hat g|_{z=\infty}$ and $\Lambda_{2,i}|_{Y}=\hat{\Theta}_i|_{z=\infty}$, our fields $\hat g$ and $\hat \Theta$ at the poles become
\begin{equation}
\label{ec:gaugefixing1}
    \hat g|_{z=0}\coloneqq g\,,\quad \hat g|_{z=\infty}=1\,, \quad \hat\Theta|_{z=0}\coloneqq \Theta\,,\quad \hat \Theta_i|_{z=\infty}=0\,.
\end{equation}
To summarize, after fixing the reparametrisation symmetry we have arrived to the expressions
\begin{align}
\label{ec:AintermsofL}
    & A= \hat g^{-1} L \hat g + \hat g^{-1}\rd \hat g + \mu_1(\hat \Theta) \\
    & B = \hat g^{-1}\rhd H + \mathrm{d}\hat \Theta +\mu_2(A,\hat \Theta)-\tfrac{1}{2}[\hat \Theta,\hat\Theta]
\end{align}
where the $\bar z$ components of $L$ and $H$ vanish, and the values of $\hat g$ and $\hat \Theta$ at the poles of $\omega$ are given by \eqref{ec:gaugefixing1}.

\section{Three-Dimensional Field Theories}\label{3dift}

We are ready to construct the main player of this paper: a family of three-dimensional field theories associated to different choices of boundary conditions. To begin with, we write our 5d action \eqref{ec:h2cs} in terms of the reparametrisation fields $L,H,\hat g$ and $\hat \Theta$. Given that the expression of $A$ and $B$ in terms of the latter is formally the same than that of a gauge transformation, we can use \eqref{ec:actionundergt} to write
\begin{equation}
\label{ec:reparaction}
    S_{\mathrm{h2CS}}[A,B] = \frac{1}{2\pi i}\int_X \langle F(L)-\tfrac{1}{2}\mu_1(H),H\rangle + \frac{1}{2\pi i}\int_X\omega \wedge  \mathrm{d}\Omega\big(L,H,\hat g,\hat \Theta\big)
\end{equation}
where we recall the expression for $\Omega$ once again
\begin{equation}
\label{ec:omega12345}
    \Omega\big(L,H,\hat g,\hat \Theta\big) = \langle \hat g^{-1}F(L)\hat g,\hat \Theta\rangle + \langle \hat g^{-1}L \hat g+\tfrac{1}{2}\hat g^{-1}\mathrm{d}\hat g,[\hat\Theta, \hat\Theta]\rangle + \tfrac{1}{2} L_{\mathrm{CS}}\big(\hat\Theta\big)\,.
\end{equation}
Let us note that since $H_{i\bar z}=0$ then $\langle \mu_1(H),H\rangle=0$. On the other hand, the second term can be integrated by parts to find
\begin{equation}
\label{ec:reparaction2}
    S_{\mathrm{2CS}}\big[L,H,\hat g,\hat \Theta\big] = \frac{1}{2\pi i}\int_X \omega \wedge \langle F(L),H\rangle + \frac{1}{2\pi i}\int_X \mathrm{d}\omega \wedge \Omega\big(L,H,\hat g,\hat \Theta\big)\,.
\end{equation}
While the second term in the above expression is effectively three-dimensional due to the presence of $\mathrm{d}\omega$, the first term is a genuine five-dimensional bulk term which we must eliminate to obtain a three-dimensional theory. We do this by going partially on-shell. Indeed, under an arbitrary variation $\delta H$ we find the bulk equation of motion $\partial_{\bar z}L=0$. The remaining term, can be integrated along $\mathbb{C}P^1$ to find (see \textbf{Lemma 2.2} in \cite{Delduc:2019whp})
\begin{equation}
\label{ec:3dift1}
\begin{aligned}
    S_{3d}[L,H,\hat g,\hat \Theta] &= 
    \int_{Y}\iota_Y^*\Omega\big(L,H,\hat g,\hat\Theta\big)|_{z=0}-\iota_Y^*\Omega\big(L, H,\hat g,\hat \Theta\big)|_{z=\infty}\\
    & =\int_Y  \big[\langle g^{-1}F(L) g,\Theta\rangle + \tfrac{1}{2} \langle g^{-1}L g+ g^{-1}\mathrm{d}g,[\Theta,\Theta]\rangle + \tfrac{1}{2} L_{\mathrm{CS}}(\Theta)\rangle\big]_{z=0}
\end{aligned}
\end{equation}
where we have used the expressions \eqref{ec:gaugefixing1} for the evaluations of $\hat g$ and $\hat \Theta$ at $z=0$. Note that term at infinity is identically vanishing due to the gauge fixing conditions \eqref{ec:gaugefixing1}. The action \eqref{ec:3dift1} is three-dimensional but it is still written in terms of $L$, whereas we want our 3d action to be written in terms of $g$ and $\Theta$ only. To do so, we will consider a family of boundary conditions which includes \eqref{ec:bconditions1}, in order to express $L$ in terms of $g$ and $\Theta$ and obtain explicit forms for three-dimensional actions.


\subsection{Covariant family of boundary conditions}\label{covbcs}
We begin with a particularly convenient family of boundary conditions/Lagrangian subspaces $\mathcal{L}\subset\mathsf{MC}_\text{blk}$, which can be understood as a generalization of the example \eqref{ec:bconditions1}. First, recall that the 2-Chern-Simons action is symmetric under global Poincar{\'e} transformations. For a product manifold $X = Y\times \mathbb{C}P^1$, the Poincar{\'e} group admits as subgroups the isometry group $\operatorname{Iso}(Y)$ of $Y$. It can be seen that this isometry group $\operatorname{Iso}(Y)$ is inherited as a global symmetry of the symplectic form $\varpi_\text{blk}$ \eqref{ec:bdaryvar}.

Here, we are going to introduce a family of boundary conditions that breaks only the linear part of the isometries $\operatorname{Iso}(Y)$ (namely, we neglect the translations). For concreteness and simplicity, let us for now think of $Y$ as a real Riemannian 3-manifold, then the linear isometries are given by $O(3)\subset \operatorname{Iso}(Y)$. The symmetry breaking patterns are described by maximal subgroups of $O(3)$. It is known that there is at most one unique maximal Lie (ie. infinite) subgroup $O(2)$ of $O(3)$, from which we obtain the following {\it symmetry breaking pattern}
\begin{equation}
    S^2\hookrightarrow O(3)\rightarrow O(2).
\end{equation}
The fibre $S^2\cong O(3)/O(2)$ is a 2-sphere, which paramterizes a collection $\{\mathcal{L}_\ell\mid \ell\in S^2\}$ of Lagrangian subspaces of $\mathsf{MC}_\text{blk}$. We call this data $\ell\in S^2$ the \textbf{chirality} vector.

\medskip

We now work to explicitly write down the boundary conditions associated to $\mathcal{L}_\ell$. To do so, we first define the 3-dimensional vectors $\vec{A} = (A_1,A_2,A_3)$ and $\vec{B} = (B_{23},B_{13},B_{12})$, which are built locally out of the components of our fields $(A,B)$ along $Y$. For a given triple $\vec{\ell},\vec{n},\vec{m}$ of unit vectors on $Y$, we consider the family of boundary conditions
\begin{align}
\label{ec:covbc1}
    & \vec n\cdot \vec{A}|_{z=0}= \vec m\cdot \vec{A}|_{z=0}= \vec \ell\cdot \vec{B}|_{z=0}=0 \\
    \label{ec:covbc2}
    & \vec \ell\cdot \vec{A}|_{z=\infty}= \vec m\cdot \vec{B}|_{z=\infty}=\vec n\cdot \vec{B}|_{z=\infty}=0 \,,
\end{align}
which is given by the following Lagrangian subspace
\begin{equation}
\label{ec:lagrangiansubspaces}
    \mathcal{L}_\ell = \operatorname{Span}\{\vec \ell\cdot \vec{A},\vec{B}_\perp\}\vert_{z=0} \cup\operatorname{Span}\{\vec{A}_\perp,\vec \ell\cdot \vec{B}\}\vert_{z=\infty},
\end{equation}
where we have used a shorthand to denote $\vec{A}_\perp = (\vec n\cdot \vec{A},\vec m\cdot \vec{A})$.

\begin{remark}
It can be seen by direct computation, that this choice of boundary conditions implies $\delta S_{\text{2CS}}=0$ if and only if the triple $(\vec\ell,\vec n,\vec m)$ defines a global orthonormal frame on $Y$ specified up to orientation by, say, $\vec\ell$. The simple example considered in \eqref{ec:bconditions1} can be recovered with $\vec\ell=(0,0,1)$, and all other Lagrangian subspaces $\mathcal{L}_\ell$ work in the same way. The characterization of the defect symmetry 2-group $\mathsf{D}\L_\text{def}$ \eqref{defectgaugeconstraints}, thus holds for any member $\mathcal{L}_\ell$ of this $S^2$-family of Lagrangian subspaces up to a global rotation of the framing on $Y$.
\end{remark}

\subsection{The 3d Actions}\label{3dlocalization}
Having chosen our boundary conditions defined through the Lagrangian subspace \eqref{ec:lagrangiansubspaces} we proceed with the evaluation of the three dimensional action \eqref{ec:3dift1}. To do so, we must use our boundary conditions \eqref{ec:covbc1} and \eqref{ec:covbc2} to solve for $L$ in terms of $g$ and $\Theta$. Recall that $A$ and $L$ are related by equation \eqref{ec:AintermsofL}, namely
\begin{equation}
    A = \hat g^{-1}L\hat g + \hat g^{-1}\mathrm{d}\hat g + \mu_1(\hat \Theta)\,.
\end{equation}
Hence, the boundary conditions imply
\begin{align}
    & 0=\vec n \cdot \vec A|_{z=0} = g^{-1} \big(\vec n \cdot \vec L|_{z=0}\big)g+ g^{-1}\big(\vec n \cdot \nabla\big) g + \mu_1\big(\vec n \cdot \vec \Theta\big) \label{chiralbc1}\\
    &0=\vec m \cdot \vec A|_{z=0} = g^{-1} \big(\vec m \cdot \vec L|_{z=0}\big)g+ g^{-1}\big(\vec m \cdot \nabla\big) g + \mu_1\big(\vec m \cdot \vec \Theta\big) \label{chiralbc2}\\
    &0 = \vec \ell \cdot \vec A|_{z=\infty} =\vec \ell \cdot \vec L|_{z=\infty}\label{chiralbc3}
\end{align}
where we have used the fact that $\hat g|_{z=\infty}=1$ and $\hat \Theta|_{z=\infty}=0$. The last constraint \eqref{chiralbc3} is of particular importance, since it will in fact eliminate the component $\operatorname{proj}_\ell L = (\vec \ell \cdot \vec L)\vec\ell=0$ along $\ell$ entirely. This is because $L$ is constrained to be holomorphic on $\bbC P^1$ according to the bulk equation of motion $\partial_{\bar z}L_i =0$ and hence constant. The boundary condition $\vec \ell \cdot \vec L|_{z=\infty}=0$ then fixes this constant to be zero.


Let us compute each of the terms in $\Omega$ separately, making use of the boundary conditions \eqref{chiralbc1}, \eqref{chiralbc2}, \eqref{chiralbc3}. It will be convenient to introduce the notation 
\begin{equation}
    \vec{a}_\ell = (\vec\ell\cdot \vec a)\vec\ell = \operatorname{proj}_\ell(a)\nonumber
\end{equation}
for the projection of vectors $\vec{a}$ (or vector-valued operator such as $\nabla$) along $\ell$, and for any Lie algebra-valued vectors $\vec{\alpha},\vec{\beta}$ we define
\begin{align*}
    \langle \vec\alpha\cdot\vec\beta\rangle &= \langle \alpha_1,\beta_1\rangle + \langle \alpha_2,\beta_2\rangle+\langle \alpha_3,\beta_3\rangle\\
    [\vec{\alpha}\times \vec{\beta}] &= ([\alpha_2,\beta_3] - [\alpha_2,\beta_3],[\alpha_1,\beta_3] - [\beta_1,\alpha_3],[\alpha_1,\beta_2] - [\beta_1,\alpha_2]).\footnotemark
\end{align*}
\footnotetext{Since we know $(\vec n,\vec m,\vec \ell)$ forms an orthonormal frame on $Y$, the triple product $(-\times -)\cdot -$ satisfies
\begin{equation*}
    (\vec a\times \vec b)\cdot \vec c_\ell = (\vec a_{n}\times \vec b_{m} - \vec a_{m}\times \vec b_{n})\cdot \vec c.
\end{equation*}} Let $\vec k=\nabla gg^{-1}$ denote the {\it right} Maurer-Cartan form. We can then compute
\begin{equation}
\begin{split}
    \iota_Y^* \langle g^{-1}F(L)g,\Theta\rangle 
    &=-\langle \mu_1(\nabla\times\vec\Theta)\cdot\vec\Theta_\ell\rangle  +\langle  \mu_1([\vec\Theta\times\vec\Theta])\cdot \Theta\rangle\\
    &\qquad -\langle \mathrm{Ad}_g^{-1}\nabla_\ell \times \vec k\cdot \vec\Theta\rangle-2\langle \operatorname{Ad}_g^{-1}\vec k_\ell\cdot [\vec\Theta\times \vec\Theta]\rangle \,.
\end{split}
\end{equation}
Second, we have
\begin{equation}
    \iota_Y^*\langle g^{-1}Lg+g^{-1}\mathrm{d}g,\Theta \wedge \Theta\rangle=-2\langle \mu_1(\vec\Theta)\cdot [\vec\Theta \times\vec\Theta]\rangle+\langle \operatorname{Ad}_g^{-1}\vec k_\ell\cdot [\vec\Theta\times\vec\Theta]\rangle\,,
\end{equation}
while the Chern-Simons term $L_{\mathrm{CS}}(\Theta)$ is simply
\begin{align}
    L_{\mathrm{CS}}(\Theta) &= 2\langle \mu_1(\vec\Theta)\cdot\nabla\times\vec\Theta\rangle+2\langle\mu_1(\vec\Theta),[\vec\Theta\times \vec\Theta]\rangle\,.
\end{align}
In combination, we find\footnote{Notice the term $-\langle \mu_1(\nabla\times\vec\Theta)\cdot\Theta_\ell\rangle$ in $\iota_Y^*\langle g^{-1}F(L)g,\Theta\rangle $ combines to change the sign of an identical term in $L_{\mathrm{CS}}(\Theta)$. This allows us to perform an integration by parts to eliminate the appearance of $\Theta_\ell$ altogether in the resulting Lagrangian \eqref{3dintegrand}.} from \eqref{ec:omega12345}
\begin{align}
\iota_Y^*\Omega &= -\langle \mathrm{Ad}_g^{-1}\nabla_\ell\times\vec k\cdot \vec\Theta\rangle+ \langle\mu_1(\vec\Theta)\cdot \nabla_\ell \times \vec\Theta\rangle-\langle\operatorname{Ad}_g^{-1}\vec k_\ell \cdot [\vec\Theta\times\vec\Theta] \rangle\,.\label{3dintegrand}
\end{align}
Finally, we note that we can perform one more simplification by using the invariance of the pairing,
\begin{equation}
    \langle \mu_1(g\rhd\vec\Theta)\cdot \nabla_\ell \times (g\rhd \vec\Theta)\rangle= \langle\mu_1(\vec\Theta)\cdot\nabla_\ell \times\vec\Theta\rangle - \langle \vec k_\ell\cdot g\rhd [\vec\Theta\times\vec\Theta]\rangle,
\end{equation}
from which we construct a family of 3d theories given by
\begin{equation}\label{covariant3daction}
    S_{3d}[g,\Theta] = -\int_Y \langle \nabla_{\ell}\times \vec{k}\cdot \vec{\Theta}^g\rangle - \langle 
    \mu_1(\vec\Theta^g)\cdot \nabla_\ell\times \vec\Theta^g\rangle \,,
\end{equation}
parameterized by ${\ell}\in S^2$, where $\vec\Theta^g = g\rhd\vec\Theta$. 


\medskip

We shall show in the following that this theory is {\it completely topological}, in the sense that its equations of motion describe configurations of 2-connections $(L,H)$ that are flat in each real (topological) direction.


\subsubsection{Equations of Motion}\label{3deoms}
The equations of motion of the theory are obtained by varying the fields $g,\Theta$. Indeed, under a variation $\delta\Theta$, we find
\begin{equation}\label{1flat?}
   \delta \Theta: \quad \nabla_\ell \times \vec{k} + \mu_1(\nabla_\ell\times (g\rhd\vec\Theta))=0\,.
\end{equation}
To vary $g$ we define $\delta g$ as the infinitesimal translation of $g$ under right-multiplication $g+\delta_g g = g\cdot(1+\epsilon)$, where $\epsilon\in\mathfrak{g}$ is an infinitesimal Lie algebra element. Using the following formulas
\begin{gather}
    \delta_g (\nabla gg^{-1}) = \operatorname{Ad}_g \nabla \epsilon\,,\qquad \delta_g (g\rhd \vec{\Theta}) = g\rhd \mu_2(\epsilon,\vec{\Theta})\,, \nonumber\\
\delta_g (g^{-1}\nabla g) =  \mathrm{d}\epsilon + [g^{-1}\nabla g,\epsilon]\,, \nonumber
\end{gather}
we compute the variation of the first term neglecting exact boundary terms
\begin{equation}
    \delta_g\langle \nabla_\ell\times \vec{k}\cdot g\rhd\vec{\Theta}\rangle =\langle \epsilon,g^{-1}\rhd \nabla_\ell\cdot (g\rhd (\nabla\times \vec{\Theta}))\rangle \label{variation1}
\end{equation}
where we have used the vector identities
\begin{gather*}
  \nabla\cdot\nabla\times\vec A=0\,,\quad \nabla\cdot(\vec A\times\vec B) = \vec{A}\cdot(\nabla\times\vec{B}) - \vec{B}\cdot(\nabla\times\vec A)\,, \quad
  \nabla\times(f\vec{A})= (\nabla f)\times\vec{A} + f\nabla\times\vec{A}\,.
\end{gather*}
The variation of the second term is
\begin{equation}
    \delta_g \langle \vec{k}_\ell\cdot g\rhd [\vec{\Theta}\times \vec{\Theta}]\rangle = -\langle \epsilon, g^{-1}\rhd (\nabla_\ell \cdot g\rhd [\vec{\Theta}\times\vec{\Theta}])\rangle\,,
\end{equation}
and hence combining with \eqref{variation1} we find the second equation of motion
\begin{equation}\label{2flat}
    \delta g: \quad \nabla_\ell\cdot g\rhd (\nabla\times \vec{\Theta} - [\Theta\times\Theta]) =0.
\end{equation}

\paragraph{Fake- and 2-flatness in 3-dimensions.}
We now work to show that the equations of motion \eqref{1flat?} and \eqref{2flat} of \eqref{covariant3daction} implies the flatness of the 2-connection. Without loss of generality (WLOG), it suffices to prove this statement for one choice of $\ell$, such as $\vec{\ell}=(0,0,1)$ as in \eqref{ec:bconditions1}. This is because $S^2$ is a homogeneous space under $O(3)$ and hence any two choices of the Lagrangian subspace $\mathcal{L}_\ell$ are related by an $O(3)$-action on the 3d theory \eqref{covariant3daction}. For this choice of $\ell$, the equations of motion are given by
\begin{align}
\label{ec:123eomth}
    &\delta \Theta: \quad \partial_3k_i  = -\mu_1(\partial_3(g\rhd\Theta_i))\,,\quad i=1,2\\
    &\delta g: \quad \partial_3\left[g\rhd(\partial_1\Theta_2-\partial_2\Theta_1 - [\Theta_1,\Theta_2])\right]=0\,.\label{2flatx3}
\end{align}

\begin{proposition}\label{extended2flat}
    The equations of motion for \eqref{covariant3daction} are equivalent to fake- and 2-flatness for $(L,H)$ on $Y$.
\end{proposition}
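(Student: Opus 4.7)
The plan is to use the $O(3)$-covariance of \eqref{covariant3daction} to reduce to the representative chirality $\vec\ell=(0,0,1)$ of \eqref{ec:bconditions1}, express $(L,H)$ entirely in terms of $(g,\Theta)$ via the boundary conditions together with holomorphy on $\mathbb{C}P^1$, and then recognize the components of fake- and 2-flatness on $Y$ as the two EOMs \eqref{ec:123eomth} and \eqref{2flatx3}.

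First I would establish that both $L$ and $H$ are constant along $\mathbb{C}P^1$. Varying $H$ in \eqref{ec:reparaction2} produces the bulk equation $\partial_{\bar z}L=0$; combined with the gauge-fixed $L_{\bar z}=0=H_{i\bar z}$ and the $(i,\bar z)$-components of the MC conditions on $X$, one also obtains $\partial_{\bar z}H_{ij}=0$. Holomorphy, regularity away from the poles and boundedness at $z=\infty$ then force both fields to be $z$-independent. The boundary condition \eqref{chiralbc3} gives $L_3\equiv 0$; since $\hat\Theta$ vanishes identically on the divisor $z=\infty$, all its tangential derivatives do so too, and hence $B_{13}|_{z=\infty}=B_{23}|_{z=\infty}=0$ translates into $H_{13}=H_{23}=0$. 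The boundary conditions at $z=0$, together with $\hat g|_{z=0}=g$ and $\hat\Theta|_{z=0}=\Theta$, determine
\begin{equation*}
L_i = -k_i - \mu_1(g\rhd\Theta_i),\quad i=1,2, \qquad H_{12} = -g\rhd\bigl(\partial_1\Theta_2-\partial_2\Theta_1-[\Theta_1,\Theta_2]_\h\bigr),
\end{equation*}
where $k=\rd gg^{-1}$ is the right Maurer-Cartan form.

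Next I would enumerate the components of fake- and 2-flatness on $Y$. The $(1,2)$-component of fake-flatness, $F(L)_{12}=\mu_1(H_{12})$, reduces to an algebraic identity: the right MC form has vanishing curvature, the cross term $[k,\mu_1(g\rhd\Theta)]$ cancels against $\rd\mu_1(g\rhd\Theta)$ via the $\g$-equivariance $\mu_1(x\rhd\xi)=\operatorname{Ad}_x\mu_1(\xi)$, and the remaining quadratic piece collapses through the Peiffer identity $[\mu_1(\xi),\mu_1(\eta)]=\mu_1([\xi,\eta]_\h)$ to match $\mu_1(H_{12})$ exactly. The $(i,3)$-components, with $H_{i3}=0$, reduce to $\partial_3 L_i=0$, which upon substituting the explicit form of $L_i$ is precisely the $\Theta$-EOM \eqref{ec:123eomth}. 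For 2-flatness, since $L_3=0$ and $H_{i3}=0$, the only nontrivial component on $Y$ is $(1,2,3)$, collapsing to $\partial_3 H_{12}=0$; substituting the expression for $H_{12}$ then gives exactly the $g$-EOM \eqref{2flatx3}.

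Each step is reversible, so conversely the two EOMs imply fake- and 2-flatness on $Y$. The main obstacle I anticipate is the algebraic identity $F(L)_{12}=\mu_1(H_{12})$, where the crossed-module axioms --- Peiffer, $\g$-equivariance of $\mu_1$, and $G$-equivariance of $[\cdot,\cdot]_\h$ --- must conspire to produce exactly the normalization arising from the boundary formula for $H_{12}$. Once this identity is verified, the remaining flatness components reduce by direct substitution to \eqref{ec:123eomth} and \eqref{2flatx3}.
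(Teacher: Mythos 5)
Your proposal is correct and follows essentially the same route as the paper's own proof: reduce to $\vec\ell=(0,0,1)$ by covariance, use the bulk holomorphy and the $z=\infty$ boundary conditions to kill $L_3,H_{13},H_{23}$, read off $L_i=-k_i-\mu_1(g\rhd\Theta_i)$ and $H_{12}=-g\rhd(\partial_1\Theta_2-\partial_2\Theta_1-[\Theta_1,\Theta_2])$ from the $z=0$ conditions, and then identify the $(1,2)$ fake-flatness component as an algebraic identity (via Maurer--Cartan, equivariance and Peiffer) while the $(i,3)$ components and the 2-flatness component become exactly the $\Theta$- and $g$-equations of motion. No gaps.
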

\begin{proof}
    The fake-flatness and 2-flatness conditions for $(\tilde L,\tilde H) \coloneq (\iota_Y^* L,\iota_Y^* H)\vert_{z=0}$ are given by
    \begin{equation}
    \label{ec:formflatness}
        F(\tilde L)-\mu_1(\tilde H)=0\,,\quad \rd \tilde H + \mu_2(\tilde L,\tilde H)=0\,.
    \end{equation}
    The bulk equations of motion \eqref{ec:eom1}, \eqref{ec:eom3} and the boundary conditions \eqref{ec:bconditions1} at $z=\infty$ make $L_3=0, H_{13}=0, H_{23}=0$ on $X=\mathbb{C}P^1\times Y$, so that fake flatness becomes in components
        \begin{equation}
        \partial_1\tilde L_2 - \partial_2\tilde L_1 + [\tilde L_1,\tilde L_2] = \mu_1(\tilde H_{12})\,,\quad \partial_3 \tilde L_1 =0 \,,\quad \partial_3 \tilde L_2 = 0\,, 
    \end{equation}
    and 2-flatness is simply
    \begin{equation}
        \partial_3 \tilde H_{12}=0\,.
    \end{equation}
    On the other hand, the boundary conditions at $z=0$ allow us to compute
    \begin{align}
        &\tilde L_i = -k_i - \mu_1(g\rhd \Theta_i)\,,\quad i=1,2\\
        &\tilde H_{12} = -g\rhd(\partial_1\Theta_2 - \partial_2\Theta_1 - [\Theta_1,\Theta_2])\,,\label{currents}
    \end{align}
    In particular, we find that
    \begin{equation*}
        \partial_3 \tilde H_{12} = -\partial_3 g\rhd(\partial_1\Theta_2-\partial_2\Theta_1 - [\Theta_1,\Theta_2]) 
    \end{equation*}
    vanishes due to \eqref{2flatx3}, and this is precisely the 2-flatness condition. On the other hand, we begin by evaluating 
    \begin{align*}
        \partial_1 \tilde L_2 - \partial_2\tilde L_1 + [\tilde L_1,\tilde L_2] &= -\partial_1k_2 + \partial_2k_1 - [k_1,k_2] + [\mu_1(g\rhd\Theta_1),\mu_1(g\rhd\Theta_2)]\\
        &\qquad -~ \partial_1\mu_1(g\rhd\Theta_2) + \partial_2\mu_1(g\rhd\Theta_1) + [k_1,\mu_1(g\rhd\Theta_2)] - [\mu_1(g\rhd\Theta_1),k_2]\\
        &= -\mu_1\mu_2(k_1,g\rhd\Theta_2) +[k_1,\mu_1(g\rhd\Theta_2)] + \mu_1\mu_2(k_2,g\rhd\Theta_1) - [k_2,\mu_1(g\rhd\Theta_1)] \\
        &\qquad - \mu_1(g\rhd (\partial_1\Theta_2 - \partial_2\Theta_1 - [\Theta_1,\Theta_2])) \\ 
        &= \mu_1(\tilde H_{3}),
    \end{align*}
    where we have used the Maurer-Cartan equations for $k_1,k_2$ as well as the equivariance and Peiffer identities. On the other hand, we have
    \begin{equation*}
        -\partial_3\tilde L_1 = \partial_3(k_1 + \mu_1(g\rhd \Theta_1)),\qquad -\partial_3\tilde L_2 = \partial_3(k_2 + \mu_1(g\rhd \Theta_2)),
    \end{equation*}
    which both vanish due to the equations of motion \eqref{ec:123eomth}. These are precisely the fake-flatness conditions.
\end{proof}
\noindent Note the proof involves local computations, and hence the above statement still holds regardless if $\ell$ aligns with the foliation, given we swap the indices of the (co)vectors involved consistently.

\subsection{Transverse Holomorphic Foliation}\label{thf}
In this section, we are going to relax the assumption that $Y$ is a real Riemannian 3-manifold. We do this by equipping $Y$ with a {\it transverse holomorhic foliation} (THF) $\Phi$, which we take to be along the $x_3$-direction. This means that we have a decomposition of the tangent bundle $TY$ such that each transverse leaf $\operatorname{ker}\mathrm{d}x_3$ has equipped an almost complex structure \cite{Scrdua2017OnTH}. Each local patch $U\subset Y$ can then be given coordinates $(x_3,w,\bar w)$, where $x_3 \in \R$ and $(w,\bar w)\in \mathbb{C}$, such that transitions $(x_3,w,\bar w) \mapsto (x_3',w',\bar w')$ look like \cite{Aganagic:2017tvx}
\begin{equation*}
    x_3'(x_3,w,\bar w),\qquad w'(w),\qquad \bar w'(\bar w)\,.
\end{equation*}
In the following, we will assume the leaves of $\Phi$ never intersect, and that they define an integrable distribution such that $\Phi$ foliates out a complex 2-submanifold $M\subset Y$. 

There are then two directionalities present in our theory: the chirality vector $\ell$ and the direction of the foliation $\Phi$. These directions can either be aligned or misaligned, which lead to distinct theories. We shall study these two cases in more detail in the following.

\subsubsection{Foliation aligned with chirality} 
First, let us suppose the chirality $\ell = \mathrm{d}x_3$ is aligned. The complex structure on the transverse leaves of $\Phi$ allows us to impose boundary conditions which is a mix of topological and holomorphic directions, such as
    \begin{align*}
    & A_w|_{z=0}= A_{\bar w}|_{z=0}= B_{w\bar w}|_{z=0}=0 \\
    & A_3|_{z=\infty}=  B_{3w}|_{z=\infty}= B_{3\bar w}|_{z=\infty}=0\,.
\end{align*}
These boundary conditions can be understood as a partially holomorphic version of \eqref{ec:bconditions1} and indeed lead to the following 3d \textbf{topological-holomorphic} field theory,
\begin{align}\label{holo3daction}
    S_{\mathrm{h}3d} &= \int_Y \operatorname{vol}_Y\left[-\langle \partial_3 k_{\bar w},\Theta^g_{w}\rangle +\langle \partial_3 k_w,\Theta^g_{\bar w}\rangle - \langle \mu_1(\Theta_w^g),\partial_3\Theta_{\bar w}^g\rangle\right].
\end{align}
Writing $w = x_1+ix_2\in M$ in terms of the real coordinates, the above action can be expressed in terms of the lightcone coordinates $x_\pm = x_1 \pm x_2$ on $M$,
\begin{align*}
    S_{\mathrm{h}3d} &= 2\int_Y \operatorname{vol}_Y\left[-\langle \partial_3 k_+,\Theta^g_-\rangle +\langle \partial_3 k_-,\Theta^g_+\rangle - \langle \mu_1(\Theta_+^g),\partial_3\Theta^g_-\rangle\right].
\end{align*}
Notice this action can be reproduced from the action \eqref{covariant3daction} with $\ell=\mathrm{d}x_3$ from a $SO(2)$-rotation
\begin{equation*}
    \hat x_1 \mapsto \frac{1}{\sqrt{2}}(\hat x_1-\hat x_2),\qquad \hat x_2\mapsto \frac{1}{\sqrt{2}}(\hat x_1 + \hat x_2),
\end{equation*}
which can be interpreted as a boost into the lightcone coordinates on $M$. Therefore, the action \eqref{holo3daction} can also be intuitively obtained by "endowing" \eqref{covariant3daction} with a complex structure. This is not a purely cosmetic choice, however; the existence of a complex structure on $M$ plays a significant role in determining the boundary conditions satisfied by the symmetries of our theory (see \S \ref{residuals}).

\subsubsection{Foliation misaligned with chirality}\label{misaligned}
Let us now consider the case where the direction $\mathrm{d}x_3$ of the transverse holomorphic foliation $\Phi$ does {\it not} align with the chirality $d\vec\ell$. We may pick, for instance, the chirality $\vec\ell$ along the holomorphic direction $w$, which leads to the boundary conditions 
\begin{align*}
    & A_3|_{z=0}= A_{\bar w}|_{z=0}= B_{3\bar w}|_{z=0}=0 \\
    & A_{w}|_{z=\infty}=  B_{w\bar w}|_{z=\infty}= B_{3 w}|_{z=\infty}=0 \,.
\end{align*}
The three-dimensional action is then 
\begin{align*}
    S_{\mathrm{h}3d} &= \int_Y \operatorname{vol}_Y\left[-\langle \partial_w k_3,\Theta_{\bar w}^g\rangle +\langle \partial_w k_{\bar w},\Theta_3^g\rangle - \langle\mu_1(\Theta_{\bar w}^g),\partial_w\Theta^g_3\rangle \right].
\end{align*}
Consider the case $\mu_1=0$, whence the final term drops. Using the identities 
\begin{equation}\label{MCidentity}
    \partial_i(\partial_j gg^{-1}) = \operatorname{Ad}_g(\partial_j(g^{-1}\partial_ig)),\qquad \partial_i(g^{-1}\partial_j g) = \operatorname{Ad}_g^{-1}(\partial_j(\partial_i gg^{-1})), 
\end{equation}
we can bring the above action to the form
\begin{equation}\label{CSmatter}
    S_{\mathrm{h}3d} =-\int_Y dw\wedge d\bar w \wedge \mathrm{d}x_3\left[\langle\Theta_{\bar w}, \partial_3m_w\rangle - \langle \Theta_3,\partial_{\bar w}m_w\rangle\right],
\end{equation}
where $m = g^{-1}\mathrm{d}g$ is the {\it left} Maurer-Cartan form. This is in fact nothing but the Chern-Simons/matter theory studied in \cite{Aganagic:2017tvx}, where we take $R= \g = \operatorname{Lie}G$ as the adjoint $G$-representation and $R^c=\h$ its conjugate.\footnote{With respect to the degree-1 paring $\langle-,-\rangle$ on the balanced Lie 2-algebra $\G$.} In other words, we are able to recover the interaction between Chern-Simons theory and matter from our localization procedure on 5d 2-Chern-Simons theory.

It was explained in \cite{Aganagic:2017tvx} that, in the context of the topological A-model, when the flux on a 5-brane vanishes on an embedded 3-brane, then the Chern-Simons theory living on this 3-brane acquires this matter coupling term due to those strings that extend in the ambient 5-brane directions. The fact that we can recover such matter coupling terms from our prescription suggests that the holomorphic 5d 2-Chern-Simons theory may play a part in how the ambient strings couple to the 3-brane in the topological A-model.

\begin{remark}
    We emphasize here that the distinction of aligned vs. misaligned matters only when $Y$ is equipped with a THF $\Phi$. Indeed, if no such data is imposed, then the chirality can always be made to align with $\Phi$ with an $SO(3)$-action on the family $\{\mathcal{L}_\ell\}_{\ell\in S^2}$ of Lagrangians. However, this cannot be done with a THF: indeed, if $\ell$ is tangent to $M$, then one cannot rotate it out of the leaves without destroying the complex structure on $M$. We will see in \S \ref{integrability} that this makes the two types (aligned/misaligned) of theories dynamically distinct. 
\end{remark}

\section{Symmetries of the 3d Theory}\label{residuals}
Let us now study the symmetries of our three dimensional theories. As discussed in \S \ref{sec:gaugesym5d}, the 2-gauge symmetries of h2CS together with the Lagrangian subspace $\mathcal{L}_\ell$ (i.e. the defect) form the derived 2-group of defect symmetries $\mathsf{D}\L_\text{sym}(\ell)$ characterized in \eqref{defectgaugeconstraints}. The main goal of this section is to examine the {\it global} symmetries $\mathsf{D}\L_{3d}$ of the localized 3d theory \eqref{covariant3daction}, and we shall prove that it in fact sits inside of $\mathsf{D}\L_\text{sym}$.

For simplicity, we shall focus on the analysis of the symmetries of the 3d theory in the case where the chirality $\ell$ aligns with the $\mathrm{d}x_3$-direction. If $Y$ has equipped a THF, then the misaligned case can be analogously dealt with, but the resulting symmetries will have different properties.

We begin by first writing down the 3d action \eqref{covariant3daction} with the chirality $\mathrm{d}x_3$,
\begin{equation}
    S_{3d}[g,\Theta] =  \int_Y \langle\partial_3 k_2,\Theta_1^g\rangle -\langle \partial_3 k_1,\Theta^g_2\rangle +\langle \mu_1(\Theta^g_1),\partial_3(\Theta^g_2)\rangle \,,\label{3daction}
\end{equation}
where $\Theta^g = g\rhd \Theta$. As we have already noted in \S \ref{3dlocalization}, the field $\operatorname{proj}_\ell\Theta = \Theta_3$ is completely absent from the theory, so that we will restrict to gauge transformation parameters $\Gamma$ with no $\rd \ell$-leg. We denote the space of 1-forms with no $\mathrm{d}\ell$-leg by $\Omega_\perp^1(Y)\subset\Omega^1(Y)$.

Let us first begin by characterizing the symmetries of the 3d theory \eqref{3daction}. We define the following derived 2-group algebra $(\mathsf{D}\L_\perp)_\bullet = C^\infty(Y)\otimes G \times \Omega^\bullet_\perp(Y)\otimes \h$, and consider the subspace $\mathsf{D}\L_{3d}^R \subset (\mathsf{D}\L_\perp)_0$ consisting of elements $(\tilde h,\tilde \Gamma)$ satisfying
   \begin{equation}
    \tilde h^{-1}\partial_1 \tilde h + \mu_1(\tilde \Gamma_1) =\tilde h^{-1}\partial_2 \tilde h + \mu_1(\tilde \Gamma_2) = 0 \quad \text{and} \quad 
    \partial_1 \tilde \Gamma_2 - \partial_2\tilde \Gamma_1 - [\tilde \Gamma_1,\tilde \Gamma_2] =0\,.\label{ec:exgtatzero}
\end{equation}
On the other hand, we let $\mathsf{D}\L_{3d}^L\subset(\mathsf{D}\L_\perp)_0$ denote the subspace consisting of elements $(h,\Gamma)$ satisfying
\begin{equation}\label{ec:extgtatinfty}
    h^{-1}\partial_3 h = 0 \quad \text{and}\quad \partial_3\Gamma_1=\partial_3\Gamma_2=0\,.
\end{equation}

\begin{lemma}
Let $\mathsf{D}\mathring{\L}_{3d}^R\subset \mathsf{D}\L_{3d}^R$ be the subspace for which the components $\tilde \Gamma\in\Omega^1_\perp(Y)\otimes\t$ are valued in the maximal Abelian subalgebra $\t\subset\h$. The 3d theory \eqref{3daction} is invariant under the following transformations:
    \begin{align}
\label{ec:rightaction}
     g \mapsto  g \tilde h, &\qquad \Theta \mapsto \tilde h^{-1}\rhd \Theta+\tilde \Gamma,\qquad (\tilde h,\tilde \Gamma) \in \mathsf{D}\mathring{\L}_{3d}^R \,\\
     \label{ec:leftaction}
    g\mapsto h^{-1}g, &\qquad \Theta \mapsto \Theta-(h^{-1}g)^{-1}\rhd \Gamma,\qquad (h,\Gamma)\in\mathsf{D}\L_{3d}^L\,.
\end{align}
Moreover, these transformations commute.
\end{lemma}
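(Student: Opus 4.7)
The plan is to substitute each transformation into \eqref{3daction} and verify invariance by organizing the correction terms according to their field content. The tools I will rely on throughout are: the symmetry $\langle \mu_1(y),y'\rangle = \langle \mu_1(y'),y\rangle$ and the $G$-invariance of the pairing; the Peiffer identity $\mu_2(\mu_1(y),y')=[y,y']_\h$; the defining constraints \eqref{ec:exgtatzero} and \eqref{ec:extgtatinfty}; and integration by parts (IBP).

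For the left action \eqref{ec:leftaction}, I first compute $g\rhd\Theta\mapsto h^{-1}\rhd(g\rhd\Theta)-\Gamma$ and $k_i\mapsto \operatorname{Ad}_{h^{-1}}k_i - h^{-1}\partial_i h$. The constraint $\partial_3 h=0$ both collapses the $k$-transformation under $\partial_3$ to $\partial_3 k_i\mapsto \operatorname{Ad}_{h^{-1}}\partial_3 k_i$, and makes every correction term arising upon expansion proportional to $\partial_3$ of an $x_3$-independent combination of $h$, $\Gamma$ and $h\rhd\Gamma$. All such pieces collapse to boundary terms by IBP in $x_3$, while the diagonal pieces reproduce the original action after an application of $G$-invariance. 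No Abelian hypothesis is needed for this half.

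The right action \eqref{ec:rightaction} is the substantive case. From $\tilde h^{-1}\partial_i \tilde h = -\mu_1(\tilde\Gamma_i)$ ($i=1,2$) I read off $\bar k_i = k_i - \mu_1(\tilde\Gamma^{g\tilde h}_i)$ and $\bar\Theta^{\bar g}_i = \Theta^g_i + \tilde\Gamma^{g\tilde h}_i$ with $\tilde\Gamma^{g\tilde h}=(g\tilde h)\rhd\tilde\Gamma$, and expand $S_{3d}[g\tilde h,\tilde h^{-1}\rhd\Theta+\tilde\Gamma]-S_{3d}[g,\Theta]$ into three groups of corrections: (B)+(C) the mixed $\Theta^g$--$\tilde\Gamma^{g\tilde h}$ terms; (D) the pure quadratic $\tilde\Gamma^{g\tilde h}$ piece; and (A) the cross terms $\int_Y[\langle\partial_3 k_2,\tilde\Gamma^{g\tilde h}_1\rangle - \langle\partial_3 k_1,\tilde\Gamma^{g\tilde h}_2\rangle]$. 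The block (B)+(C) cancels pairwise by the pairing symmetry combined with IBP. For (D), I use $G$-equivariance of $\mu_1$ and invariance of the pairing to reduce the only surviving piece $\int_Y\langle\mu_1(\partial_3\tilde\Gamma_1^{g\tilde h}),\tilde\Gamma_2^{g\tilde h}\rangle$ to $\int_Y\langle\mu_1(\partial_3\tilde\Gamma_1),\tilde\Gamma_2\rangle$; the Peiffer-generated obstruction $\langle K_3,[\tilde\Gamma_1^{g\tilde h},\tilde\Gamma_2^{g\tilde h}]_\h\rangle$ vanishes because $G$ acts by automorphisms and $\t\subset\h$ is Abelian. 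The remaining integral dies via a 5d-style manipulation: IBP plus pairing symmetry makes its $(1\leftrightarrow 2)$-symmetric combination a total $\partial_3$-derivative, while substituting $\mu_1(\tilde\Gamma_i) = -\tilde h^{-1}\partial_i\tilde h$ and invoking the Maurer--Cartan identity for $\tilde h^{-1}\mathrm{d}\tilde h$ together with the fake-flatness constraint reduces the antisymmetric combination to $\int_Y\langle \tilde h^{-1}\partial_3\tilde h,[\tilde\Gamma_1,\tilde\Gamma_2]_\h\rangle$, which is zero by the Abelian hypothesis, precisely as in \eqref{defectcomputation}.

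The residue (A) is the main remaining obstacle. I rewrite $\partial_3 k_i = \partial_i k_3 - [k_3,k_i]$ using Maurer--Cartan for $k$, perform IBP in $x_1,x_2$, and then invoke the $G$-transported fake-flatness $\partial_1\tilde\Gamma_2^{g\tilde h}-\partial_2\tilde\Gamma_1^{g\tilde h} = \mu_2(k_1,\tilde\Gamma_2^{g\tilde h})-\mu_2(k_2,\tilde\Gamma_1^{g\tilde h})$, which follows from \eqref{ec:exgtatzero} once Abelian-ness of $\t$ eliminates $[\tilde\Gamma^{g\tilde h}_1,\tilde\Gamma^{g\tilde h}_2]_\h$. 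The resulting expression closes onto itself, yielding $A = -A$ modulo boundary and hence $A=0$. Commutativity is then immediate: tracking $(g,\Theta)$ through the two orderings shows that both send $g\mapsto h^{-1}g\tilde h$ and $\Theta\mapsto \tilde h^{-1}\rhd\Theta + \tilde\Gamma - \tilde h^{-1}g^{-1}h\rhd\Gamma$, since $h$ acts on the left of $g$ while $\tilde h$ acts on the right, and the shift $(h^{-1}g)^{-1}\rhd\Gamma = \tilde h^{-1}g^{-1}h\rhd\Gamma$ is symmetric in the order of composition. The whole argument hinges critically on the Abelian hypothesis $\tilde\Gamma\in\t$, which simultaneously kills (A) and (D) exactly in the way its 5d counterpart does in \eqref{defectcomputation}.
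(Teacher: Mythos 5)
Your proof is correct and follows essentially the same route as the paper's: direct substitution using the constraints \eqref{ec:exgtatzero}, \eqref{ec:extgtatinfty}, reduction of the right-action variation to the residual term $\int_Y\langle\mu_1(\partial_3\tilde\Gamma_1),\tilde\Gamma_2\rangle$ killed by the \eqref{defectcomputation} argument under the Abelian hypothesis, and the same explicit check of commutativity. The only difference is one of detail: you spell out the cancellation of the $\langle\partial_3 k,\tilde\Gamma^{g\tilde h}\rangle$ cross terms and the stripping of the $g\tilde h$-dressing, which the paper compresses into "a simple computation" --- and your treatment of these steps checks out.
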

\noindent In other words, \eqref{3daction} is invariant under the action of the derived 2-group $\mathsf{D}\L_{3d} = \mathsf{D}\mathring{\L}_{3d}^R\times \mathsf{D}\L_{3d}^L$.
\begin{proof}
    A simple computation using \eqref{ec:extgtatinfty} shows that the action \eqref{3daction} is invariant under a left-action $(h,\Gamma)\in\mathsf{D}\L_{3d}^L$. For the right-action $(\tilde h,\tilde\Gamma)\in\mathsf{D}\mathring{\L}^R_{3d}$  \eqref{ec:rightaction}, one can show using \eqref{ec:exgtatzero} that the action transforms as
    \begin{equation*}
        S_{3d}[g,\Theta]\rightarrow S_{3d}[g,\Theta] + \int_Y \langle \mu_1 \partial_3 \tilde\Gamma_1,\tilde\Gamma_2 \rangle.
    \end{equation*}
    The same computation as \eqref{defectcomputation} then removes the remaining term. 

    Now consider a combination of right-acting $(\tilde h,\tilde \Gamma)$ and left-acting $(h,\Gamma)$ symmetry transformations. Of course, their actions clearly commute on $g$, thus we focus on $\Theta$. We have 
\begin{align*}
    \Theta&\mapsto \tilde h^{-1}\rhd \Theta + \tilde \Gamma \mapsto  \tilde h^{-1}\rhd (\Theta - (h^{-1}g)^{-1}\rhd \Gamma)) + \tilde \Gamma \\
    &=  \tilde h^{-1}\rhd \Theta - (h^{-1}g \tilde h)^{-1}\rhd \Gamma + \tilde \Gamma. 
\end{align*}
On the other hand, we have
\begin{equation*}
    \Theta\mapsto \Theta - (h^{-1}g)^{-1}\rhd \Gamma \mapsto  \tilde h^{-1}\rhd \Theta + \tilde \Gamma - (h^{-1}g\tilde h)^{-1}\rhd \Gamma,
\end{equation*}
where we noted that $g\mapsto g\tilde h$ under a right-action. These quantities indeed coincide.
\end{proof}

In section \S \ref{dgaffinecurrents} we will study the conserved Noether currents corresponding to these symmetries, and analyze their homotopy Lie algebra structure in detail. In particular, we will prove that these currents in fact form a centrally extended affine Lie 2-algebra, in complete analogy with the chiral currents in the 2d WZW model \cite{KNIZHNIK198483}.


\subsection{Origin of the Symmetries}

Having characterized the symmetries of the 3d action, we are now in the position to examine how these are actually inherited from the symmetries of the five-dimensional theory discussed in \ref{sec:gaugesym5d}. In other words, we will show that the defect symmetries $\mathsf{D}\L_\text{sym}$ defined in \eqref{defectgaugeconstraints} descend to the 3d symmetries $\mathsf{D}\L_{3d}$. 

Recall that in \S \ref{sec:gaugesym5d} we studied the transformation properties of the 5d gauge fields $(A,B)$, whereas in \S \ref{sec:fieldreparam} we studied those of the reparametrisation fields $(A',B';\hat g, \hat \Theta)$. We will call the former {\bf external} 2-gauge transformations to distinguish them from the latter, which we have dubbed {\bf internal}. The action of the external 2-gauge transformations on the reparametrization fields is characterized by the following
\begin{proposition}
\label{prop:externalsym}
    The action of an external 2-gauge transformation $(\hat h,\hat \Gamma)$\footnote{We have slightly modified the notation from $(h,\Gamma)$ to $(\hat h,\hat \Gamma)$ for convenience.} given by \eqref{ec:2gaugetr1} induces an action on the reparametrisation fields $\hat g$ and $\hat \Theta$ given by 
    \begin{equation}
     \label{ec:extgt123}
     \hat g \mapsto \hat g \hat h \,,\quad \hat \Theta \mapsto h^{-1}\rhd \hat \Theta +\hat \Gamma \,,
\end{equation}
    and leaves the fields $A'$ and $B'$ invariant. 
\end{proposition}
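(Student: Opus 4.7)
The main conceptual insight is that the reparametrization formulas \eqref{ec:reparA}--\eqref{ec:reparB} are formally identical to the action of a 2-gauge transformation on $(A',B')$ with parameters $(\hat g,\hat \Theta)$, i.e.
\begin{equation*}
    (A,B) = \big((A')^{(\hat g,\hat \Theta)},(B')^{(\hat g,\hat \Theta)}\big).
\end{equation*}
Under an external 2-gauge transformation with parameters $(\hat h,\hat \Gamma)$, the pair $(A,B)$ maps to $(A,B)^{(\hat h,\hat \Gamma)}$. Since 2-gauge transformations compose via the product in the derived 2-group (c.f. the discussion of closure of the 2-gauge algebra in \S \ref{sec:gaugesym5d}), we may write
\begin{equation*}
    (A,B)^{(\hat h,\hat \Gamma)} = \big((A')^{(\hat g,\hat \Theta)\cdot (\hat h,\hat \Gamma)},(B')^{(\hat g,\hat \Theta)\cdot (\hat h,\hat \Gamma)}\big).
\end{equation*}
The strategy is then to identify the composition $(\hat g,\hat \Theta)\cdot (\hat h,\hat \Gamma)$ in the derived 2-group and recognize it as exactly the transformation claimed in the statement, while keeping $(A',B')$ fixed.

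To identify this composition, I would compute directly. On the $A$-component,
\begin{equation*}
    A^{(\hat h,\hat \Gamma)} = \operatorname{Ad}_{\hat h}^{-1}\!\big(\operatorname{Ad}_{\hat g}^{-1}A' + \hat g^{-1}\mathrm{d}\hat g + \mu_1(\hat \Theta)\big) + \hat h^{-1}\mathrm{d}\hat h + \mu_1(\hat \Gamma),
\end{equation*}
and using the equivariance $\operatorname{Ad}_{\hat h}^{-1}\mu_1(\hat \Theta) = \mu_1(\hat h^{-1}\rhd \hat \Theta)$ together with the Leibniz identity $(\hat g\hat h)^{-1}\mathrm{d}(\hat g\hat h) = \operatorname{Ad}_{\hat h}^{-1}(\hat g^{-1}\mathrm{d}\hat g) + \hat h^{-1}\mathrm{d}\hat h$, this simplifies to
\begin{equation*}
    A^{(\hat h,\hat \Gamma)} = \operatorname{Ad}_{\hat g\hat h}^{-1}A' + (\hat g\hat h)^{-1}\mathrm{d}(\hat g\hat h) + \mu_1\big(\hat h^{-1}\rhd\hat \Theta + \hat \Gamma\big),
\end{equation*}
which is precisely the reparametrization formula \eqref{ec:reparA} with $A'$ unchanged and $(\hat g,\hat \Theta)\mapsto (\hat g\hat h,\hat h^{-1}\rhd \hat \Theta+\hat \Gamma)$. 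The analogous computation for the $B$-component proceeds in the same vein, using the covariance of $F(A)$, the Peiffer identity \eqref{pfeif1}, and the graded Jacobi identities from Definition \ref{lie2alg}, organized exactly as in the cancellations \eqref{ec:Binv1}--\eqref{ec:Binv4} of Proposition \ref{prop:internalsym}.

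The main obstacle is bookkeeping on the $B$-side: verifying that the cross terms $\mu_2(A^{(\hat h,\hat \Gamma)},\hat \Gamma)$ and $\mu_2(A,\hat \Theta)$ combine correctly with $\mathrm{d}\hat\Gamma$ and $\mathrm{d}\hat\Theta$ after relabelling $\hat\Theta \mapsto \hat h^{-1}\rhd \hat \Theta + \hat \Gamma$, in a way that reproduces exactly the reparametrization \eqref{ec:reparB} with $B'$ invariant. All cancellations are dictated by the derived 2-group composition law $(\hat g,\hat \Theta)\cdot(\hat h,\hat \Gamma) = (\hat g\hat h,\hat h^{-1}\rhd\hat \Theta + \hat \Gamma)$, and can alternatively be packaged cleanly by the derived superfield formalism using \eqref{ec:superfieldgauge}: the statement becomes simply that external 2-gauge transformations act on the derived superfield $U$ by right-multiplication $U\mapsto U\hat U$ (with $\hat U$ encoding $(\hat h,\hat \Gamma)$), while leaving the superfield encoding $(A',B')$ untouched.
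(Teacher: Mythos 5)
Your proposal is correct and follows essentially the same route as the paper, whose proof is simply the observation that substituting $\hat g\mapsto \hat g\hat h$, $\hat\Theta\mapsto \hat h^{-1}\rhd\hat\Theta+\hat\Gamma$ into \eqref{ec:reparA}--\eqref{ec:reparB} with $(A',B')$ fixed reproduces $(A^{(\hat h,\hat\Gamma)},B^{(\hat h,\hat\Gamma)})$. Your packaging via the derived 2-group composition law $(\hat g,\hat\Theta)\cdot(\hat h,\hat\Gamma)=(\hat g\hat h,\hat h^{-1}\rhd\hat\Theta+\hat\Gamma)$ is the same verification read in the forward direction, and in fact supplies more of the explicit computation than the paper does.
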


\begin{proof}
    The proof follows by simply replacing \eqref{ec:extgt123} in \eqref{ec:reparA} and \eqref{ec:reparB} while keeping $A'$ and $B'$ unchanged. 
\end{proof}

We summarize the action of both external and internal gauge transformations on the fields in table \ref{symmtab}. The "mixed" column is there to simply emphasize that, as the fields $(\hat g,\hat \Theta)$ transforms under both internal and external transformations (propositions \ref{prop:internalsym} and \ref{prop:externalsym}), one must in general perform both {\it simultaneously} in order to keep them fixed.

\begin{table}[h]
    \centering
    \begin{tabular}{c|c|c|c}
         & External & Internal & Mixed \\
         \hline
      $(A,B)$   & $\checkmark$ & $\times$  & $\checkmark$ \\ 
        \hline
        $(A',B')$ & $\times$ & $\checkmark$ & $\checkmark$ \\ 
        \hline
        $(\hat g,\hat \Theta)$ & $\checkmark$ & $\checkmark$ & $\times$
    \end{tabular}
    \caption{A table summarizing how different types of symmetries act on the reparameterized fields $(A,B) = (A',B';\hat g,\hat\Theta)$. A check mark $\checkmark$ indicates that the fields under consideration transforms non-trivially, while a cross $\times$ indicates that the fields do not transform. } 
    \label{symmtab}
\end{table}

In \S \ref{sec:fieldreparam} we have shown that for each field configuration $(A,B)$ in h2CS, one can use the internal symmetries to find a \textbf{Lax presentation} $(A,B) = (L,H;\hat g,\hat \Theta)$ such that $(L,H)$ has no $\bar z$-components, and that $(\hat g,\hat \Theta)\vert_{z=\infty}=(1,0)$. We then used this Lax presentation to construct the $S_{3d}$ in \S \ref{3dlocalization}. The goal now is to show that this Lax presentation is stable under an external 2-gauge transformation \eqref{ec:2gaugetr1}. For simplicity, we shall work with the Lagrangian subspace $\mathcal{L}_{\mathrm{d}x_3}$ \eqref{ec:bconditions1}, but our computations can be generalized.

Recall that in \S \ref{3dlocalization}, using the equations of motion and the boundary conditions, we have argued that $L_3,H_{31},H_{32}=0$ are trivial globally. Hence, we look for an external gauge transformation $(h,\Gamma)$ which preserves these conditions, together with $(\hat g,\hat \Theta)\vert_{z=\infty}=(1,0)$. Following table \ref{symmtab} there are two different cases: 

\begin{enumerate}
    \item {\bf Case 1:} As the external symmetries will transform $(\hat g,\hat\Theta)$, in order to preserve the constraint at $z=\infty$ we may simply ask  $(\hat h,\hat \Gamma)\vert_{z=\infty}=(1,0)$ to be trivial there. This is equivalent to simply doing a 2-gauge transformation in a neighborhood of $z=0$ which will not affect the fields at $z=\infty$.
    
    \item {\bf Case 2:} Now suppose we take $(\hat h,\hat \Gamma)$ to be generic. In order to preserve $(\hat g,\hat \Theta)\vert_{z=\infty}=(1,0)$, we must then simultaneously perform an internal symmetry $(u,\Lambda)$ to compensate. However, since $(u,\Lambda)$ transforms the reparameterization fields, they must be holomorphic in order to preserve the conditions $L_{\bar z}=0$ and $H_{i\bar z}=0$. As such, they, too, are constants on $\mathbb{C}P^1$ with values equal to
    \begin{equation*}
        u=\hat h\vert_{z=\infty},\qquad \Lambda = \hat\Gamma\vert_{z=\infty}\,.
    \end{equation*}
    Moreover, if we want these transformations to preserve the global condition $L_3,H_{31},H_{32}=0$ of the Lax presentation $(L,H)$ we need
    \begin{align*}
        0&=u^{-1}\partial_3 u + \mu_1\Lambda_3 \\
        0&=\partial_{i}\Lambda_3 - \partial_3\Lambda_{i} + \mu_2(A_{i},\Lambda_3) - \mu_1(A_3,\Lambda_{i}) - [\Lambda_{i},\Lambda_3] \,, \quad i=1,2\,.
    \end{align*}
    
    Consider the second equation. The only object that depends on $\bbC P^1$ is the field $A$. As such, due to the boundary conditions $A_3\vert_{z=\infty}=0$ and $A_{1,2}\vert_{z=0}=0$, we must in fact have
    \begin{equation*}
        \partial_{i}\Lambda_3 - \partial_3\Lambda_{i} - [\Lambda_{i},\Lambda_3] =- F_{i3}(-\Lambda) = 0 \,,\quad i=1,2
    \end{equation*}    
    everywhere, where $F(\Lambda) = \rd\Lambda + \tfrac{1}{2}[\Lambda
    ,\Lambda]$. On the other hand, the boundary conditions satisfied by $(\hat h,\hat \Gamma)\vert_{z=\infty}$ read
    \begin{equation}
    0=\partial_{i}\hat\Gamma_3 - \partial_3\hat\Gamma_{i} + \mu_2(A_{i},\hat\Gamma_3) - [\hat\Gamma_{i},\hat\Gamma_3]\vert_{z=\infty} = -F_{i3}(-\hat\Gamma) + \mu_2(A_{i},\hat\Gamma_3)\vert_{z=\infty} \,,\quad i=1,2.
    \end{equation}
    But if $(\hat h,\hat\Gamma)\vert_{z=\infty} = (u,\Lambda)$ in order to preserve the condition $(\hat g,\hat \Theta)\vert_{z=\infty} = (1,0)$, then the flatness condition $F(-\Lambda)_{i3}=0$ satisfied by $(u,\Lambda)$ implies
\begin{equation}\label{crowningachievement}
        \mu_2(A_{i},\hat\Gamma_3)\vert_{z=\infty} =0 \,,\quad i=1,2.
    \end{equation}
\end{enumerate}
Clearly, this condition \eqref{crowningachievement} is supplanted by the constraint $\hat\Gamma_3\vert_{z=\infty}=0$ in the defect symmetries $\mathsf{D}\L_\text{sym}$ \eqref{defectgaugeconstraints}. It is therefore a {\it 2-subgroup} of the external 2-gauge transformations that preserve the Lax presentation!

\subsection{3d Symmetries from 5d Symmetries}

Having analyzed the action of both the external and internal 2-gauge transformations on the Lax presentation, we are ready to describe how these symmetries combine to give the symmetries of the three-dimensional action.

Given that we have obtained $S_{3d}$ by localizing h2CS at $z=0$, one may expect that only the component $\mathsf{D}\L_\text{def}^0$ of the derived defect 2-group $\mathsf{D}\L_\text{def} = \mathsf{D}\L_\text{def}^0 \times \mathsf{D}\L_\text{def}^\infty$ contributes to the symmetries of $S_{3d}$. Indeed, \eqref{ec:exgtatzero} coincides exactly with the boundary conditions \eqref{ec:gtbczero} that defines $\mathsf{D}\L_\text{def}^0.$ However, we shall see that the issue is more subtle than that.

Let $\big(\hat h,\hat \Gamma\big)$ be a 2-gauge transformation localized around $z=0$, such that $\big(\hat h,\hat \Gamma\big) = (1,0)$ in a small open neighborhood around $z=\infty$ (cf. "Case 1" analyzed above). Provided such a finite 2-gauge transformation satisfies the boundary condition \eqref{ec:gtbczero}, we can take a limit $\big(\hat h,\hat\Gamma\big)\xrightarrow[z\rightarrow 0]{} \big(\tilde h,\tilde \Gamma\big)$ to arrive at a symmetry transformation which is precisely \eqref{ec:rightaction}. The characterization \eqref{defectgaugeconstraints} then tells us that $\tilde \Gamma_1,\tilde \Gamma_2$ are valued in the maximal Abelian subaglebra $\t\subset\h$ in order to preserve the bulk h2CS action. 

\medskip

Now where does the left-acting symmetries \eqref{ec:leftaction} come from? To answer this question, we turn our attention to those 2-gauge transformations $(\hat h',\hat \Gamma')$ that are localized around $z=\infty$ and satisfying the boundary conditions \eqref{ec:gtbcinfty}. As we have already mentioned in the "Case 2" analysis above, the subtlety here is that such transformations must preserve the values of the bulk fields $(\hat g,\hat \Theta)$ at $z=\infty$. If $(\hat h',\hat \Gamma')=(1,0)$ is trivial in a small neighborhood around $z=0$, this forces us to simultaneously perform an {\it internal} gauge transformation in order to compensate.

We thus perform an external and internal transformation simultaneously 
\begin{align}
    \hat g &\mapsto u^{-1}\hat g \hat h'\\
    \hat \Theta &\mapsto  \hat h'^{-1}\rhd \hat\Theta - (u^{-1}\hat g\hat h')^{-1}\rhd\Lambda + \hat \Gamma'\,. 
\end{align}
At the puncture $z=\infty$, to preserve  $\hat g|_{z=\infty}=1$ we must fix $u=\hat h'\vert_{z=\infty}$. This choice of $u$ is $\mathbb{C}P^1$-independent and compatible with preserving the condition $L_{\bar z}=0$. Now, if we evaluate at the puncture $z=0$, recalling $\hat h'|_{z=0}=1$,  we find
\begin{equation}
    \hat g \mapsto u^{-1} \hat g|_{z=0} = \hat h'^{-1}|_{z=\infty} g
\end{equation}
where we have used the fact that $u=\hat h'|_{z=\infty}$. Similarly, at the puncture $z=\infty$ we have for $\hat \Theta$
\begin{align}
    \hat \Theta\vert_{z=\infty}&\to \hat h'^{-1}|_{z=\infty}\rhd \hat \Theta\vert_{z=\infty} - (u^{-1}\hat g\vert_{z=\infty}\hat h'\vert_{z=\infty})^{-1}\rhd\Lambda  +\hat \Gamma'\vert_{z=\infty} \nonumber\\
    & = - \Lambda + \hat \Gamma'\vert_{z=\infty}. 
\end{align}
Thus, we conclude that we must take $\Lambda =\hat \Gamma'|_{z=\infty}$ which, again, is a $\mathbb{C}P^1$-independent choice, compatible with preserving the condition $H_{i\bar z}=0$. Now, if we evaluate at $z=0$ we find, recalling again $(\hat h',\hat \Gamma')\vert_{z=0} = (1,0)$ by hypothesis,
\begin{align}
    \hat \Theta\vert_{z=0}&\to \hat h'^{-1}|_{z=0}\rhd \hat \Theta\vert_{z=0} - (u^{-1}|_{z=0}\hat g\vert_{z=0}\hat h'\vert_{z=0})^{-1}\rhd\Lambda + \hat \Gamma'\vert_{z=0} \nonumber\\
    & =\Theta - (u^{-1} g)^{-1}\rhd\Lambda = \Theta - (\hat h'\vert_{z=\infty}^{-1}g)^{-1}\rhd \hat \Gamma'\vert_{z=\infty},\nonumber
\end{align}
which indeed coincides with the left-action \eqref{ec:leftaction} such that $(\hat h',\hat\Gamma')\xrightarrow[z\rightarrow\infty]{} (h,\Gamma)$. To summarize, we have the following.

\begin{theorem}\label{descending}
    Let $-_\perp: \Omega^1(Y)\rightarrow\Omega^1_\perp(Y)$ denote the map that projects out the $\mathrm{d}x_3$-components of the 1-forms on $Y$. The 5d defect symmetries $\mathsf{D}\L_\text{sym}$ descends to symmetries $\mathsf{D}\L_{3d}$ of the 3d theory \eqref{3daction} through the diagram
\[\mathsf{D}\L_\text{sym} \cong\begin{tikzcd}
	{\mathsf{D}\L_\text{sym}^0} & {\mathsf{D}(\L_\text{sym}^0)_\perp} & {\mathsf{D}\mathring{\L}^R_{3d}} \\
	{\mathsf{D}\L_\text{sym}^\infty} & {\mathsf{D}(\L_\text{sym}^\infty)_\perp} & {\mathsf{D}{\L}^L_{3d}}
	\arrow["{-_\perp}", from=1-1, to=1-2]
	\arrow["\times"{marking, allow upside down}, draw=none, from=1-1, to=2-1]
	\arrow["{z\rightarrow 0}", from=1-2, to=1-3]
	\arrow["\times"{description}, draw=none, from=1-2, to=2-2]
	\arrow["\times"{description}, draw=none, from=1-3, to=2-3]
	\arrow["{-_\perp}", from=2-1, to=2-2]
	\arrow["z\rightarrow\infty", from=2-2, to=2-3]
\end{tikzcd} = \mathsf{D}\L_{3d} \]
\end{theorem}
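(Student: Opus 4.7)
The plan is to formalize the construction already outlined in the discussion preceding the theorem, by treating the two factors $\mathsf{D}\L_\text{sym}^0$ and $\mathsf{D}\L_\text{sym}^\infty$ of the defect symmetries independently. This independence is justified because elements of $\mathsf{D}\L_\text{sym}$ are of archipelago type, so they are supported in disjoint neighborhoods of the two punctures on $\mathbb{C}P^1$; hence the right- and left-acting symmetries of $S_{3d}$ will automatically commute. In both cases, the strategy is (i) identify the correct 5d transformation (possibly together with a compensating internal one), (ii) project out the $\mathrm{d}x_3$-leg that the 3d theory is insensitive to, and (iii) evaluate at the puncture to obtain a 3d symmetry parameter.

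For the top row, take $(\hat h,\hat\Gamma)\in\mathsf{D}\L_\text{sym}^0$ purely external and archipelago around $z=0$. Since it is trivial near $z=\infty$, it preserves the gauge-fixing $(\hat g,\hat\Theta)|_{z=\infty}=(1,0)$ without any compensating internal transformation (this is Case~1 of the analysis above the theorem). Projecting out the $\mathrm{d}x_3$-components is harmless because the 3d action \eqref{3daction} is independent of $\Theta_3$ and $\Gamma_3$, which places us in $\mathsf{D}(\L_\text{sym}^0)_\perp$. Setting $\tilde h=\hat h|_{z=0}$ and $\tilde\Gamma=\hat\Gamma|_{z=0}$, the boundary constraints \eqref{ec:gtbczero} coincide verbatim with the conditions \eqref{ec:exgtatzero} defining $\mathsf{D}\L_{3d}^R$, and the additional constraint $\tilde\Gamma_1,\tilde\Gamma_2\in\Omega^1_\perp(Y)\otimes\t$ from \eqref{defectgaugeconstraints} lands us in $\mathsf{D}\mathring{\L}_{3d}^R$. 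Invariance of $S_{3d}$ under this right-action then follows from the same computation as \eqref{defectcomputation}.

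For the bottom row, take $(\hat h',\hat\Gamma')\in\mathsf{D}\L_\text{sym}^\infty$ archipelago around $z=\infty$. Since this would displace $(\hat g,\hat\Theta)|_{z=\infty}$, we must compensate with an internal transformation $(u,\Lambda)$. The key step is to solve for $(u,\Lambda)$ exactly as in Case~2 preceding the theorem: taking $u=\hat h'|_{z=\infty}$ and $\Lambda=\hat\Gamma'|_{z=\infty}$, both $\mathbb{C}P^1$-independent, simultaneously preserves the Lax conditions $L_{\bar z}=0$, $H_{i\bar z}=0$, and the global vanishing $L_3=H_{31}=H_{32}=0$. The last point is what makes this case subtle, and is where the proof must invoke the defect constraint $\hat\Gamma'_3|_{z=\infty}=0$ from \eqref{defectgaugeconstraints}: it subsumes the obstruction \eqref{crowningachievement}, thereby guaranteeing compatibility. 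With $(u,\Lambda)$ so fixed, evaluating the combined transformation at $z=0$ produces exactly the left-action \eqref{ec:leftaction} with $h=\hat h'|_{z=\infty}$ and $\Gamma=\hat\Gamma'|_{z=\infty}$. Projecting out $\mathrm{d}x_3$ and taking the limit $z\to\infty$, the boundary constraints \eqref{ec:gtbcinfty} at $z=\infty$ (with $\hat\Gamma'_3|_{z=\infty}=0$ substituted) reduce to $h^{-1}\partial_3 h=0$ and $\partial_3\Gamma_i=0$ for $i=1,2$, which are precisely the conditions \eqref{ec:extgtatinfty} defining $\mathsf{D}\L_{3d}^L$.

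The main obstacle I anticipate is the bottom-row argument: one must carefully verify that the chosen compensating internal transformation is well-defined (i.e.\ consistent with all the partial gauge-fixing conditions simultaneously) and that the extra constraint $\hat\Gamma'_3|_{z=\infty}=0$ really does trivialize the obstruction \eqref{crowningachievement} without requiring further restrictions on $A$. Once this is checked, commutativity of the two factors $\mathsf{D}\mathring{\L}_{3d}^R\times \mathsf{D}\L_{3d}^L$ is automatic from the disjointness of the supports of the archipelago data around $z=0,\infty$, and surjectivity of the two descent maps follows by the patching argument with bump functions along the radial coordinates that was already used to identify $\mathsf{D}\L_\text{sym}\cong\mathsf{D}\L_\text{sym}^0\times\mathsf{D}\L_\text{sym}^\infty$ in \S\ref{sec:gaugesym5d}.
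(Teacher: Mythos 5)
Your proposal is correct and follows essentially the same route as the paper: the theorem is a summary of the preceding ``Case 1/Case 2'' analysis, and you reproduce it faithfully --- external archipelago data at $z=0$ descending to $\mathsf{D}\mathring{\L}^R_{3d}$ via \eqref{ec:gtbczero} and the Abelian constraint, and data at $z=\infty$ requiring the compensating internal transformation $(u,\Lambda)=(\hat h'|_{z=\infty},\hat\Gamma'|_{z=\infty})$ with $\hat\Gamma'_3|_{z=\infty}=0$ subsuming \eqref{crowningachievement}, descending to $\mathsf{D}\L^L_{3d}$. The only minor caveat is that commutativity of the two descended actions is not quite ``automatic'' from disjoint supports on $\mathbb{C}P^1$ (both act on the same fields $(g,\Theta)$ at $z=0$); it is the explicit check on $\Theta$ in the Lemma of \S\ref{residuals} that settles it, which you may simply cite.
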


We emphasize once again that our analysis above holds for any member $\mathcal{L}_\ell$ in the $S^2$-family of Lagrangian subspaces, hence there is a map
\begin{equation*}
    \mathsf{D}\L_\text{sym}(\ell)\rightarrow \mathsf{D}\L_{3d}(\ell)
\end{equation*}
that associates a 5d defect symmetry to a 3d global symmetry for each $\ell\in S^2$.

\paragraph{Misaligned chirality.} To conclude this section, we mention briefly some points of interest when the THF $\mathrm{d}x_3$ of $Y$ does {\it not} align with the chirality. WLOG suppose we take, say, $\ell = w$. In this case, the boundary conditions at $z=0$ would read $$\tilde h^{-1}\partial_{\bar w}\tilde h+\mu_1(\tilde\Gamma_{\bar w})\vert_{z=0}=0,\qquad \tilde h^{-1}\partial_3 \tilde h+\mu_1(\tilde \Gamma_3)\vert_{z=0}=0,\qquad \partial_{\bar w} \tilde \Gamma_3 - \partial_3\tilde \Gamma_{\bar w} - [\tilde\Gamma_{\bar w},\tilde \Gamma_3]\vert_{z=0}=0,$$ while those at $z=\infty$ would read $$h^{-1}\partial_wh+\mu_1\Gamma_w\vert_{z=\infty}=0,\qquad \partial_w\Gamma_{\bar w}\vert_{z=\infty} = \partial_w\Gamma_3\vert_{z=\infty} = 0.$$ 

Here, we see that if $\mu_1=0$, then the boundary conditions at $z=0$ forces $\iota_M^*h\vert_{z=0}$ to merely be {\it holomorphic}, rather than being constant. Similarly, the boundary conditions at $z=\infty$ states that $\iota_M^*\Gamma_w\vert_{z=\infty}$ is \textit{antiholomorphic}. The charges associated to these symmetries therefore acquire holomorphicity properties in the misaligned case. This echoes the statement of Hartog's theorem \cite{Alfonsi_2023}, which forces the negative modes/anti-holomorphic charges in higher-dimensional chiral currents to be split from the positive/holomorphic ones.

\section{Conservation and bordism invariance of the 2-holonomies} \label{sec:holonomies}

In \S \ref{3dift} we constructed a three-dimensional action, and we proved in \textbf{Proposition} \ref{extended2flat} that it's equations of motion are equivalent to the fake- and 2-flatness of $(L,H)$. The aim of this section is to show that this allows for the construction of conserved quantities. 

As the discussion will become quite technical, we begin with a brief informal reminder of how this works in the two-dimensional setting. Consider a theory defined on a cylinder $\Sigma = S^1\times \mathbb{R}$, and assume that its equations of motion can be recast in terms of a flatness equation for some connection $L=L_t \rd t + L_x \rd x$. Conserved quantities can be constructed as follows.

Given a loop $\gamma$ on $\Sigma$, the 1-holonomy of the connection $L$, which is the solution to the parallel transport along $\gamma$, is given by the path-ordered exponential 
\begin{equation}
\label{ec:1holonomy}
    W_\gamma = P \exp \left(-\int_\gamma L\right) \,.
\end{equation}
The flatness of $L$ implies that the 1-holonomy is independent of the path, as long as the endpoints are kept fixed. Formally, we say that it depends solely on the homotopy class relative base point. Intuitively, this can be understood, in the Abelian case, through Stokes theorem. Indeed, if $\gamma_0$ and $\gamma_1$ are two paths with the same endpoints and $\Gamma$ is the surface defined by the smooth homotopy from $\gamma_0$ to $\gamma_1$ we have
\begin{equation}
    \int_{\gamma_0} L - \int_{\gamma_1}L = \int_\Gamma \mathrm{d}L = 0 \quad \Longrightarrow \quad W_{\gamma_0} = W_{\gamma_1}\,.
\end{equation}
We can use this invariance to construct conserved quantities. Indeed, let us consider at a time $t_0$ a loop $\gamma_0$ starting and ending at $x_0$, and at a later time $t_1$ a loop $\gamma_1$ starting and ending at $x_0$ as well. Note that since we are at two different time slices, $\gamma_0$ and $\gamma_1$ are not homotopical relative base point and thus $W_{\gamma_0}\neq W_{\gamma_1}$. However, we can  consider a path $\gamma_{t}$ connecting $(x_0,t_0)$ with $(x_0,t_1)$, such that the path concatenation $\gamma_t\ast\gamma_0\ast \gamma_t^{-1}$ is a loop based at $(x_0,t_1)$, see fig. \ref{fig:pathconc}.  

\begin{figure}[h]
    \centering
    \begin{tikzpicture}
        \draw[thick] (0, 1) ellipse (2cm and 0.7cm);

        \draw[thick] (0, -1) ellipse (2cm and 0.7cm);

        \draw[thick, purple] (-2, -1) -- (-2, 1);
        \draw[thick, purple, ->] (-2, -1) -- (-2, 0);  

        \node at (2.5, 1) {$\gamma_1$};
        \node at (2.5, -1) {$\gamma_0$};
        \node[purple] at (-2.3, 0) {$\gamma_t$};

        \node at (-2.7, 1) {$(x_0,t_1)$};
        \node at (-2.7, -1) {$(x_0,t_0)$};

        \filldraw[purple] (-2, 1) circle (2pt);
        \filldraw[purple] (-2, -1) circle (2pt);
    \end{tikzpicture}
    \caption{Schematically, we may understand the path concatenation as follows. We start at $(x_0,t_1)$ we go back to $(x_0,t_0)$ with $\gamma_t^{-1}$, then we go around $(x_0,t_0)$ with $\gamma_0$ and then we go back to $(x_0,t_1)$ with $\gamma_t$. This defines a loop based at $(x_0,t_1)$.}
    \label{fig:pathconc}
\end{figure}
Given that $\gamma_t\ast\gamma_0\ast \gamma_t^{-1}$ and $\gamma_1$ are homotopical relative base point, flatness of the connection implies $W_{\gamma_t\ast\gamma_{0}\ast \gamma_t^{-1}} = W_{\gamma_1}$. Moreover, 1-holonomies are multiplicative under path concatenation, namely, 
\begin{equation}
W_{\gamma_t\ast\gamma_{0}\ast \gamma_t^{-1}}=  W_{\gamma_t}W_{\gamma_0}W_{\gamma_t}^{-1}= W_{\gamma_1}\,.
\end{equation}
Hence, for each invariant character $\chi_k$ of the group, we find that invariance of $\chi_k$ implies
\begin{equation}
    \chi_k (W_{\gamma_0}) = \chi_k (W_{\gamma_1})\,.
\end{equation}
Thus the quantity $M_k = \chi_k (W_{\gamma_0})$ is conserved.



We now wish to generalize this story to the three-dimensional setting. The mathematics involved in this procedure are subtle, but the essence is the same than in the $2$d case. We will construct $2$-holonomies, which are a higher analogue of \eqref{ec:1holonomy}, defined as a surface ordered exponential of an operator which depends on both $L$ and $H$. We will then show that fake and 2-flatness imply that the $2$-holonomy depends solely on the homotopy class relative boundary of the surface, and we will then use this fact to construct conserved quantities.

Thus, we start with our action $S_{3d}$ given in \eqref{covariant3daction}, whose equations of motion \eqref{1flat?} and \eqref{2flat} imply the fake and $2$-flatness of the $2$-connection $(L,H)$, as shown in \textbf{Proposition} \ref{extended2flat}. We will construct the aforementioned $2$-holonomies as parallel transport operators on the \textit{loop space fibration} $\Omega Y \rightarrow Y$ on $Y$, following \cite{Alvarez:1997ma}. However, our treatment differs from theirs in that our 2-dimensional surface holonomies can be non-Abelian, and are sensitive to the boundary data. 

Associated to any flat $\mathbb{G}$-connection are such 2-holonomies that we construct, and we will prove their conservation and homotopy invariance in a general context. Since we know, from {\bf Proposition \ref{extended2flat}}, that the currents $(L,H)$ we have obtained specifically from $S_{3d}$ are flat $\mathbb{G}$-connections, these are sufficient to point towards the integrability of $S_{3d}$. However, under certain circumstances, we can in fact prove a much stronger "invariance" property that these specific currents satisfy.


In the following, we shall first review some of the key homotopical aspects of the theory. Our surface holonomies/2-holonomies are constructed as a parallel transport operator defined on the space $\Omega Y$ of loops $S^1\rightarrow Y$ based at a point $y_0\in Y$ \cite{Alvarez:1997ma}. These will then be put together to prove bordism invariance in the aligned case.

\subsection{Surface holonomies from connections on loop spaces}\label{alvarez}
Given a 2-connection $(L,H)$, we can define $\mathbb{G}$-valued integrable charges $(V,W)$ satisfying the following parallel transport equations 
\begin{align}
    0 &= \frac{\rd W}{\rd t} + L_i \frac{\rd\gamma^i}{\rd t}W \implies  W = P\exp\left(-\int_\gamma L\right) \label{paratrans1}\\ 
    0&= \frac{\rd V}{\rd\tau} + \left(\int_0^1 \mathrm{d}t \,\mathcal{A}_i(t)  \frac{\rd\gamma'^i(t)}{\rd\tau}\right)V\implies V = P\exp \left(-\int_{\gamma'}\cA\right)\label{paratrans2},
\end{align}
where $\gamma: [0,1]\rightarrow Y$ is a path on $Y$ based at $y_0\in Y$ and $\gamma':[0,1]\rightarrow PY$ is a \textit{path on path space} based at the constant path $\gamma_0\in PY$ at $y_0$\footnote{This means that $\gamma_0(t)=y_0$ for all $y\in[0,1]$.}. These $(y_0,\gamma_0)$ shall be the initial data for our parallel transport equations \eqref{paratrans1}, \eqref{paratrans2}. 

Geometrically, the first equation describes the usual parallel transport on $Y$, while the second describes a parallel transport over a path $\gamma'$ in the {\it path space} $PY$ of $Y$. The quantities $\cA_i(t)$ are to be understood as the components of a connection on $PY$, defined locally (at a point $\gamma \in PY$) by the formula
\begin{equation}\label{surfacetransport}
   \cA = \int_0^1 \mathrm{d}t\, \mathcal{A}_i(t)\, \delta \gamma^i(t)= \int_0^1\mathrm{d}t\, W_\gamma^{-1}\rhd H_{ij}(t)\,\dot \gamma^j \,\delta \gamma^i(t)\,.
\end{equation}
Its holonomy 
\begin{equation}
    P\exp\left(-\int_{\gamma'}\cA\right) = S\exp\left(-\int_\Sigma (W^{-1}\rhd H)\right)
\end{equation}
serves as the definition of the surface-ordered exponential \cite{Yekuteli:2015}. The above parallel transport equations \eqref{paratrans1}, \eqref{paratrans2} are inspired from \cite{Alvarez:1997ma}, but notably, our approach does not enforce the 1-flatness of $L$. As we shall see, this is necessary for our 2-holonomies to keep track of its boundaries.


\subsubsection{Surface holonomies}

We now describe the construction of surface holonomies presented in \cite{Alvarez:1997ma}. Let $P\rightarrow Y$ be a principal $G$-bundle. We can induce a $G$-bundle $\mathcal{P}=\pi^*P\rightarrow PX$ on the path space simply by pulling-back $P$ along the path space fibration $\pi: PY\rightarrow Y$, sending a path $\gamma$ to its endpoint $\gamma(1)$. In \cite{Alvarez:1997ma}, it was shown that a $G$-connection $\mathcal{A}\in \Omega^1(PY)\otimes\g$ of $\mathcal{P}$ can be constructed from a $\g$-valued 2-form $B$ on $Y$ transforming under the adjoint representation of $G$. 

This construction is given as follows. Let $\gamma\in PY$ be a path and let $W_\gamma$ denote the parallel transport operator defined by a $G$-connection $A\in\Omega^1(Y)\otimes\g$ on $Y$ from $y_0$ to a point on the path $\gamma$. The Lie algebra valued 1-form
\begin{equation}
    \mathcal{A} =\int_0^1 \rd t\,\, (\operatorname{Ad}_{W_{\gamma(t)}}^{-1}B_{ij}(\gamma(t))) \dot{\gamma}^i\delta\gamma^j(t)  
\label{pathconn}
\end{equation}
has the correct transition laws, and hence serves as a local $G$-connection on $PY$. Here, $\delta \gamma^j(t)$ forms a basis for the cotangent space $T^*_\gamma PY$ at $\gamma\in PY$ and can be understood as the lift of the usual basis $\rd x^i$ on $T^*Y$ along the given path $\gamma$. The curvature of this $G$-connection $\mathcal{A}$ is thus computed as $\mathcal{F} = \delta\mathcal{A} + \frac{1}{2}[\mathcal{A},\mathcal{A}]$, which in terms of the 2-form $B$ reads  \cite{Alvarez:1997ma}
\begin{align}
    \mathcal{F} &=\int_0^1 \rd t\, \operatorname{Ad}_{W_{\gamma(t)}}^{-1}(\rd_AB)_{ijk} \dot{\gamma}^i\delta\gamma^j\wedge \delta\gamma^k \nonumber \\
    &\qquad -~\int_0^1\rd t\int_0^1\rd\tau  \big[\operatorname{Ad}_{W_{\gamma(t)}}^{-1}F_{ik}(\gamma(t)), \operatorname{Ad}_{W_{\gamma(\tau)}}^{-1} B_{jm}(\gamma(\tau))\big] \frac{\rd \gamma^i}{\rd t} \frac{\rd\gamma^j}{\rd\tau}\delta\gamma^k\wedge\delta\gamma^m \nonumber\\
    &\qquad +~\int_0^1\rd t\int_0^1\rd\tau  \big[\operatorname{Ad}_{W_{\gamma(t)}}^{-1}B_{ik}(\gamma(t)), \operatorname{Ad}_{W_{\gamma(\tau)}}^{-1}B_{jm}(\gamma(\tau))\big] \frac{\rd\gamma^i}{\rd t} \frac{\rd\gamma^j}{\rd\tau}\delta\gamma^k\wedge\delta\gamma^m. \label{pathcurv}
\end{align}
Due to this form of the curvature, we see that the $G$-connection $\mathcal{A}$ is in general not flat. 

We can then pull $\mathcal{A}$ back along the inclusion $\Omega Y\hookrightarrow PY$ to obtain a $G$-connection on the space $\Omega Y$ of loops based at $y_0$. Closed 2-manifolds (ie. those without boundary) smoothly submersed in $Y$ can be equivalently understood as a loop $\tilde \gamma$ in $\Omega Y$ based at the constant path $\gamma_0:t\mapsto y_0$. Let this loop $\tilde\gamma $ be obtained from lifting a loop $\gamma'$ on $Y$, then the argument for the flatness $F=0$ of the $G$-connection $A$ on $Y$ runs as follows: the values of $\mathcal{A}$ at the beginning and end of the loop are related by an adjoint action,
\begin{equation}\label{looptransport}
    \mathcal{A}[\tilde\gamma(1)] = \operatorname{Ad}_{V_{\gamma'}}^{-1}\mathcal{A}[\tilde\gamma(0)]\,,\qquad V_{\gamma'} =P\exp\left(-\int_{\gamma'}\mathcal{A}\right)\,.
\end{equation}
But given $\tilde\gamma$ is a loop, $\tilde\gamma(1) = \tilde\gamma(0)$ hence $\mathcal{A}$ is independent of the parameterization of the surface only when $W_{\gamma'} = 1$, namely $A$ is a flat $G$-connection.

This flatness condition on $A$ is very restrictive, hence we seek a generalization of this formalism of \cite{Alvarez:1997ma} that relaxes it. We shall do this in the following through the theory of \textit{higher groupoids}, and use it to prove several structural theorems about our improved 2-holonomies.



\subsubsection{2-holonomies as a map of 2-groupoids}\label{holonomies}
We now describe the construction of the higher holonomies \eqref{paratrans1}, \eqref{paratrans2} from the perspective of 2-groupoids. We point the reader to the literature \cite{Whitehead:1941,Brown,Baez:2004,Martins:2007,Kapustin:2013uxa,Ang2018,Bullivant:2017qrv,Wagemann+2021,Bochniak_2021} on 2-groups and their relevance to geometry and homotopy theory.

Let $\mathfrak{G}=\mathfrak{h}\rightarrow\mathfrak{g}$ denote a Lie 2-algebra and let $\mathbb{G}=\mathsf{H}\rightarrow G$ denote its corresponding Lie 2-group. Suppose $(L,H)$ is a 2-$\mathbb{G}$-connection. Solutions $(V,W)$ to the parallel transport equations \eqref{paratrans1}, \eqref{paratrans1} define a map $\mathsf{2Hol}$ of 2-groupoids \cite{Kim:2019owc}:
\[\begin{tikzcd}
P^2Y\rightrightarrows PY\rightrightarrows Y  \arrow[d]\\
\mathsf{H}\rtimes G\rightrightarrows G\rightrightarrows \ast
\end{tikzcd}.\]
This amounts to the following. Fix $y_0\in Y$ and we let this denote the initial condition $\gamma(0),\gamma'(0)=y_0$. Let $\Omega Y$ be the space of loops on $Y$ based at $y_0$, and $P\Omega Y$ denote the space of maps $\Sigma:[0,1]\times S^1\rightarrow Y$, equipped with source $\Sigma\mapsto \Sigma(0,-)$ and target $\Sigma\to \Sigma(1,-)$ maps, such that $\Sigma(-,0) = \gamma_0$ is the group unit (ie. the constant loop at $y_0\in Y$). We take $\mathsf{2Hol}(\ast) = y_0$, and consider the pointed 2-subgroupoid $P^2_{y_0} Y$ of the double path groupoid $P^2Y\rightrightarrows PY\rightrightarrows Y$ given by $$P\Omega Y\rightrightarrows \Omega Y\rightrightarrows \{y_0\}.$$ The statement is then that $\mathsf{2Hol}_{y_0} = (V,W): P^2_{y_0}Y \rightarrow \mathbb{G}$ is a 2-group homomorphism.

\begin{definition}
    Let $\Sigma_1,\Sigma_2:[0,1]^2\rightarrow Y$ denote two (smooth) surfaces in $Y$ with the same boundary $\partial\Sigma_1 = \partial\Sigma_2 =\gamma: S^1\rightarrow Y$. A {\bf homotopy relative boundary} is a (smooth) map $F:[0,1]\times [0,1]^2\rightarrow Y$ such that
    \begin{equation*}
        F(0,-) = \Sigma_1,\qquad F(1,-) = \Sigma_2,\qquad \partial F(t,-) \cong \gamma,
    \end{equation*}
    for all $t\in[0,1]$. See fig. \ref{bdyhomotopy}. We denote homotopy classes of surfaces relative boundary by $[[0,1]^2,Y]_\text{/bdy}$.
\end{definition}

\begin{figure}[ht]
    \centering
    \includegraphics[width=0.75\columnwidth]{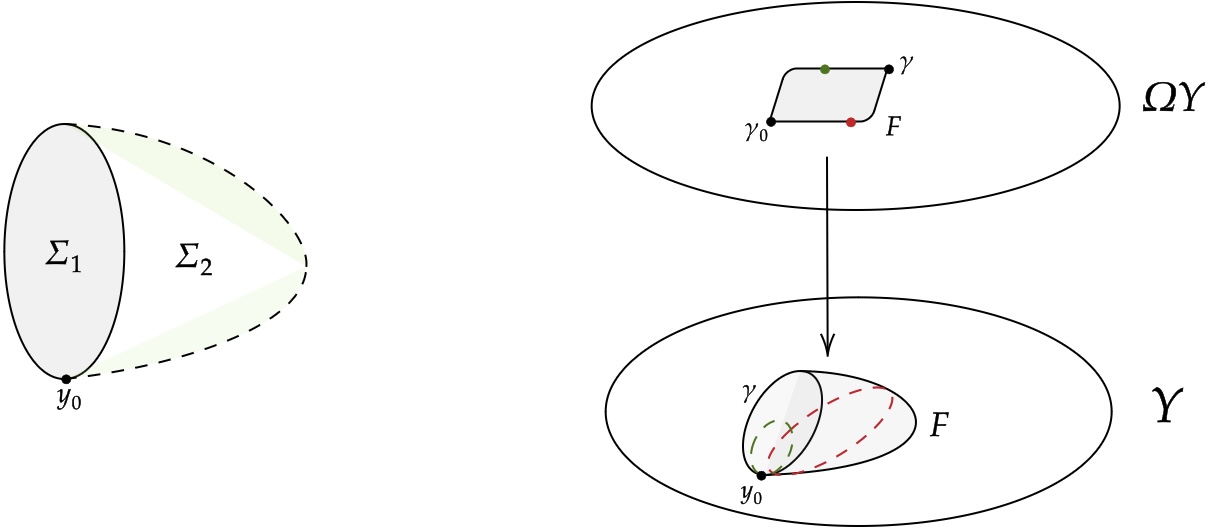}
    \caption{A homotopy relative boundary $F$ between surfaces $\Sigma_1,\Sigma_2\subset Y$ is a 3-manifold with boundary $\Sigma_1\cup_\gamma \overline{\Sigma}_2$. This boundary is the result of $\Sigma_2$ being "capped off" by $\Sigma_1$ at the loop $\gamma=\partial\Sigma_1=\partial\Sigma_2$. It can also be understood as a path homotopy on loop space $\Omega Y$. Recall $\gamma_0$ is the constant loop at $y_0$.}
    \label{bdyhomotopy}
\end{figure}

\begin{theorem}\label{relbdy}
    The fake and 2-flatness conditions
    \begin{equation*}
        \rd L + \frac{1}{2}[L, L] - \mu_1(H) = 0,\qquad \rd H + \mu_2(L,H) = 0
    \end{equation*}
    imply the 2-holonomy $V$ descends to a map $[[0,1]^2,Y]_\text{/bdy}\rightarrow \mathsf{H}$ on homotopy classes relative boundary.
\end{theorem}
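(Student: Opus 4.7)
The plan is to show that the path-space connection $\mathcal{A}$ defined in \eqref{pathconn}, suitably adapted to the 2-group setting by replacing the 2-form $B$ with the $\h$-valued 2-form $H$ and the adjoint action with the crossed-module action $\rhd$, is \emph{flat} on $\Omega Y$ as a direct consequence of the fake- and 2-flatness of $(L,H)$. Once $\mathcal{F}=0$ on loop space is established, the parallel transport $V$ defined by \eqref{paratrans2} depends only on the homotopy class (with fixed endpoints) of the path in $\Omega Y$ along which we transport. Translating back through $\mathsf{2Hol}_{y_0}$, this is exactly the notion of homotopy relative boundary for the corresponding surfaces in $Y$.

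The first step is to adapt Alvarez's curvature formula \eqref{pathcurv} to the 2-group setting. Wherever a bracket pairs one $\g$-factor with one $\h$-factor, one replaces $[-,-]$ by $\mu_2$; wherever both factors lie in $\h$, one uses the induced bracket $[-,-]_\h$. The result takes the schematic form
\begin{align*}
    \mathcal{F} &= \int_0^1\rd t\, W^{-1}\rhd\big(\rd H + \mu_2(L,H)\big)_{ijk}\dot\gamma^i\,\delta\gamma^j\wedge\delta\gamma^k \\
    &\quad - \int_0^1\rd t\int_0^1\rd\tau\,\big\{\mu_2\big(W^{-1}\rhd F(L)_{ik},\,W^{-1}\rhd H_{jm}\big) \\
    &\qquad\qquad - \big[W^{-1}\rhd H_{ik},\,W^{-1}\rhd H_{jm}\big]_\h\big\}\,\dot\gamma^i\dot\gamma^j\,\delta\gamma^k\wedge\delta\gamma^m.
\end{align*}

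Next, I would apply the 2-flatness equation $\rd H + \mu_2(L,H) = 0$ to kill the first integrand outright. For the double integral, the substitution $F(L)=\mu_1(H)$ coming from fake flatness turns the first summand into $\mu_2(\mu_1(H),H)$, which by the Peiffer identity \eqref{pfeif1} equals $[H,H]_\h$ and so cancels against the second summand pointwise in $(t,\tau)$. Therefore $\mathcal{F}=0$ on $\Omega Y$.

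Finally, with $\mathcal{F}=0$, the parallel transport \eqref{paratrans2} along any path $\gamma'$ in $\Omega Y$ depends only on the homotopy class of $\gamma'$ with endpoints fixed at $\gamma_0$ and $\gamma=\partial\Sigma$; this is the standard statement that a flat connection has path-independent holonomy relative endpoints. A path in $\Omega Y$ from $\gamma_0$ to $\gamma$ corresponds precisely to a surface $\Sigma\subset Y$ with $\partial\Sigma=\gamma$, and a homotopy of such paths relative endpoints is precisely a homotopy of surfaces relative boundary in the sense defined above. Hence $V$ descends to a well-defined map $[[0,1]^2,Y]_\text{/bdy}\to\mathsf{H}$. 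The main obstacle I foresee is verifying that Alvarez's original derivation of \eqref{pathcurv}, which assumes a single Lie algebra, really does lift consistently to the Lie 2-algebra setting once one tracks how $W$ acts on $\h$ via $\rhd$; in particular, one needs the equivariance $W^{-1}\rhd\mu_1(H) = \mu_1(W^{-1}\rhd H)$, which follows by integrating the infinitesimal relation $\mu_1(\mu_2(x,y))=[x,\mu_1(y)]$ in \eqref{pfeif1}, for the Peiffer cancellation inside the double integral to go through cleanly at the level of the integrand.
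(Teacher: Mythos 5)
Your proposal is correct and follows essentially the same route as the paper's proof: both lift $(L,H)$ to a connection $\mathcal{A}$ on loop space via the $\rhd$-twisted Alvarez construction, show its curvature is a sum of a $\rd H+\mu_2(L,H)$ term and a term that vanishes once $F(L)-\mu_1(H)=0$ is combined with the Peiffer identity, and then read off homotopy invariance relative boundary from flatness on $\Omega Y$. The only (cosmetic) difference is that the paper first merges the two double-integral terms into a single $\bigl(F(L)-\mu_1(H)\bigr)\rhd H$ factor via Peiffer before imposing fake-flatness, whereas you substitute fake-flatness first and then cancel; the equivariance relation $\operatorname{Ad}_x\mu_1(y)=\mu_1(x\rhd y)$ you flag is exactly what makes both versions go through.
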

\begin{proof}
    We consider a submersed 2-manifold $\Sigma\hookrightarrow Y$ with boundary as a 2-mormphsim $\Sigma\in P\Omega Y$ in the 2-group $P^2_{y_0}Y$. WLOG, we can take $\Sigma$ to have source $\gamma_0$ the constant loop on $y_0$ and target $\gamma=\partial\Sigma$ its boundary loop. We shall consider $\Sigma$ equivalently as a "path of loops" $\gamma'(\tau) = \Sigma(\tau,-) \in \Omega Y$ for each $\tau\in [0,1]$. 
    
    The goal now is to prove that fake- and 2-flatness conditions implies that $V_{\Sigma_1} = V_{\Sigma_2}$ whenever $\Sigma_1\simeq\Sigma_2$. For this, we leverage the construction of the $\mathsf{H}$-connection $\mathcal{A}$ on $PY$. This is defined analogous to \eqref{pathconn}, but instead leveraging the group action $\rhd: G\rightarrow\operatorname{Aut}\mathsf{H}$ such that
    \begin{equation*}
        \mathcal{A} =\int_0^1\mathrm{d}t\, \big(W_{\gamma(t)}^{-1}\rhd H_{ij}(\gamma(t))\big)\,\dot \gamma^i\,\delta \gamma^j\,,\qquad W_\gamma = P\exp\left(-\int_\gamma L\right)\,.
    \end{equation*}
    The holonomy is then constructed as usual $V_{\Sigma_1}= P\exp\left(-\int_{\gamma'_1}\mathcal{A}\right)$ where $\gamma'_1:[0,1]\rightarrow \Omega Y$ is the path on loop space corresponding to $\Sigma_1$. A homotopy $F:\Sigma_1\Rightarrow\Sigma_2$ relative boundary can then be understood as a homotopy $[0,1]^2\rightarrow\Omega Y$ between the paths $\gamma_1',\gamma_2'$ such that $\partial F(s,-) \cong \gamma=\partial\Sigma_1=\partial\Sigma_2$ for each $s\in [0,1]$; see fig. \ref{bdyhomotopy}. 
    
    This homotopy $F$ defines a contractible surface $\Gamma\subset\Omega Y$, or equivalently a closed 3-submanifold $S \subset Y$ whose boundary is the gluing $\overline{\Sigma}_2\cup_\gamma \Sigma_1= \overline{\Sigma}_2\#\Sigma_1$ of the two surfaces $\Sigma_1,\Sigma_2$ along $\gamma$, where $\overline{\Sigma}$ denotes the orientation reversal of a surface $\Sigma$. By usual computations, we have (cf. \S 2 of \cite{Alvarez:1997ma})
    \begin{equation}
    \label{Voflgue}
        V_{\overline{\Sigma}_2\cup_\gamma\Sigma_1} = P\exp\left(-\int_{\gamma_2'^{-1}\ast\gamma'_1}\mathcal{A}\right) = P\exp\left(-\int_\Gamma W^{-1}\rhd \mathcal{F}\right),
    \end{equation}
    where $\mathcal{F}=\delta\mathcal{A}+ \frac{1}{2}[\mathcal{A},\mathcal{A}]$ is the curvature, which reads analogously to \eqref{pathcurv} but now instead has the form
    \begin{align}\label{fakecurve}
        \mathcal{F}  &= \int_0^1 \rd t\, \operatorname{Ad}_{W_{\gamma(t)}}^{-1}(\rd_LH(t))_{ijk}\, \frac{\rd\gamma}{\rd t}^i\delta\gamma^j(t)\wedge \delta\gamma^k(t) \\
        & -~\int_0^1\rd t\int_{0}^1 \rd t' \, \left[\operatorname{Ad}_{W_\gamma(t)}^{-1}(F(L(t))_{ik} - \mu_1(H_{ik}(t)))\right] \rhd (W_{\gamma(t')}^{-1}\rhd H_{jl}(t'))\frac{\rd \gamma^i}{\rd t}\frac{\rd \gamma^j}{\rd t'}\delta \gamma^k(t)\wedge\delta\gamma^{l}(t')
    \end{align}
    where we have used the Peiffer identity. This quantity vanishes precisely when $(L,H)$ satisfies fake- and 2-flatness conditions, whence $V_{\overline{\Sigma}_1\cup_\gamma\Sigma_2} = 1$. By definition, $V_{\overline{\Sigma}_1\#\Sigma_2} = V^{-1}_{\Sigma_1} V_{\Sigma_2}=1$ and hence $V_{\Sigma_1}=V_{\Sigma_2}$ for homotopically equivalent surfaces $\Sigma_1\simeq\Sigma_2$ relative boundary, as desired.

    

\end{proof}

Note homotopies relative boundary between surfaces without boundary reduces to homotopies in the usual sense.

\begin{corollary}
    The 2-holonomy $V$ is consistent with the Eckmann-Hilton argument.
\end{corollary}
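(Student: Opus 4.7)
The plan is to execute the classical Eckmann--Hilton argument inside the 2-groupoid $P^2_{y_0}Y$ and transport the conclusion through the 2-group homomorphism $\mathsf{2Hol}_{y_0} : P^2_{y_0}Y \to \mathbb{G}$. First I would restrict attention to surfaces $\Sigma:[0,1]^2\to Y$ whose source and target 1-morphisms both coincide with the constant loop $\gamma_0$ at $y_0$; these are precisely the 2-endomorphisms of the identity in $\Omega Y$. On such sphere-like surfaces, the path 2-groupoid carries two distinct compositions: the vertical composition $\circ_v$, stacking surfaces along the $\tau$-direction in loop space, and the horizontal composition $\circ_h$, concatenating the underlying loops along $S^1$ inside $Y$. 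Both preserve the boundary condition $\partial\Sigma=\gamma_0$.

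Next, I would invoke \textbf{Theorem \ref{relbdy}}: the fake- and 2-flatness of $(L,H)$ guarantees that $V$ is insensitive to smooth homotopies relative boundary. The standard Eckmann--Hilton sliding move then applies verbatim: given two such surfaces $\Sigma_1,\Sigma_2$, one shrinks each into a small disc and slides them past one another inside the ambient 3-manifold $Y$, yielding an explicit homotopy (rel.\ $\gamma_0$) between $\Sigma_1 \circ_v \Sigma_2$ and $\Sigma_2 \circ_h \Sigma_1$, and likewise between $\Sigma_1\circ_h\Sigma_2$ and $\Sigma_1\circ_v\Sigma_2$. Applying $V$ and using \textbf{Theorem \ref{relbdy}} immediately gives the equality of the two compositions and commutativity $V_{\Sigma_1}V_{\Sigma_2} = V_{\Sigma_2}V_{\Sigma_1}$, which is the content of the Eckmann--Hilton statement in $\mathsf{H}$.

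Finally, I would observe that this conclusion is consistent with the algebraic content of \textbf{Definition \ref{lie2alg}}: since $\partial\Sigma_i=\gamma_0$ has trivial 1-holonomy $W_{\gamma_0}=1$ and, in view of formulas such as \eqref{Voflgue} applied to a cap, $\bar\mu_1(V_{\Sigma_i})$ is controlled by $W_{\partial\Sigma_i}$, the 2-holonomies $V_{\Sigma_i}$ in fact take values in $\operatorname{ker}\bar\mu_1\subset\mathsf{H}$. The Peiffer identity \eqref{pfeif1} then asserts that this subgroup is abelian, so the Eckmann--Hilton commutativity is not merely an external consistency check but is already encoded in the 2-group axioms.

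The main obstacle I anticipate is the honest construction of the sliding isotopy realizing the Eckmann--Hilton swap, keeping $\partial\Sigma=\gamma_0$ fixed throughout; this is possible because $\dim Y = 3$ gives enough room to move 2-discs past one another, but the deformation must be phrased at the level of smooth maps $[0,1]^2\to Y$ rather than embedded surfaces, since our $\Sigma_i$ are only submersed. Once the rel-boundary homotopy is written down, \textbf{Theorem \ref{relbdy}} does all the remaining work.
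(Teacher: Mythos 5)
Your proposal is correct, but your primary route differs from the paper's. You run the classical Eckmann--Hilton sliding move at the level of the domain $[0,1]^2$ (shrinking the two sphere-like surfaces to small discs and sliding them past one another while the boundary stays pinned at the constant loop $\gamma_0$), and then invoke \textbf{Theorem \ref{relbdy}} to transport the resulting homotopy rel boundary through $V$; this yields $V_{\Sigma_1}V_{\Sigma_2}=V_{\Sigma_2}V_{\Sigma_1}$ by multiplicativity of $V$ under vertical composition. The paper never constructs this isotopy: its proof is purely algebraic. It observes that closed surfaces are exactly the kernel of the boundary map $\partial:P\Omega Y\rightarrow\Omega Y$, that the 2-groupoid homomorphism property of $\mathsf{2Hol}$ gives $W\circ\partial=\bar\mu_1\circ V$, hence $\bar\mu_1(V_\Sigma)=W_{\gamma_0}=1$, so $V_\Sigma\in\operatorname{ker}\bar\mu_1$, which is central in $\mathsf{H}$ by the Peiffer identity \eqref{pfeif2} --- and that is the whole proof. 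Your closing ``consistency check'' is in fact precisely the paper's argument (though the structural fact you want is $W\circ\partial=\bar\mu_1\circ V$ rather than \eqref{Voflgue}, which concerns gluing rather than the source/target compatibility). The trade-off: your geometric route establishes commutativity as an honest consequence of homotopy invariance and so makes visible exactly where the three ambient dimensions and \textbf{Theorem \ref{relbdy}} are used, at the cost of having to write down the sliding homotopy carefully for submersed (not embedded) surfaces; the paper's route is shorter and shows the stronger statement that the fusion of closed-surface operators is not merely commutative but lands in the central subgroup $\operatorname{ker}\bar\mu_1$, with no geometry required beyond the 2-groupoid structure of $\mathsf{2Hol}$.
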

\begin{proof}
    The Eckmann-Hilton argument states that charge operators attached to closed submanifolds of codimension larger than one must have commutative fusion rules \cite{Gaiotto:2014kfa}. This comes from the fact that the concatenation of maps $\Sigma_1,\Sigma_2:S^2\rightarrow Y$, given by connected summation $\Sigma_1\#\Sigma_2$, is commutative up to homotopy: there exists a closed contractible 3-submanifold $S\subset Y$ whose boundary is $\partial S = \overline{\Sigma_2\# \Sigma_1} \cup \Sigma_1\#\Sigma_2$. 
    
    The 2-holonomies $V$ are precisely such 2-codimensional operators charged under $\mathsf{H}$, hence we must prove that
    \begin{equation*}
        V_{\Sigma_1}V_{\Sigma_2} = V_{\Sigma_2}V_{\Sigma_1}
    \end{equation*}
    for closed surfaces $\Sigma_1,\Sigma_2\in P\Omega Y$. Toward this, we consider closed submersed 2-submanifolds as loops $\gamma':S^1\rightarrow \Omega Y$ on loop space, where $\gamma'(0) = \gamma'(1)=\gamma_0$ is the constant loop at $y_0$. They are naturally encoded as automorphisms of $\gamma_0$ in the 2-group $P^2_{y_0} Y$, where the constant loop $\gamma_0\in \Omega Y$ is the identity under path concatenation --- in other words, closed 2-submanifolds are precisely the kernel of the boundary map $\partial:P\Omega Y\rightarrow\Omega Y$. Now $\mathsf{2Hol}$ being a map of 2-groupoids means in particular that
    \begin{equation*}
        W\circ\partial = \bar\mu_1\circ V,
    \end{equation*}
    whence on closed 2-submanifolds $\Sigma \in \operatorname{ker}\partial$ the surface holonomym is Abelian,
    \begin{equation*}
        1=W_{\gamma_0} = \bar\mu_1(V_\Sigma)\implies V_\Sigma\in\operatorname{ker}\bar\mu_1\subset\mathsf{H},
    \end{equation*}
     as desired.
    
\end{proof}

We recognize that, in the approach where $\mathcal{A}$ was obtained from pulling-back the $G$-connection on $Y$, \S 3.2.2 of \cite{Alvarez:1997ma} has derived the condition $[B,B]=0$ as part of local integrability. This restricts one to the case where the structure group is a semi-direct product $R[1]\rtimes G$ with an Abelian $G$-module $R$ in degree-(-1), which corresponds precisely to the case $\mu_1=0$ in our setup.


\subsubsection{Whiskering of 2-holonomies}\label{sec:whisker}
Now let us examine the operation of {\bf whiskering} \cite{Baez:2004}. This is an operation in a 2-groupoid by which 1-morphisms are composed with 2-morphisms in order to change the boundary of the 2-morphisms. Geometrically in the double path 2-groupoid $P^2Y$, this can be understood as the gluing of a surface $\Sigma\in P^2 Y$ with a path $\gamma\in PY$, which we consider as an "infinitely thin surface" $1_\gamma\in P^2Y$. The boundary of the composed surface is then $$\partial(\gamma\ast\Sigma) = \gamma\ast\partial\Sigma\ast\gamma^{-1},$$ where $\ast$ is the concatenation of paths in $PY$. This serves to change the base point of $\Sigma$ from $y_0$ to $y_1$; see fig. \ref{whiskerdiag}.

\begin{figure}[ht]
    \centering
    \includegraphics[width=0.75\columnwidth]{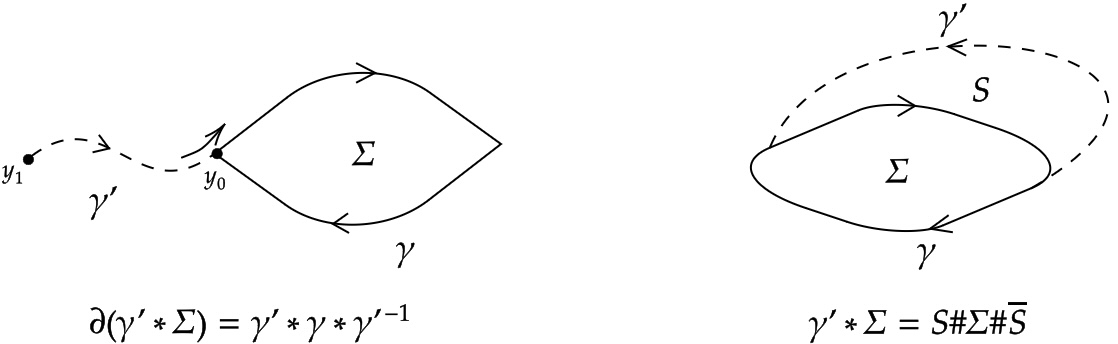}
    \caption{The pictorial representation of the operation of whiskering by $\gamma'$ (dashed). If $\gamma'$ is a path, then it serves to move the base point $y_0$ of the surface $\Sigma$. If $\gamma'$ is a loop bounding $S$, then it serves to change the shape of $\Sigma$. Recall whiskering involves a "walking back".}
    \label{whiskerdiag}
\end{figure}

In the 2-group $\mathbb{G}$, this operation is defined by a group action $\rhd: G\rightarrow\operatorname{Aut}\mathsf{H}$ satisfying the equivariance condition $\bar\mu_1(x\rhd y) = x\bar\mu_1(y)x^{-1}$ for each $x\in G,y\in\mathsf{H}$. The fact that the holonomies $(V,W)$ define a 2-groupoid map $\mathsf{2Hol}: P^2Y\rightarrow B\mathbb{G}$ means that whiskering is preserved: for each path $\gamma\in P Y$ based at the initial condition $y_0$, we have
\begin{equation}\label{whisker}
    W_{\gamma\ast \Sigma} = V_\gamma \rhd W_\Sigma,\qquad \forall ~\Sigma\in P\Omega Y.
\end{equation}
This is the statement that changing the base point of $\Sigma$ from $y_0$ to the endpoint $y_1$ of $\gamma$ amounts to a parallel transport of the holonomies $V$. This structure is called a {\bf balloon} in \cite{Yekuteli:2015}.

For (contractible) loops $\gamma\in\Omega Y$ with a non-empty intersection $\gamma\cap \partial\Sigma\neq \emptyset$, we can use the surface $S$ it encloses to deduce $$\gamma\ast \Sigma = S\#\Sigma\# \overline{S},$$ where we have glued copies of $S$ onto $\Sigma$ along the intersection $\gamma\cap \partial\Sigma$; see fig. \ref{whiskerdiag}. Even though $\gamma$ does not change the base point of $\Sigma$, it changes the shape of the boundary $\partial \Sigma$. Since the holonomy $V$ is only invariant under homotopies \textit{relative boundary}, this induces a non-trivial action
\begin{equation*}
    V_{\gamma\ast \Sigma} = W_{\partial S} \rhd V_\Sigma = \bar\mu_1(V_S)\rhd V_\Sigma = \operatorname{Ad}_{V_S}V_\Sigma,
\end{equation*}
where we have used the Peiffer identity. Further, if $S$ is a {\it closed} 2-submanifold (whose boundary $\partial S = \gamma_0$ just the base point $y_0$ given by the constant loop), then $V_S\in\operatorname{ker}\bar\mu_1$ whence $V_{\gamma\ast\Sigma} = V_\Sigma$. 

This means that, in our setup, the holonomy $V_\Sigma$ indeed {does} depend on the way by which we scan the surface $\Sigma$, in so far as $\Sigma$ has boundary. However, this is consistent with the geometry: a homotopy of $\Sigma$ is {\it not} in general a homotopy relative boundary, and the non-commutativity of the surface holonomies $V_\Sigma$ comes precisely from its boundary.


\subsection{Proof of conservation and bordism invariance of \texorpdfstring{$V$}{V}}\label{integrability}
With the above theory in hand, we are then ready to examine the conservation that arises from fake- and 2-flatness. In \S \ref{holonomies}, we proved that the surface holonomy \eqref{paratrans2} defined a group homomorphism
\begin{equation*}
     [[0,1]^2,Y]_\text{/bdy} \rightarrow \mathsf{H},
\end{equation*}
where $[[0,1]^2,Y]_\text{/bdy}$ denotes the double path 2-groupoid $P^2Y\rightrightarrows PY\rightrightarrows Y$ modulo homotopies relative boundary. This is intimately related to the conservation of surface charges based on $V$. To see this, we follow a geometric argument analogous to that given in \cite{Alvarez:1997ma}. 

Let us introduce a transverse (real or holomorphic) foliation $\Phi$, which determines locally a direction that denotes the "time" $\mathrm{d}u$ along the leaves. The goal is to compare the values of the surface charge $V_{\Sigma_0},V_{\Sigma_1}$ localized on surfaces, with diffeomorphic boundaries $\partial\Sigma_0\cong \partial\Sigma_1$, at two time-slices: $\Sigma_0$ at $u=0$ with base point $y_0 = (m_0,0)\in Y$ and $\Sigma_1$ at $u=1$ with base point $y_1 = (m_1,0)\in Y$. Take $\gamma_u:[0,1]\rightarrow Y$ to be a path along the $u$-direction, connecting the two base points $y_0,y_1$. 

Note here crucially that the fact that our surface holonomy defines invariants of homotopy relative boundary does {\it not} imply $V_{\Sigma_0} = V_{\Sigma_1}$, as $\Sigma_0,\Sigma_1$ have different base points, separated in the $u$-direction. As we have described in \S \ref{sec:whisker}, the way to transport the base point of $\Sigma_0$ to $u=1$ is by whiskering along the path $\gamma_u$,
\begin{equation*}
    \tilde V_{\Sigma_0} = W_{\gamma_u}\rhd V_{\Sigma_0}.
\end{equation*}
Now that both $\tilde V_{\Sigma_0}$ and $V_{\Sigma_1}$ share the same base point, homotopy invariance relative boundary then finally allows us to state
\begin{equation*}
    V_{\Sigma_1} = W_{\gamma_u}\rhd V_{\Sigma_0}.
\end{equation*}

This means that our surface holonomies separated in the $u$-direction differ by a whiskering along $u$. Conserved quantities can therefore be obtained from a certain "trace/character" $\mathcal{X}$ of $V_\Sigma$ that is invariant under whiskering,\footnote{Here we make a technical note that "invariance" is usually only up to higher homotopy when categories are involved, hence $\mathcal{X}(W_{\gamma_u}\rhd V_{\Sigma_0})$ is generally merely "isomorphic" to $ \mathcal{X}(V_{\Sigma_0})$. As far as the authors know, there are at least two such notions of homotopy invariant characters on 2-groups: one is the \textit{graded} character described in \cite{Chen:2023integrable} and another is the \textit{categorical} character \cite{Ganter:2006,Bartlett:2009PhD,Sean:private,Huang:2024}.} 
\begin{equation*}
    \mathcal{X}(V_{\Sigma_1}) = \mathcal{X}(W_{\gamma_u}\rhd V_{\Sigma_0}) = \mathcal{X}(V_{\Sigma_0}).
\end{equation*}
These time-independent \textbf{higher monodromy matrices} 
\begin{equation}
    \mathcal{M}_k([\Sigma]) = \mathcal{X}_k(\mathsf{2Hol}([\Sigma]))\,\label{conservedcharges}
\end{equation}
come labelled by invariant characters $\mathcal{X}_k$ of $\mathbb{G}$, as well as {\it free} homotopy classes (ie. orbits of based homotopy classes under base point-changing whiskering operations) $[\Sigma]$ of surfaces relative boundary in $Y$.

\begin{remark}
    Note that in a similar manner to the 2d WZW construction from CS$_4$ \cite{Costello:2019tri}, the 2-Lax connection we find is independent of the spectral parameter $z\in\bbC$. This prevents us from series expanding the monodromy matrices \eqref{conservedcharges} in powers series in $z$ in order to construct, in this way, an infinite number of independently conserved quantities. Notably, in the case of the 2d WZW model, a spectral parameter dependent Lax connection can be obtained from CS$_4$ theory by performing a suitable limiting procedure \cite{Costello:2019tri}.\footnote{JL thanks Benoit Vicedo for pointing this out.} Starting with the meromorphic $1$-form $\omega=z^{-2}(z-z_0)(z-z_1)\mathrm{d}z$ for fixed $z_0,z_1 \in \mathbb{CP}^1$, together with Dirichlet boundary conditions on the gauge field at the poles, one obtains the Principal Chiral Model with WZ term, with coefficients proportional to $z_0-z_1$ and $z_0+z_1$ respectively. In the $z_0\to 0$ limit, the 2d action becomes the WZW CFT, but remarkably, the Lax connection obtained in this limit is now spectral parameter dependent. We expect that a $\mathbb{C}P^1$-dependent 2-Lax connection can be constructed using a similar limiting procedure, but we leave this for a future work.
\end{remark}



\medskip

In any case, we will now prove in the following that the currents $(L,H)$ in the 3d theory \eqref{covariant3daction}, which are flat $\mathbb{G}$-connections by {\bf Proposition \ref{extended2flat}}, can in fact be made to satisfy a stronger form of "conservation" and topological invariance, called bordism invariance.

\begin{figure}[h]
    \centering
    \includegraphics[width=0.7\columnwidth]{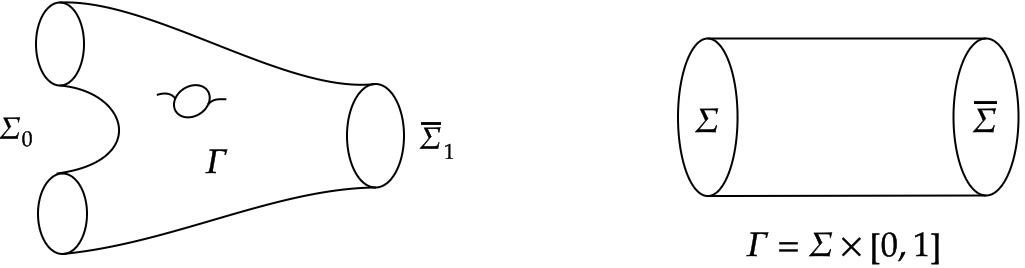}
    \caption{One can in principle define bordisms $\Gamma :\Sigma_0\rightarrow\overline{\Sigma}_1$ with non-trivial topology, or in between surfaces with different number of components, as seen in the left. The right side exhibits the trivial bordism $\Gamma=\Sigma\times [0,1]$ over a surface $\Sigma$.}
    \label{bord}
\end{figure}


\paragraph{Bordism invariance.} By bordism, we mean precisely the following.
\begin{definition}\label{bordbdry}
    A surface $\Sigma\subset Y$ is \textbf{framed} if it is equipped with a trivialization of its normal bundle $N\Sigma \cong TY/T\Sigma$. Let $\Sigma,\Sigma'$ denote two smooth framed 2-manifolds with boundary, a {\bf (smooth) framed open bordism} $\Gamma: \Sigma\rightarrow\Sigma'$ is a smooth 3-manifold $\Gamma$ with boundary (framed diffeomorphic to) $\Sigma\coprod \overline{\Sigma}'$, where $\overline{\Sigma}'$ is the orientation reversal of $\Sigma'$. 
\end{definition}
\noindent This is a stronger statement than the conservation of the holonomies. Indeed, bordism invariance is a global statement: conservation can be thought of as being invariant under "very small" bordisms, which is always trivial $\Gamma \cong \Sigma\times [0,1]$. See fig. \ref{bord}.

\begin{remark}
    Conversely, one can always "shrink the cylinder" in a trivial (open) bordism $\Gamma:\Sigma\rightarrow\Sigma'$ to give a homotopy equivalence $\Sigma\simeq\Sigma'$ (relative boundary). One can then ask if these are the only bordisms that do such a thing; for instance, are \textit{$h$-cobordisms} $\Gamma:\Sigma\rightarrow\Sigma'$, where the inclusions $\Sigma,\Sigma'\hookrightarrow\Gamma$ induce homotopy equivalences, homotopic to the trivial bordism, and hence also give rise to homotopy equivalences? This is the so-called \textbf{smooth $h$-cobordism theorem} \cite{Freedman:1990,Smale:1962,Milnor+1965}, and it turned out to be an extremely difficult problem. At dimension three, it is equivalent to the Poincar{\'e} conjecture.
\end{remark}

Let us first setup the geometry. Let $\Gamma: \Sigma\rightarrow\Sigma'$ denote a bordism between the two surfaces $\Sigma_0,\Sigma_1$, based respectively at $(m_0,0),(m_1,1)\in Y$. The statement of bordism invariance is then given by the condition $V_{\partial\Gamma}=1$. The boundary $\partial\Gamma$ has three pieces: the "bottom cap" $\Sigma_0=\Sigma \times \{0\}$, the "top cap" $\Sigma_1=\Sigma'\times\{1\}$ and the "cylinder tube" $C\cong \partial\Sigma \times\gamma$, hence we must have
\begin{equation}\label{stationarity}
    V_{\overline{\Sigma}_1} V_{C}V_{\Sigma_0} = V^{-1}_{\Sigma_1} V_C V_{\Sigma_0} =(W_\gamma^{-1}\rhd V_{{\Sigma}'}^{-1})V_C V_\Sigma=1,
\end{equation}
where $\overline{\Sigma}$ denotes the orientation reversal of $\Sigma$, and we have used a whiskering along a timelike curve $\gamma$ connecting the base points $(m_0,0),(m'_0,1)$ of the surfaces $\Sigma_0,\Sigma_1$, as in the argument for the conservation of the higher monodromy matrices \eqref{conservedcharges} above. See fig. \ref{fig:bord}.


\begin{figure}[ht]
    \centering
    \includegraphics[width=0.7\columnwidth]{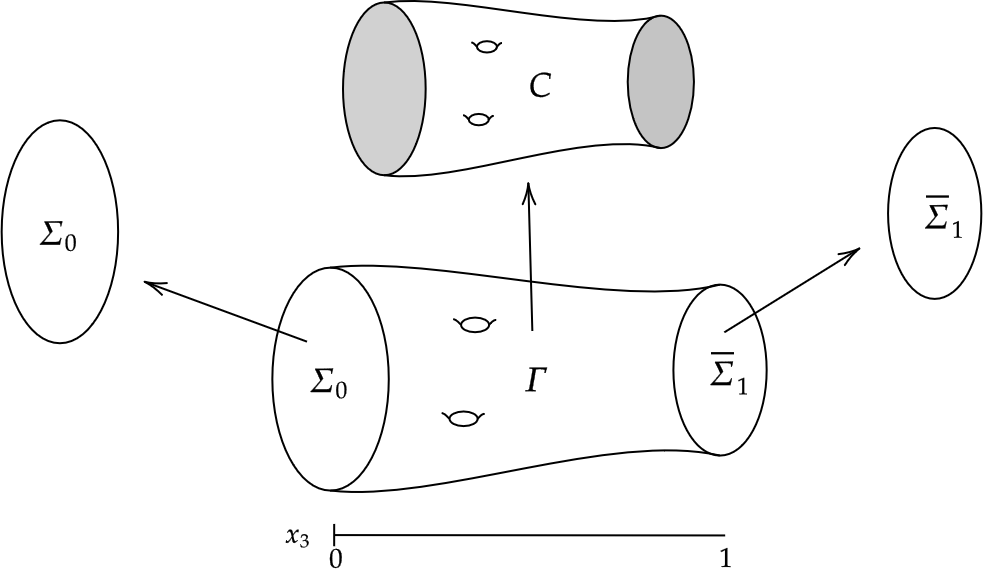}
    \caption{A diagram of a bordism $\Gamma:\Sigma_0\rightarrow \Sigma_1$ and its boundary components $\overline{\Sigma}_1,\Sigma_0,C$.}
    \label{fig:bord}
\end{figure}

Let $\Omega^\text{fr;o}_{2}(Y)_\text{/bdy}$ denote the set (Abelian group under disjoint union) of equivalence classes of surfaces in $Y$ with the diffeomorphic boundary, subject to the relation $\Sigma\sim\Sigma'$ iff there exists a framed open bordism $\Gamma$ between them as in {\bf Definition \ref{bordbdry}}. We now prove the following.
\begin{theorem}\label{bordinv}
    If the chirality $\mathrm{d}\ell$ aligns with the direction $\mathrm{d}u$ of the foliation $\Phi$, then the 2-holonomy $V$ arising from the 3d theory \eqref{covariant3daction} descends to a map $\Omega^\text{fr;o}_2(Y)_\text{/bdy}\rightarrow\mathsf{H}$ on framed open bordism classes.
\end{theorem}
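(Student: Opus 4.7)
The plan is to leverage the alignment hypothesis to show that the side-cylinder contribution to the boundary holonomy of any framed open bordism trivializes, thereby upgrading the relative-boundary homotopy invariance of \textbf{Theorem \ref{relbdy}} to full bordism invariance.

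First I would reduce the claim to showing $V_{\partial\Gamma}=1$ for any framed open bordism $\Gamma:\Sigma_0\to\Sigma_1$, which is precisely the content of \eqref{stationarity}. Decomposing the boundary $\partial\Gamma = \Sigma_0 \cup \overline{\Sigma}_1\cup C$, with $C\cong \partial\Sigma\times I$ the cylinder tube, \eqref{stationarity} shows it suffices to prove $W_\gamma=1$ and $V_C=1$, where $\gamma$ connects the base points of $\Sigma_0,\Sigma_1$. The alignment hypothesis together with the framing data on $\Sigma_0,\Sigma_1$ --- which by \textbf{Definition \ref{bordbdry}} provides trivializations of their normal bundles --- permits a choice of framings compatible with the foliation so that both $\gamma$ and the $I$-factor of $C$ can be taken along the $u$-direction.

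Next, I would invoke the observation already used in the proof of \textbf{Proposition \ref{extended2flat}} that, when $\mathrm{d}\ell$ is parallel to $\mathrm{d}u$, the boundary conditions at $z=\infty$ from \eqref{ec:covbc2} combined with the bulk equations of motion \eqref{ec:eom1} and \eqref{ec:eom3} force $L_u=0$ and $H_{iu}=0$ globally on $Y$, where $i$ runs over the two tangent directions to the leaves of $\Phi$. The mechanism is that the bulk EOMs impose $z$-holomorphicity of the relevant components of $A,B$, the sphere $\mathbb{C}P^1$ admits no non-constant global holomorphic functions, and the boundary condition at $z=\infty$ fixes that constant value to zero.

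Finally I would combine these vanishing statements with the definitions \eqref{paratrans1} and \eqref{surfacetransport}. The parallel transport $W_\gamma = P\exp\left(-\int_\gamma L\right)$ trivializes because along $\gamma$ only $L_u$ contributes and it vanishes; similarly, the pullback of $W^{-1}\rhd H$ to $C=\partial\Sigma\times I$ involves only the leaf-tangent-times-$u$ components $H_{iu}$, which again vanish, whence the loop-space connection $\mathcal{A}$ of \eqref{surfacetransport} integrates to the identity and $V_C=1$. Substituting into \eqref{stationarity} then gives $V_{\Sigma_0}=V_{\Sigma_1}$, so that $V$ descends to the claimed map on $\Omega^\text{fr;o}_2(Y)_\text{/bdy}$. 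The main obstacle I expect is the geometric reduction of the first paragraph: precisely justifying that an arbitrary framed open bordism between leaf-based framed surfaces admits a representative whose boundary cylinder runs along $u$. I expect this to follow from a relative isotopy argument exploiting the framing data together with the local product structure of the foliation $\Phi$, but a careful patching argument will be required to ensure compatibility along $\partial\Sigma$. Notably, this step is exactly where the misaligned case of \S \ref{misaligned} must fail: if $\mathrm{d}\ell \perp \mathrm{d}u$, then the components of $(L,H)$ that vanish by the Lagrangian subspace $\mathcal{L}_\ell$ are transverse to $C$, and the cylinder contribution would not trivialize.
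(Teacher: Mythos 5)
Your overall strategy coincides with the paper's: use the alignment hypothesis, the bulk equations of motion \eqref{ec:eom1}, \eqref{ec:eom3} and the boundary conditions at $z=\infty$ to deduce $L_u=0$ and $\iota_{\partial_u}H=0$, so that the timelike transport $W_\gamma$ and the cylinder-tube holonomy $V_C$ both trivialize. That mechanism is correctly identified and is exactly the paper's main input.

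However, the final step as written is circular. Equation \eqref{stationarity} is the \emph{statement} of bordism invariance --- it is $V_{\partial\Gamma}=1$ written out over the three boundary pieces --- not an identity you may substitute into. Establishing $W_\gamma=1$ and $V_C=1$ only reduces $V_{\partial\Gamma}$ to $V_{\overline{\Sigma}_1}V_{\Sigma_0}=V_{\overline{\Sigma}_1\#\Sigma_0}$; it does not show that this remaining factor is the identity, so it does not ``suffice'' as you claim. The paper closes the argument with one further step that your write-up omits: once $\Sigma_0$ and $\Sigma_1$ share a common time slice, the glued surface $\overline{\Sigma}_1\#\Sigma_0$ bounds a $3$-submanifold $S$ (essentially the bordism $\Gamma$ with its tube collapsed), which is precisely a homotopy relative boundary between the two caps, whence $V_{\overline{\Sigma}_1\#\Sigma_0}=V_{\partial S}=1$ by \textbf{Theorem \ref{relbdy}}. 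Your opening sentence about ``upgrading'' Theorem \ref{relbdy} suggests you had this in mind, but it must be invoked explicitly --- it is the only point at which the bordism $\Gamma$ itself, rather than merely its boundary decomposition, enters the proof. Your closing worry about isotoping the tube so that it runs along the $u$-direction is a fair technical point (which the paper's proof also glosses over), but it is secondary to this missing final step.
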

\begin{proof}
    Given the covariance of \eqref{covariant3daction}, we can WLOG choose the chirality $\mathrm{d}\ell = \mathrm{d}x_3$ along the third coordinate. By hypothesis, this identifies $\mathrm{d}x_3$ with the time direction $\mathrm{d}u$ and $\partial_\ell = \partial_u$. 

    Now recall from the proof of {\bf Proposition \ref{extended2flat}} that the Lax 2-connection $(L,H)$ we obtain from $S_{3d}$ has $L_3=0, H_1=0, H_2=0$ when $\mathrm{d}\ell=\mathrm{d}x_3$. Since we have identified the third coordinate with $u\in\mathbb{R}$, this gives
    \begin{equation*}
        L_u = 0,\qquad \iota_{\partial_u}H=0.
    \end{equation*}
    The fact that $L_u=0$ means that the $W_\gamma=1$ is trivial along timelike curves $\gamma$, and the fact that $\iota_{\partial_u}H=0$ means that $V_C=1$ is trivial on surfaces $C$ whose normal $\hat n$ is spacelike --- such as the cyinder tube piece $C$ of the surface $\partial\Gamma$. This reduces the left-hand side of \eqref{stationarity} down to
    \begin{equation*}
        V_{\Sigma'}^{-1}V_\Sigma=V_{\overline{\Sigma}'}V_\Sigma = V_{\overline{\Sigma}'\# \Sigma},
    \end{equation*}
    where we have used the fact that $V$ respects the gluing of surfaces (cf. \S \ref{holonomies}). However, now that $\Sigma'$ and $\Sigma$ share the same time-slice $u=0$, the surface $\overline{\Sigma}'\#\Sigma$ bounds a 3-submanifold $S$ which is a homotopy relative boundary between $\Sigma'$ and $\Sigma$. We then have 
    $$V_{\overline{\Sigma}'\#\Sigma} = V_{\partial S}=1$$
    by {\bf Theorem \ref{relbdy}}. This proves \eqref{stationarity}.
\end{proof}

When the two directionalities misalign, however, such as in the case of the Chern-Simons/matter theory \eqref{CSmatter}, then the 2-holonomies are merely invariant under boundary-preserving deformations, but are not framed open bordism invariants. From the QFT literature \cite{Segal1985,Witten:2019,Freed:2014}, bordism invariants define non-perturbative anomalies; a similar sentiment is echoed in \cite{Baez:1995ph} (or {\it Remark \ref{homtrans}}) for the 4d Crane-Yetter-Broda TQFT. An observation {\it Remark \ref{alignedtrvi}} that we shall make in the next section, where we study the homotopy Lie algebra structure of the currents, may provide an explanation for this phenomenon.

\section{Topological-holomorphic 2+1d current algebra}\label{dgaffinecurrents}
Let us now investigate in more detail the algebra of currents in the boundary 3d action $S_{3d}$ given in \eqref{covariant3daction}. Recall that in 4d Chern-Simons theory with the disorder operator $\omega=z^{-1}\rd z$ and chiral boundary conditions, one obtains the two-dimensional Wess-Zumino-Witten CFT \cite{Costello:2019tri}.  In particular, the fields  of the theory form the affine Kac-Moody current algebra. Here, we seek to characterize the 3-dimensional homotopical analogue of these currents. 

\medskip

We begin by recalling the fields and currents in our boundary theory. In general, we have the fields $k = \mathrm{d} gg^{-1}\in\Omega^1(Y)\otimes \g,\Theta\in\Omega^1(Y)\otimes \h$ and the following currents
\begin{equation*}
    L = -k -\mu_1(\Theta^g),\qquad H = g\rhd (\rd\Theta - \Theta\wedge \Theta).
\end{equation*}
However, once we fix the Lagrangian $\mathcal{L}_\ell$ with the chirality vector $\vec\ell$, these currents are split up into two parts that are localized at the puncture $z=0$, with their dynamics governed by the boundary theory \eqref{covariant3daction}. To write down the currents, we first introduce a (not necessarily intergrable) subbundle $T^*_\ell Y\subset T^*Y$ of covectors on $Y$ tangent to $\mathrm{d}\ell$, defined by the projection map $\operatorname{proj}_\ell:T^*Y\rightarrow T^*Y_\ell$ that maps $a\mapsto \operatorname{proj}_\ell a = (\vec a\cdot \vec \ell)\vec \ell \cdot \mathrm{d}\ell$.  Since $Y$ is 3d, there is a linear isomorphism between 1-forms and 2-forms on $Y$, and we will use this fact to also define the projection map $\operatorname{proj}_\ell:\Omega^2(Y) \rightarrow \Omega^2_\ell(Y)$ on 2-forms. We can therefore express our currents as
\begin{equation}\label{currentsdefinition}
    L_\perp = -k_\perp - \mu_1(g\rhd \Theta_\perp)\,,\qquad   H_\ell = g\rhd (\operatorname{tor}\Theta_\perp - \frac{1}{2}[\Theta_\perp \times\Theta_\perp])\,,
\end{equation}
where $a_\perp = a - \operatorname{proj}_\ell a$ is the component of the 1- (or 2-)form $a$ normal to $\mathrm{d}\ell$, and $\operatorname{tor}a = \partial_\perp\times a$ denotes the scalar vorticity field of a 2d vector. The equations of motion they satisfy are given by
\begin{equation}\label{eom}
    \partial_{\ell} L_\perp =0\,,\qquad \partial_{\ell} H_\ell  = 0\,,
\end{equation}
as verified in \S \ref{3deoms}. We call the tuple $J = (L_\perp,H_\ell)$ a \textbf{2-current}. Since $\Theta_\ell$ does not at all appear in \eqref{covariant3daction}, we see that $\Theta = \Theta_\perp$, and hence we shall omit the subscript of "$\perp$" on $\Theta$ in the following.

\subsection{The dual 2-currents}
Now it is natural to expect the existence of a {\it dual} 2-current $\tilde J$ which satisfies an analogous set of fake- and 2-flatness equations of motion, and which can be constructed from $J$. This is in parallel with the case of the 2d Wess-Zumino-Witten theory, in which the right- and left-moving chiral currents satisfy dual equations of motion \cite{KNIZHNIK198483}. We will show that this is also the case for our theory \eqref{covariant3daction}.

We being by constructing the currents $\tilde J = (\tilde L_\ell,\tilde H_\perp)$, which are given by
\begin{equation}\label{currentsdefinition1}
    \tilde L_\ell = m_\ell,\qquad \tilde H_\perp = \partial_\ell \star_2\Theta + \mu_2(m_\ell,\star_2\Theta), 
\end{equation}
where $m = g^{-1}\mathrm{d} g$ is the {\it left} Maurer-Cartan form and $\star_2 \Theta$ is the 2d Hodge star operator on $\Theta\in\Omega^1_\ell(Y)$. Let us now prove the following.
\begin{proposition}
    If $J = (L_\perp,H_\ell)$ satisfies \eqref{eom}, then the dual currents $\tilde J = (\tilde L_\ell,\tilde H_\perp)$ satisfy
    \begin{equation}\label{eom1}
       \partial_\perp \tilde L_\ell = -\mu_1(\tilde H_\perp)\,,\qquad \partial_\perp\cdot \tilde H_\perp=0\,.
    \end{equation}
\end{proposition}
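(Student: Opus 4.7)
The strategy is to extract both identities in \eqref{eom1} from two ingredients: (a) Maurer--Cartan-type relations between the left- and right-invariant forms $m = g^{-1}\mathrm{d}g$ and $k = \mathrm{d}g\,g^{-1}$, and (b) the equations of motion \eqref{eom} of $S_{3d}$. I fix adapted coordinates so that $\mathrm{d}\ell = \mathrm{d}x^3$; the statement is covariant under the $SO(2)$-rotations of the transverse plane, so no generality is lost.

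The plan for the first identity is as follows. I start from the pointwise relation
\begin{equation*}
    \partial_i m_3 = \operatorname{Ad}_g^{-1}(\partial_3 k_i), \qquad i=1,2,
\end{equation*}
which follows from $\partial_j g = k_j g = g\, m_j$ and the equality of mixed partials $\partial_i\partial_3 g = \partial_3\partial_i g$; it is verified by expanding $\partial_i(g^{-1}\partial_3 g)$ via the Leibniz rule and simplifying. Substituting the $\Theta$-equation of motion \eqref{ec:123eomth}, namely $\partial_3 k_i = -\mu_1(\partial_3(g\rhd\Theta_i))$, and using the $\operatorname{Ad}_g$-equivariance of $\mu_1$ followed by the Leibniz rule $\partial_3(g\rhd\Theta_i) = g\rhd\bigl(\partial_3\Theta_i + \mu_2(m_3,\Theta_i)\bigr)$ yields
\begin{equation*}
    \partial_i m_3 = -\mu_1\bigl(\partial_3\Theta_i + \mu_2(m_3,\Theta_i)\bigr), \qquad i = 1, 2.
\end{equation*}
After accounting for the $\star_2$-conventions that interchange the transverse gradient and curl, this is exactly $\partial_\perp\tilde L_\ell = -\mu_1(\tilde H_\perp)$.

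For the second identity, I compute the transverse divergence $\partial_\perp\cdot\tilde H_\perp$ directly from the definition of $\tilde H_\perp$, and convert the resulting $\partial_\ell$-derivatives using the $g$-equation of motion $\partial_3 H_\ell = 0$. Peeling off the overall $g\rhd$ factor from $H_\ell = g\rhd\bigl(\partial_1\Theta_2 - \partial_2\Theta_1 - [\Theta_1,\Theta_2]_{\mathfrak h}\bigr)$ through the same Leibniz rule gives
\begin{equation*}
    \partial_3\bigl(\partial_1\Theta_2 - \partial_2\Theta_1 - [\Theta_1,\Theta_2]_{\mathfrak h}\bigr) + \mu_2\bigl(m_3,\,\partial_1\Theta_2 - \partial_2\Theta_1 - [\Theta_1,\Theta_2]_{\mathfrak h}\bigr) = 0.
\end{equation*}
The transverse derivatives $\partial_i\mu_2(m_3,\Theta_j)$ expand via Leibniz, producing a $\partial_i m_3$ piece, which I replace using the first identity, and a $\mu_2(m_3,\partial_i\Theta_j)$ piece. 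The two cross terms $[\mu_2(m_3,\Theta_1),\Theta_2]_{\mathfrak h}$ and $[\Theta_1,\mu_2(m_3,\Theta_2)]_{\mathfrak h}$ sum to $\mu_2(m_3,[\Theta_1,\Theta_2]_{\mathfrak h})$ by the graded Jacobi identity for $\mu_2$ together with the Peiffer identity, and after this substitution all terms cancel pairwise.

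The main obstacle lies in the bookkeeping of this last cancellation: a moderate number of terms must be rearranged consistently using the equivariance of $\mu_1$, the Peiffer identity $\mu_2(\mu_1\Theta_i,\Theta_j) = [\Theta_i,\Theta_j]_{\mathfrak h}$, and the derivation property of $\mu_2(m_3,-)$ on the Peiffer bracket. The $\star_2$-conventions, while cosmetic, must be fixed consistently at the outset so that the transverse divergence of $\star_2\Theta$ reproduces the transverse curl of $\Theta$; this is precisely what allows the $\partial_3(\partial_1\Theta_2-\partial_2\Theta_1)$ appearing in the $g$-equation of motion to match the divergence of $\tilde H_\perp$.
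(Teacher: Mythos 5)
Your proposal is correct and follows essentially the same route as the paper: both identities are obtained from the mixed-partials relation $\partial_\perp(g^{-1}\partial_\ell g)=\operatorname{Ad}_g^{-1}\partial_\ell(\partial_\perp g\,g^{-1})$ combined with $\partial_\ell L_\perp=0$ for the first, and from a direct divergence computation resolved by $\partial_\ell H_\ell=0$ together with the Peiffer identity and the derivation property of $\mu_2$ for the second. The only difference is organizational (the paper recombines the surviving terms back into $g^{-1}\rhd\partial_\ell H_\ell$ at the end rather than unpacking $\partial_\ell H_\ell=0$ at the start), and your explicit flagging of the $\star_2$-convention is, if anything, more careful than the paper's own presentation.
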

\noindent Notice these equations \eqref{eom1} are nothing but the fake and 2-flatness conditions for $\tilde J$.
\begin{proof}
    This is a direct computation.
    \begin{align*}
        \partial_\perp \tilde L_\ell &= \partial_\perp (g^{-1}\partial_\ell g) = \operatorname{Ad}_g^{-1}\partial_\ell (\partial_\perp gg^{-1}) \\ 
        &= -\operatorname{Ad}_g^{-1}\mu_1(\partial_\ell\Theta^g)) = -\mu_1(\mu_2(g^{-1}\partial_\ell g,\Theta) + \partial_\ell\Theta) = -\mu_1\tilde H_\perp,
    \end{align*}
    where we have used the equation $\partial_\ell L_\perp =0$ in \eqref{eom} in the second line. Next, we have
    \begin{align*}
        \partial_\perp\cdot \tilde H_\perp &= \partial_\perp \cdot (\partial_\ell \star_2 \Theta + \mu_2(g^{-1}\partial_\ell,\star_2\Theta)) \\
        &=\partial_\ell (\partial_\perp\cdot\star_2 \Theta) + \mu_2(\partial_\perp(g^{-1}\partial_\ell g)\cdot \star_2\Theta) + \mu_2(g^{-1}\partial_\ell g,\partial_\perp\cdot \star_2\Theta) \\
        &= \partial_\ell (\operatorname{tor} \Theta) - [H_\perp\cdot \star_2\Theta] + \mu_2(g^{-1}\partial_\ell g,\operatorname{tor}\Theta) \\
        &= g^{-1}\rhd (\partial_\ell g\rhd (\operatorname{tor}\Theta)) - [\partial_\ell \star_2\Theta + \mu_2(g^{-1}\partial_\ell,\star_2\Theta)\cdot \star_2\Theta]\,,
    \end{align*}
    where in the third line we have used the fact that $a_\perp\cdot\star_2 b_\perp = a_\perp\times b_\perp$ and the first equation of \eqref{eom1}, and in the fourth line we have used the definition of $\tilde H_\perp$. Consider now just the second term: recall $\star_2^2=-1$ on 1-forms in 2d, 
    \begin{align*}
        [\partial_\ell \star_2\Theta\cdot \star_2\Theta] &= \tfrac{1}{2}\partial_\ell [\star_2\Theta\cdot \star_2\Theta ] = \frac{1}{2}\partial_\ell [\Theta\times \Theta]\\
        [\mu_2(g^{-1}\partial_\ell g \,,\star_2\Theta)\cdot \star_2\Theta] &= \tfrac{1}{2}\mu_2(g^{-1}\partial_\ell g,[\Theta\times \Theta])\,,
    \end{align*}
    where the first line follows from the product rule and the second line follows from the Jacobi identity. The sum of these terms then gives
    \begin{equation*}
        \frac{1}{2}g^{-1} \rhd\partial_\ell (g\rhd [\Theta\times \Theta])\,,
    \end{equation*}
    which combines with $ g^{-1}\rhd (\partial_\ell g\rhd (\operatorname{tor}\Theta))$ to give precisely
    \begin{equation*}
         g^{-1}\rhd (\partial_\ell g\rhd (\operatorname{tor}\Theta - \frac{1}{2}[\Theta\times \Theta])) = g^{-1}\rhd \partial_\ell H_\ell = 0
    \end{equation*}
    by \eqref{eom}.
\end{proof}

We call these sets equations of motion \eqref{eom}, \eqref{eom1} the {\it 2-Lax equations} for the 2-currents $J,\tilde J$. The goal now is to study the differential graded algebraic structures of these topological-holomorphic currents. 

\subsection{Fields as derived 2-group elements in \texorpdfstring{$\mathsf{D}\L^\bullet$}{D}}\label{fieldcurrents}
Notice the 2-currents $J,\tilde J$, as far as their components are concerned, are just particular "splicing"/splitting of the full 1- and 2-forms $(L,H)$. Therefore, in order to characterize them, it suffices to characterize $(L,H)$ and then perform the slicing. We first introduce the covariant derivative $D = \rd - \mu_2(k,-)$ such that
\begin{equation*}
    g\rhd \rd \Theta = D(g\rhd\Theta)\,.
\end{equation*}
In this way, depending on how the components of the currents are sliced up for $J,\tilde J$, the 2-Lax equations \eqref{eom}, \eqref{eom1} can be written entirely in terms of the fields $(g,\Theta^g)$. This ubiquitous appearance of $\Theta^g = g\rhd\Theta$ instead of the field $\Theta$ itself, both in the actions and the currents, is not a coincidence.

This can be explained in the context of the derived superfield formulation \cite{Zucchini:2021bnn,Jurco:2018sby}. Recall the derived 2-group $\mathsf{D}\mathbb{G}$ as defined in \S \ref{sec:gsof2CS}; our fields $(g,\Theta^g)$ are degree-0 elements of the following graded algebra:
\begin{equation*}
    \mathsf{D}\mathscr{L}^\bullet=\Omega^\bullet(Y) \otimes \mathsf{D}\mathbb{G},\qquad \mathsf{D}\mathscr{L}^0 \cong (C^\infty(Y)\otimes G)\oplus (\Omega^1(Y)\otimes \h).
\end{equation*}
Due to the 2-group structure inherited from $\mathbb{G}$, this perspective allows us to use the whiskering operation (see \S \ref{sec:whisker}) to write
\[(g,g\rhd\Theta)= 
\begin{tikzcd}
	\ast &&& \ast
	\arrow[""{name=0, anchor=center, inner sep=0}, "g", curve={height=-30pt}, from=1-1, to=1-4]
	\arrow[""{name=1, anchor=center, inner sep=0}, "{\bar\mu_1\left(\exp\int g\rhd\Theta\right)g}"', curve={height=30pt}, from=1-1, to=1-4]
	\arrow["{\exp\int_\gamma g\rhd \Theta}"{description}, shorten <=8pt, shorten >=8pt, Rightarrow, from=0, to=1]
\end{tikzcd}\]
\[
 = \begin{tikzcd}
	\ast && \ast &&& \ast
	\arrow[""{name=0, anchor=center, inner sep=0}, "1", curve={height=-30pt}, from=1-3, to=1-6]
	\arrow[""{name=1, anchor=center, inner sep=0}, "{\bar\mu_1\left(\exp\int \Theta\right)}"', curve={height=30pt}, from=1-3, to=1-6]
	\arrow["g", from=1-1, to=1-3]
	\arrow["{\exp\int_\gamma \Theta}"{description}, shorten <=8pt, shorten >=8pt, Rightarrow, from=0, to=1]
\end{tikzcd} = g\rhd (1,\Theta).
\]
This explains why $\Theta^g=g\rhd\Theta$ keeps showing up in our theory \eqref{covariant3daction} --- it was encoding whiskerings by $g$! The subspace $\Omega^1\otimes\h\subset \mathsf{D}\mathscr{L}^0$, within which the field $\Theta$ itself lives, then naturally acquires the interpretation of face decorations in the derived 2-group.

\subsubsection{The differential in \texorpdfstring{$\mathsf{D}\mathscr{L}^\bullet$}{D} }
This graded algebra $\mathsf{D}\mathscr{L}^\bullet$ is equipped with the differential $\hat \rd:\mathsf{D}\mathscr{L}^\bullet\rightarrow \mathsf{D}\mathscr{L}^{\bullet+1}$ that satisfies the usual graded Leibniz rule, but also a compatibility with whiskering $\rhd$. Since $\G$ only has two terms, so does each degree of $\mathsf{D}\mathscr{L}^\bullet$. On elements $(g,g\rhd\Theta)$ of degree-0, we can explicitly write 
\begin{equation}
  \hat\rd(g,\Theta^g) = (\overrightarrow{\Delta} g - \mu_1\Theta^g,\rd\Theta^g - \Theta^g\wedge\Theta^g),\label{differential}  
\end{equation}
where $\overrightarrow{\Delta}g= -\rd gg^{-1}$ is the right Maurer-Cartan form. Note this produces the covariant derivative associated to a connection $\Theta^g$ on a $\mathsf{H}$-bundle over $X$ (recall $\mathsf{H}$ is a $G$-module).

For $g,h\in C^\infty(Y)\otimes G,$ and $ \Theta\in\Omega^1(Y)\otimes\h$, the compatibility with a whiskering by $g$ can then be written as
\begin{equation*}
    g\rhd \hat{\rd} (h,h\rhd \Theta) = \hat{D} (gh,(gh)\rhd \Theta)\,,\qquad \hat D =  (\rd_1 - [\partial gg^{-1},-]-\mu_1,\rd_2 - \mu_2(\partial gg^{-1},-)),
\end{equation*}
The nilpotency of the differential, $\hat \rd^2=0$, is equivalent to the fake- and 2-flatness of $(L,H)$. This then allows us to identify the (off-shell) currents $$(L,H) = g\rhd \hat{\rd}(1,\Theta) = \hat D(g,\Theta^g) \in \mathsf{D}\mathscr{L}^1$$ as a degree-1 element in this derived complex. 

\subsubsection{Characterizing the currents}
To characterize the on-shell 2-currents, we slice up the dg algebra $\mathsf{D}\mathscr{L}^\bullet$ into two according to the projection $\operatorname{proj}_\ell$ and its complement. This induces a fibration 
\begin{equation*}
    \mathsf{D}\mathscr{L}_\perp^\bullet \cong (\mathsf{D}\mathscr{L}^\bullet/\mathsf{D}\mathscr{L}_\ell^\bullet)\hookrightarrow \mathsf{D}\mathscr{L}^\bullet \rightarrow\mathsf{D}\mathscr{L}^\bullet_\ell\rightarrow 0,
\end{equation*}
where $\mathsf{D}\mathscr{L}_\ell^\bullet = \Omega_\ell^\bullet(Y)\otimes \mathsf{D}\mathbb{G}$. This splits the differential $\hat \rd = \hat \rd_\perp + \hat \rd_\ell$, which are individually nilpotent $\hat \rd_\perp^2,\hat \rd_\ell^2 =0$ on the fields $(g,\Theta^g)\in \mathsf{D}\mathscr{L}^0$, as a consequence of the 2-Lax equations \eqref{eom} and \eqref{eom1}. We can then characterize the current algebras as the following subspaces:
\begin{align}
    \D &= \{J \in g\rhd \hat{\rd}_\perp(\Omega^1\otimes\h)\mid \hat \rd_\ell J =0\} \label{homotopycurrents}\\
    \tilde\D &= \{\tilde J\in g\rhd \hat{\rd}_\ell(\Omega^1\otimes\h)\mid \hat \rd_\perp \tilde J  =0\}.\label{homotopycurrents1}
\end{align}
Note $\mu_1$ appears in both $\hat \rd_\ell,\hat\rd_\perp $ due to the definition \eqref{differential}. More explicit characterizations are possible given the data of a THF $\Phi$ on $Y$; we will use $\Phi$ more explicitly later.

\subsection{Graded Lie algebra structure of the currents}
In the following, we will extract the graded Lie algebra structure of the currents. We shall do this from the "dual" perspective of the conserved Noether charges associated to the semilocal symmetries of the theory studied in \S \ref{residuals}, and analyzing their transformation properties.

Taking inspiration from covariant field space approach \cite{Geiller:2020edh}, we first smear the currents with gauge fields $\alpha \in C^\infty(Y)\otimes \g$ and $\Gamma\in\Omega^1(Y)\otimes \h$, which gives us the local Noether charges (also called "charge aspects"),
\begin{equation*}
    Q_{(\alpha,\Gamma)} = \langle (\alpha,\Gamma),J\rangle = \langle \alpha, H_\ell\rangle + \langle \Gamma,L_\perp\rangle \,.
\end{equation*}
For the currents $J$ of the theory \eqref{covariant3daction}, the above quantities are 2-forms that live in the image of the projection $\operatorname{proj}_\ell$ --- that is, they have no legs along $\mathrm{d}\ell$. On-shell of the equations of motion \eqref{eom}, a simple computation gives
\begin{equation*}
    \rd Q_{(\alpha,\Gamma)} =\langle \partial_\ell\alpha,H_\ell\rangle + \langle \partial_\ell\Gamma,L_\perp\rangle
\end{equation*}
which vanishes provided we impose the boundary conditions  
\begin{equation}\label{bc1}
    \partial_\ell\alpha=0\,,\qquad \partial_\ell\Gamma =0\,.
\end{equation}
In other words, the currents $Q_{(\alpha,\Gamma)}$ generate the infinitesimal {\it left}-acting symmetries $\L^L_{3d}(\ell)$ of the theory \eqref{covariant3daction}. On the other hand, for the 2-current $\tilde J$, the local charge aspects 
\begin{equation*}
    \tilde Q_{(\alpha,\Gamma)} = \langle(\alpha,\Gamma),\tilde J\rangle = \langle \alpha,\tilde  H_\perp\rangle + \langle \Gamma,\tilde L_\ell\rangle 
\end{equation*}
are 2-forms living in the kernel of $\operatorname{proj}_\ell$ --- meaning they have only legs including $\mathrm{d}\ell$. On-shell of \eqref{eom1}, we find
\begin{align*}
    \rd\tilde Q &= \langle \partial_\perp \alpha, \tilde H_{\perp}\rangle + \langle \partial_\perp \cdot\star_2\Gamma,\tilde L_\ell\rangle - \langle \Gamma,\partial_\perp \tilde L_\ell\rangle  \\ 
    &= \langle \partial_\perp \alpha + \mu_1\Gamma,\tilde H_\perp\rangle + \langle \operatorname{tor}\Gamma, \tilde L_\ell\rangle\,,
\end{align*}
which vanishes provided we impose the boundary conditions\footnote{Recall we have without loss of generality (cf. \S \ref{residuals}) assumed that $\Gamma$ has no component along the chiral direction $\mathrm{d}\ell$, and hence $\Gamma_\perp=\Gamma$.}
\begin{equation}\label{bc2}
    \partial_\perp \alpha + \mu_1\Gamma=0\,,\qquad \operatorname{tor} \Gamma=0\,.
\end{equation}
In other words, the currents $\tilde Q_{(\alpha,\Gamma)}$ generate the (infinitesimal) {\it right}-acting symmetries $\mathring{\L}_{3d}^R(\ell)$ of the theory \eqref{covariant3daction}.

We now prove the following.
\begin{theorem}\label{gradedbrackets}
    The above conserved Noether charge aspects $Q,\tilde Q$ satisfy the following graded algebra structure:
    \begin{align}
            [Q_{(\alpha,\Gamma)},Q_{(\alpha',\Gamma')}] &= -Q_{([\alpha',\alpha],\mu_2(\alpha',\Gamma)-\mu_2(\alpha,\Gamma'))} \nonumber\\
    &\qquad -\langle \Gamma,\partial \alpha' + \mu_1\Gamma'\rangle + \langle \alpha,\operatorname{tor}\Gamma'\rangle \label{bracket} \\
        [\tilde Q_{(\alpha',\Gamma')},\tilde Q_{(\alpha,\Gamma)}] &= -\tilde Q_{([\alpha,\alpha'],\mu_2(\alpha,\Gamma')-\mu_2(\alpha',\Gamma))}  \nonumber\\
    &\qquad - \langle \Gamma',\partial_\ell\alpha\rangle +\langle \alpha',\partial_\ell \Gamma\rangle,.\label{bracket1}
    \end{align}
    Moreover, $[Q,\tilde Q]=0.$
\end{theorem}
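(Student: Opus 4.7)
The plan is to identify each $Q_{(\alpha,\Gamma)}$, via the standard covariant phase-space construction \cite{Geiller:2020edh}, with the Hamiltonian generator of its corresponding infinitesimal left-symmetry \eqref{ec:leftaction}. This gives
\begin{equation*}
    [Q_{(\alpha,\Gamma)},\mathcal{O}] \;=\; \delta^L_{(\alpha,\Gamma)}\mathcal{O}
\end{equation*}
for any on-shell observable $\mathcal{O}$, so that computing $[Q,Q']$ and $[\tilde Q,\tilde Q']$ reduces to computing the linearised action of the left- (resp.\ right-) symmetries on the currents \eqref{currentsdefinition}, \eqref{currentsdefinition1}.

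First I would linearise \eqref{ec:leftaction} to obtain $\delta_L g = -\alpha g$, $\delta_L\Theta = -g^{-1}\rhd\Gamma$, and propagate these through the defining formulas of $(L_\perp,H_\ell)$. Using the Peiffer identity \eqref{pfeif1} one finds
\begin{equation*}
    \delta_L L_\perp \;=\; \partial_\perp\alpha + [L_\perp,\alpha] + \mu_1\Gamma,
\end{equation*}
together with a corresponding formula for $\delta_L H_\ell$ of the schematic form $-\mu_2(\alpha,H_\ell) + (\text{derivative terms in }\Gamma)$. Substituting these into $Q_{(\alpha',\Gamma')} = \int\langle\alpha',H_\ell\rangle + \langle\Gamma',L_\perp\rangle$, integrating by parts against $(\alpha',\Gamma')$, and discarding the boundary terms killed by \eqref{bc1}, the invariance \eqref{inv} of the pairing together with the graded Jacobi identity \eqref{ec:gradjacobid} collect the bulk into precisely the Lie 2-algebra bracket $[(\alpha,\Gamma),(\alpha',\Gamma')]_\G = ([\alpha,\alpha'], \mu_2(\alpha,\Gamma')-\mu_2(\alpha',\Gamma))$ paired against $(H_\ell,L_\perp)$. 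What survives are the genuine derivative pieces $-\langle\Gamma,\partial\alpha' + \mu_1\Gamma'\rangle + \langle\alpha,\operatorname{tor}\Gamma'\rangle$, which cannot be absorbed into any $Q$; these constitute the central extension \eqref{bracket}. The argument for $\tilde Q$ is structurally parallel, using \eqref{ec:rightaction}, the dual 2-Lax equations \eqref{eom1}, and the boundary conditions \eqref{bc2}, with $\partial_\ell$ replacing $\partial_\perp$ and yielding \eqref{bracket1}.

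For the mixed bracket $[Q,\tilde Q]=0$ I would use the commutativity of left- and right-actions already established in \S \ref{residuals}, which reduces the claim to $\delta^L_{(\alpha,\Gamma)}\tilde J = 0$ for parameters obeying \eqref{bc1}. A direct calculation yields $\delta_L m_\ell = -g^{-1}(\partial_\ell\alpha)g = 0$ thanks to $\partial_\ell\alpha=0$, and hence $\delta_L\tilde L_\ell = 0$. For $\tilde H_\perp$, the two contributions $-\partial_\ell\star_2(g^{-1}\rhd\Gamma)$ and $-\mu_2(m_\ell,\star_2(g^{-1}\rhd\Gamma))$ cancel after using $\partial_\ell\Gamma=0$, the Leibniz rule $\partial_\ell(g^{-1}\rhd\Gamma) = -\mu_2(m_\ell,g^{-1}\rhd\Gamma)$, and the observation that $\star_2$ commutes with $\mu_2(m_\ell,-)$ since $m_\ell$ has no transverse leg while $\star_2$ acts only transversely.

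The hard part will be the algebraic bookkeeping required to see the central terms emerge exactly as stated in \eqref{bracket} and \eqref{bracket1}. In particular, the $\langle\alpha,\operatorname{tor}\Gamma'\rangle$ contribution traces back to the $\mu_1\Gamma$-piece of $\delta_L L_\perp$ after integration by parts, and verifying that no extra contribution from $\delta_L H_\ell$ enters this same sector requires careful use of the second equation in \eqref{eom}. The manifest asymmetry between the $\partial_\perp$-type central term in \eqref{bracket} and the $\partial_\ell$-type term in \eqref{bracket1} is the algebraic echo of the genuinely different boundary conditions \eqref{bc1} versus \eqref{bc2} imposed on the gauge parameters of the two commuting sectors; this is precisely the structural feature that will later organise these currents into the affine Lie 2-algebra of Theorem \ref{affine2alg}.
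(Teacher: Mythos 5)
Your proposal follows essentially the same route as the paper's proof: you identify the charges as Hamiltonian generators of the left/right symmetries via the covariant phase-space formula, compute the linearised variations of the currents (your $\delta_L L_\perp = \partial_\perp\alpha + [L_\perp,\alpha] + \mu_1\Gamma$ agrees with the paper's $-[\alpha',L_\perp]+\partial_\perp\alpha'+\mu_1\Gamma'$), and extract the Lie 2-algebra bracket plus the non-absorbable derivative terms as the central extension, with the mixed bracket vanishing by the boundary conditions. The only cosmetic difference is that for $[Q,\tilde Q]=0$ you work out the left-variation of $\tilde J$ explicitly where the paper details the right-variation of $J$ and calls the other direction "analogous" — both halves are needed and your cancellation argument for $\delta_L\tilde H_\perp$ is correct.
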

\begin{proof}
    Following \cite{KNIZHNIK198483}, we proceed with the computation of the (graded) Lie algebra bracket between the charges using the general formula \cite{Julia:2002df}
\begin{equation}
    [Q_{(\alpha,\Gamma)}, Q_{(\alpha',\Gamma')}] = -\delta_{(\alpha',\Gamma')}Q_{(\alpha,\Gamma)}\,,
\end{equation}
where $\delta_{(\alpha,\Gamma)}$ denotes a gauge variation. Let us consider first the gauge variations of $Q\in\hat\D$, which is the Noether charges constructed from the 2-current $J = (L_\perp,H_\ell)$. Recall these charges generate the infinitesimal left variation,
\begin{equation}
\label{ec:infleftaction}
    \delta_{\alpha'} g = -{\alpha'}\cdot g,\qquad \delta_{(\alpha',\Gamma')} \Theta = -g^{-1}\rhd \Gamma',
\end{equation}
we find that
\begin{align*}
    \delta_{(\alpha',\Gamma')} L_\perp &= - [\alpha', L_\perp]+\partial_\perp \alpha' + \mu_1 \Gamma',\\
   \delta_{(\alpha',\Gamma')}  H_\ell &= - \mu_2(\alpha', H_\ell) - \mu_2( L_\perp, \Gamma') - \operatorname{tor}\Gamma',
\end{align*}
in which $(\alpha',\Gamma')$ parameterize the symmetries satisfying the boundary conditions \eqref{bc1}. This then leads to precisely \eqref{bracket}.

Now analogously, we compute the bracket on the dual Noether charges $\tilde Q\in \hat{\tilde \D}$ constructed from $\tilde J = (\tilde L_\ell,\tilde H_\perp)$. Recalling these generate the infinitesimal right-variation
\begin{equation}
\label{ec:infrightaction}
    \tilde\delta_{\alpha} g = g\cdot\alpha,\qquad \tilde\delta_{(\alpha,\Gamma)} \Theta = -\mu_2(\alpha,\Theta) + \Gamma,
\end{equation}
we can then compute
\begin{align*}
    \tilde\delta_\alpha \tilde L_\ell &= -[\alpha,\tilde L_\ell] + \partial_\ell \alpha, \\ 
    \tilde\delta_{(\alpha,\Gamma)} \tilde H_\perp &= -\mu_2(\alpha,\tilde H_\perp) -\mu_2(\tilde L_\ell,\Gamma) + \partial_\ell\Gamma\,,
\end{align*}
where $(\alpha,\Gamma)$ satisfy the boundary conditions \eqref{bc2}. These then give rise to the brackets \eqref{bracket1}. Finally, to show that $[Q,\tilde Q]=0$ we compute
\begin{equation}
\begin{aligned}
    \tilde \delta_{\alpha'} L_\perp &=-\mathrm{Ad}_g(\partial_\perp \alpha'+\mu_1(\Gamma')) \\
    \tilde \delta_{(\alpha',\Gamma')}H_\ell &= g\rhd \mathrm{tor} \, \Gamma'
\end{aligned}
\end{equation}
where in the variation of $H_\ell$ we have used the graded Jacobi identity and the first boundary condition in \eqref{bc2}. In particular, we note that both of the above vanish due to the boundary conditions \eqref{bc2}. An entirely analogous computation shows that $\delta_\alpha \tilde L_\ell =0$ and $\delta_{(\alpha,\Gamma)}\tilde H_\perp=0$. 
\end{proof}

Notice these two brackets \eqref{bracket}, \eqref{bracket1} take identical forms. In particular, the central terms in both of these expressions can be written collectively as
\begin{equation}
    \langle (\alpha,\Gamma),\hat \rd(\alpha',\Gamma')\rangle\,, \label{central}
\end{equation}
depending on the boundary conditions \eqref{bc1}, \eqref{bc2} that are satisfied by $(\alpha,\Gamma)$ (recall $\Gamma_\ell=0$). Note $\hat \rd$ is the differential on $\D,\tilde\D$ defined in \eqref{differential}, which includes the map $\mu_1$. 

These structures run in complete parallel to the story in the 2d WZW model \cite{KNIZHNIK198483}, where the dual charges satisfy identical algebras and they do not have mutual brackets. Moreover, we will show in the following that \eqref{central} can in fact be identified as a Lie 2-algebra 2-cocycle characterizing a central extension sequence.

We emphasize here that $\mathrm{d}\ell$ is the direction of the chirality, so our computations so far are a priori independent of the direction of a THF $\Phi$ of $Y$. However, we will use the data of $\Phi$ in the following to construct a differential for the charge algebra $\hat\D$ and $\hat{\tilde\D}$.

\subsection{Central extensions in the affine Lie 2-algebra}
In this section, we prove that the central term \eqref{central} in fact defines a Lie 2-algebra 2-cocycle. Given an arbitrary Lie 2-algebra $\G$ and an Abelian $\G$-module $\mathfrak{V}$ --- namely a 2-term complex $\mathfrak{V} = W\xrightarrow{\phi}V$  of vector spaces \cite{Baez:2003fs} with an action $\rho: \G\rightarrow \mathfrak{gl}(\mathfrak{V})$ by $\G$ --- the (twisted) cohomology classes $H^2_\rho(\G,\mathfrak{V})$ of such 2-cocycles have been shown \cite{Angulo:2018} to classify central extension sequences of Lie 2-algebras of the form
\begin{equation*}
    \mathfrak{V}\rightarrow \hat\G\rightarrow \G
\end{equation*}
over the ground field $\mathbb{K}$ of characteristic 0. 

For our purposes, it suffices to consider $\mathfrak{V}$ as a trivial $\G$-module, where $\rho=0$. We first recall the data of a representative 2-cocycle $(s_1,s_2,s_3)$ of an extension class.
\begin{definition}
    Suppose $\mathfrak{V}$ is a trivial $\G$-module, where $\G = \h\xrightarrow{\mu_1}\g$ is a Lie 2-algebra over the ground field $\mathbb{K}$. A {\bf Lie 2-algebra 2-cocycle} $s=(s_1,s_2,s_3): \G\otimes\G\rightarrow\mathfrak{V}$ with coefficients in $\mathfrak{V}=W\xrightarrow{\phi}V$ is the data of a triple
\begin{equation*}
    s_1: \g\otimes\g\rightarrow V,\qquad s_2: \g\otimes\h\rightarrow W,\qquad s_3: \h\rightarrow V,
\end{equation*}
such that
\begin{align}
    0 &= ~\circlearrowright ~ s_1([\sfx_1,\sfx_2],\sfx_3),\label{jac2} \\
    0 &= s_2(\mu_1(\sfy_1),\sfy_2) + s_2(\mu_1(\sfy_2),\sfy_1) \label{cond1}\\
    0&=~ \circlearrowright~ s_2(\mu_1([\sfy_1,\sfy_2]),\sfy_3),\label{jac3} \\
    0&= s_1(\sfx,\mu_1(\sfy)) - \phi s_2(\sfx,\sfy) + s_3(\sfx\rhd \sfy),\label{cond2} \\
    0&= s_2([\sfx_1,\sfx_2],\sfy) - s_2(\sfx_1,\sfx_2\rhd \sfy) + s_2(\sfx_2,\sfx_1\rhd \sfy),\label{jac}\\
    0&= s_2(\sfx,[\sfy_1,\sfy_2]) - s_2(\mu_1(\sfx\rhd \sfy_1),\sfy_2) + s_2(\mu_1(\sfx\rhd \sfy_2),\sfy_1)\label{jac1}
\end{align}
for all $\sfx,\sfx_1,\sfx_2,\sfx_3\in \g$ and $\sfy,\sfy_1,\sfy_2,\sfy_3\in\h$, where $\circlearrowright$ denotes a sum over cyclic permutations.
\end{definition}
\noindent The full list of conditions for when $\mathfrak{V}$ is not a trivial $\G$-module can be found in \cite{Angulo:2018}.

\subsubsection{Constructing the differential in \texorpdfstring{$\hat{\D},\hat{\tilde{\D}}$}{D}}
To begin, we first endow a grading where the $\Gamma$-charges have degree-0 and the $\alpha$-charges have degree-1. The idea is to perform a pushforward at the differential forms factor $\Omega^\bullet$ of $\hat{\tilde{\D}}$ (or $\hat\D$). This is accomplished by viewing the THF $\Phi$ as inducing a fibration \cite{Scrdua2017OnTH}
\begin{equation*}
    \R_{\mathrm{d}x_3} \hookrightarrow Y\rightarrow M
\end{equation*}
over a (compact) complex manifold $M$, where $\mathbb{R}_{\mathrm{d}x_3}$ is the 1-dimensional fibre along the covector direction $\mathrm{d}x_3$ of the THF (which may differ from the chirality $\mathrm{d}\ell$). For simplicity, we compactify the fibre from $\mathbb{R}_{\mathrm{d}x_3}$ to a circle $S^1_{\mathrm{d}x_3}$ such that $Y$ defines a {Seifert fibration} $p:Y\rightarrow M$ \cite{Aganagic:2017tvx}, then we can perform a fibrewise integration $\pi_*:\Omega^{\bullet}(Y)\rightarrow \Omega^{\bullet-1}(M)$ such that,\footnote{This gives rise to the well-known {\it Gysin homomorphism} $\pi^!:H^\bullet(Y)\rightarrow H^{\bullet-\operatorname{rank}E}(M)$ \cite{book-charclass} on cohomology, where $E$ is the compact fibre of $Y\rightarrow M$.} on 1-forms, 
\begin{equation}\label{integrate}
    (\pi_*\omega)_{(w,\bar w)} = \int_{S^1_{\mathrm{d}x_3} = p^{-1}(w,\bar w)}\omega,\qquad \omega\in\Omega^1(Y).
\end{equation}
We make use of this map to define the differential $\hat\mu_1:\hat\D_1\rightarrow \hat\D_0$ as a pullback by $\pi_*\otimes\mu_1$,
\begin{equation*}
    (\hat\mu_1\tilde Q)_\Gamma = \tilde Q_{(\pi_*\otimes \mu_1)\Gamma} = \langle \mu_1(\pi_*\Gamma),H_\perp\rangle,
\end{equation*}
assigning a $\Gamma$-charge to an $\alpha$-charge. We construct the same differential for the dual charges $\hat\D$.

\medskip

We are finally to prove the main theorem of this section.
\begin{theorem}\label{affine2alg}
    The tuple $(\hat{\tilde{\D}},\hat\mu_1,\hat\mu_2=[-,-])$ with $[-,-]$ given by \eqref{bracket1} is a Lie 2-algebra fitting into the central extension sequence
    \begin{equation*}
        \underline{\mathbb{K}} \rightarrow \hat{\tilde\D}\rightarrow \tilde\D,
    \end{equation*} 
    iff \eqref{central} defines a Lie 2-algebra 2-cocycle associated to this extension, where $\underline{\mathbb{K}} = \mathbb{K}\xrightarrow{0}\mathbb{K}$.
\end{theorem}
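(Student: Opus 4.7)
The strategy is to exhibit $\hat{\tilde{\mathcal{D}}}$ as a vector space direct sum $\hat{\tilde{\mathcal{D}}}\cong \tilde{\mathcal{D}}\oplus \underline{\mathbb{K}}$, where the projection onto $\tilde{\mathcal{D}}$ strips off the central terms in \eqref{bracket1} and the inclusion of $\underline{\mathbb{K}}$ is central. Reading \eqref{bracket1} directly gives
\[
[\tilde Q_{(\alpha',\Gamma')},\tilde Q_{(\alpha,\Gamma)}] = -\tilde Q_{([\alpha,\alpha'],\,\mu_2(\alpha,\Gamma')-\mu_2(\alpha',\Gamma))} + c\big((\alpha',\Gamma'),(\alpha,\Gamma)\big),
\]
with $c$ given by the expression \eqref{central}, evaluated on pure degree parameters $\alpha,\alpha'\in\g$ and $\Gamma,\Gamma'\in\h$. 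Together with the differential $\hat\mu_1$ constructed from the pushforward \eqref{integrate}, which by design projects onto the Lie 2-algebra differential of $\tilde{\mathcal{D}}$ and vanishes along the central fibre, this exhibits $\underline{\mathbb{K}}\to \hat{\tilde{\mathcal{D}}}\to \tilde{\mathcal{D}}$ as the candidate extension. The Lie 2-algebra structure on the quotient $\tilde{\mathcal{D}}$ itself is inherited from $\mathfrak{G}$ by the de Rham tensor product, independently of the extension.

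Next, I would read off the three components of the putative cocycle from $c$ by isolating the pieces of \eqref{central} according to which pair of degree-$0$ and degree-$1$ parameters it involves: the $(\alpha,\alpha')$-bilinear term yields $s_1:\g\otimes\g\to\mathbb{K}$, the mixed $(\alpha,\Gamma)$-term yields $s_2:\g\otimes\h\to\mathbb{K}$, and the $\mu_1$-composite piece yields $s_3:\h\to\mathbb{K}$. With these identifications in hand, each Lie 2-algebra axiom for $(\hat{\tilde{\mathcal{D}}},\hat\mu_1,\hat\mu_2)$ in the sense of Definition \ref{lie2alg} decomposes into a "classical" and a "central" part. The classical parts are precisely the corresponding axioms on $\tilde{\mathcal{D}}$, which hold since $\tilde{\mathcal{D}}$ is a Lie 2-algebra. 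The central parts become, after a direct matching, exactly the conditions \eqref{jac2}-\eqref{jac1} defining a 2-cocycle with values in the trivial module $\underline{\mathbb{K}}=\mathbb{K}\xrightarrow{0}\mathbb{K}$: $\hat\mu_1$-equivariance and Peiffer reduce to \eqref{cond2} and \eqref{cond1}, the pure Jacobi on $\g$ gives \eqref{jac2}, and the mixed Jacobi identities give \eqref{jac3}, \eqref{jac}, \eqref{jac1}. This proves one direction of the equivalence; for the converse, I would invoke the classification result of \cite{Angulo:2018}, which states that central extensions of $\tilde{\mathcal{D}}$ by $\underline{\mathbb{K}}$ are in bijection with $H^2(\tilde{\mathcal{D}},\underline{\mathbb{K}})$, so any such extension structure is automatically encoded by a cocycle.

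The main obstacle will be the correct isolation of $s_3$ as an honest linear map $\h\to\mathbb{K}$ rather than a derivation-like expression. The central cochain \eqref{central} is written in terms of the differential $\hat\rd$ of \eqref{differential}, which itself already carries the $\mu_1$-twist, so extracting a clean $s_3$ requires using the boundary conditions \eqref{bc1}, \eqref{bc2} together with integration by parts along $\partial_\ell$ and the fibrewise pushforward \eqref{integrate} to reorganize the various $\mu_1$ contributions into the correct graded pieces. Once this identification is cleanly made, the cocycle check itself is formal bookkeeping, relying only on the graded Peiffer and Jacobi identities of $\mathfrak{G}$ recorded in Definition \ref{lie2alg}, together with the skew-symmetry of the pairing $\langle-,-\rangle$ under integration by parts along the chiral direction.
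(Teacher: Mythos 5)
Your overall architecture is the same as the paper's: treat the central term \eqref{central} as a candidate 2-cocycle, split each Lie 2-algebra axiom for $\hat{\tilde{\D}}$ into a ``classical'' part (inherited from $\tilde\D$) and a ``central'' part that must reproduce \eqref{jac2}--\eqref{jac1}, and appeal to the classification of \cite{Angulo:2018} for the equivalence with central extensions. However, your identification of the cocycle components is wrong, and the step would fail as written. The paper grades the charges so that the $\Gamma$-charges sit in degree $0$ and the $\alpha$-charges in degree $1$; consequently the $\Gamma$-parameters occupy the $\g$-slot of the abstract cocycle and the $\alpha$-parameters the $\h$-slot. The correct reading is $s_1(\Gamma,\Gamma')=\langle\Gamma,\mu_1\Gamma'\rangle$ (the $(\Gamma,\Gamma')$-bilinear piece, which for $\hat{\tilde\D}$ is in fact absent from \eqref{bracket1}), $s_2(\Gamma,\alpha)=\langle\operatorname{tor}\Gamma,\alpha\rangle$ extended to a skew map by integration by parts, and $s_3=0$. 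Your assignment --- $s_1$ from a $(\alpha,\alpha')$-bilinear term (which does not exist in \eqref{central}), and $s_3$ from ``the $\mu_1$-composite piece'' --- cannot work: $s_3$ is unary while $\langle\Gamma,\mu_1\Gamma'\rangle$ is binary, and under your convention that term has no slot at all in the cocycle data. Your ``main obstacle'' about cleanly extracting $s_3$ is therefore a red herring; since $s_3=0$ and $\phi=0$, conditions \eqref{cond2} and \eqref{jac2} are trivially satisfied.

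You also omit the two inputs that actually make the verification close. First, the parameters $\Gamma$ of the right-acting symmetries $\mathring{\L}^R_{3d}$ are valued in the maximal Abelian subalgebra $\t\subset\h$; this kills the terms $\mu_2(\mu_1(\pi_*\Gamma'),\Gamma)=[\pi_*\Gamma',\Gamma]$ in the graded Leibniz check for $\hat\mu_1$ and renders \eqref{jac3} trivial. Second, the pushforward $\pi_*$ acts trivially on the $\alpha$-charges, which is what reduces the Leibniz rule to condition \eqref{cond1} on $s_2$. Without these facts the decomposition into ``classical plus cocycle'' does not hold term by term, so they need to be stated explicitly rather than absorbed into ``formal bookkeeping.''
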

\begin{proof}
    Recall $\Gamma_\ell=0$. We will first perform some general computations before we specialize $(\alpha,\Gamma)$ to the space $\mathring{\L}_{3d}^R$ parameterizing the charges $\hat{\tilde{\D}}$. From the central term \eqref{central}, we shall define
    \begin{equation}
        s_1(\Gamma,\Gamma') = \langle \Gamma,\mu_1\Gamma'\rangle,\qquad s_2(\Gamma,\alpha) = \langle \operatorname{tor}\Gamma,\alpha\rangle,\qquad s_3=0\label{2cocycle}.
    \end{equation}
    It is useful to extend $s_2$ to a skew-symmetric map by an integration by parts,
    \begin{equation*}
       s_2(\alpha,\Gamma) = \langle \partial \alpha,\Gamma\rangle =- \langle \operatorname{tor}\Gamma,\alpha\rangle = -s_2(\Gamma,\alpha).
    \end{equation*}
    As $\phi=0,s_3=0$, and since $\pi_*$ acts trivially on the $\alpha$-charges, we see that \eqref{cond2}, \eqref{jac2} are trivially satisfied.
    
    Let us now check the graded Leibniz rule,
    \begin{equation*}
        \hat\mu_1\circ[-,-] = [-,-]\circ (\hat\mu_1\otimes 1 +(-1)^\text{deg} 1\otimes \hat\mu_1).
    \end{equation*}
    From \eqref{bracket}, the left-hand side gives 
    \begin{align*}
        \hat\mu_1[Q_{(\alpha,\Gamma)},Q_{(\alpha',\Gamma')}] &= -\hat\mu_1(Q_{([\alpha',\alpha],\mu_2(\alpha',\Gamma) - \mu_2(\alpha,\Gamma'))} + \langle (\alpha,\Gamma),\hat d(\alpha',\Gamma')\rangle )\\
        &= - Q_{[\alpha',(\pi_*\otimes \mu_1)\Gamma] - [\alpha,(\pi_*\otimes \mu_1)\Gamma']}\\
        &= -Q_{[\alpha',\mu_1(\pi_*\Gamma)]} - Q_{[\alpha,\mu_1(\pi_*\Gamma')]},
    \end{align*}
    while the right-hand side gives
    \begin{align*}
        [(\hat\mu_1Q)_\Gamma,Q_{(\alpha',\Gamma')}] - [Q_{(\alpha,\Gamma)},(\hat\mu_1Q)_{\Gamma'}] &= -Q_{([\alpha',(\pi_*\otimes\mu_1)\Gamma],-\mu_2((\pi_*\otimes\mu_1)\Gamma,\Gamma'))} \\
        &\qquad + Q_{[(\pi_*\otimes\mu_1)\Gamma',\alpha], \mu_2((\pi_*\otimes\mu_1)\Gamma',\Gamma)} \\
        &\qquad +\langle \mu_1(\pi_*\Gamma),\operatorname{tor}\Gamma'\rangle - \langle \Gamma,\partial \mu_1(\pi_*\Gamma')\rangle \\
        &= -Q_{[\alpha',\mu_1(\pi_*\Gamma)]} + Q_{[\mu_1(\pi_*\Gamma'),\alpha]}  \\
       &\qquad + \langle \mu_1(\pi_*\Gamma),\operatorname{tor}\Gamma'\rangle + \langle \operatorname{tor}\Gamma,\mu_1(\pi_*\Gamma')\rangle\\
       &\qquad - (Q_{\mu_2(\mu_1(\pi_*\Gamma'),\Gamma)}+ Q_{\mu_2(\mu_1(\pi_*\Gamma),\Gamma')}).
    \end{align*}
    Now given the parameters $\Gamma$ in $\mathring{\L}^R_{3d}$ must be valued in the maximal Abelian $\t\subset\h$, the final terms involving $\mu_2(\mu_1(\pi_*\Gamma'),\Gamma) = [\pi_*\Gamma',\Gamma] =0$ vanish. The two sides then coincide iff $s_2$ \eqref{2cocycle} satisfies \eqref{cond1}.

    We now check the graded Jacobi identities. Due to the form of the bracket \eqref{bracket}, the graded Jacobi identities for the non-central extension terms follow directly from those of $\G$. The remaining terms read
    \begin{align}
        \langle [\alpha,\alpha'],\operatorname{tor}\Gamma\rangle - \langle \alpha',\operatorname{tor}\mu_2(\alpha,\Gamma)\rangle + \langle \alpha,\operatorname{tor}\mu_2(\alpha',\Gamma)\rangle &=0 \label{proofcond},
        \\
        \langle \mu_1\Gamma,\mu_2(\alpha,\Gamma')\rangle + \langle \alpha,[\Gamma',\Gamma]\rangle - \langle \mu_1\Gamma',\mu_2(\alpha,\Gamma')\rangle &=0 \nonumber,
    \end{align}
    which are precisely the conditions \eqref{jac}, \eqref{jac1} for $s$ \eqref{2cocycle}. However, due to the lack of a $\langle\Gamma,\mu_1\Gamma'\rangle$ contribution in the central term of the bracket \eqref{bracket1} for $\hat{\tilde{\D}}$, \eqref{jac1} is trivially satisfied for $\hat{\tilde\D}$. And finally, since $\Gamma$ is Abelian, \eqref{jac3} is also trivially satisfied. This completes the proof.
Hence $\check\mu_1$ indeed preserves the bracket.
\end{proof}

From \eqref{bracket}, we see that the above general computations also applies to the charges $\hat\D$. Whence, under mild assumptions,\footnote{For $\hat\D$, we must keep quadratic terms $[\Gamma,\Gamma'],\langle\Gamma,\mu_1\Gamma'\rangle$ in our computations. Assuming the quantity $[\pi_*\Gamma,\Gamma'] = \llbracket \Gamma,\Gamma'\rrbracket$ defines a Lie bracket $Q_{\llbracket \Gamma,\Gamma'\rrbracket} = [Q_{\Gamma'},Q_\Gamma]$ on the charges, \eqref{jac3} follows from \eqref{proofcond} given $\hat\mu_1$ preserves this bracket. One can check that this is indeed the case:
\begin{equation*}
        (\pi_*\otimes\mu_1)(\llbracket\Gamma,\Gamma'\rrbracket) = \int_{S^1_{\mathrm{d}x_3}} \mu_1[(\pi_*\Gamma)_{(w,\bar w)},\Gamma'_{(x_3,w',\bar w')}] = [\mu_1(\pi_*\Gamma_{(w,\bar w)}),\mu_1(\pi_*\Gamma_{(w',\bar w')})],
    \end{equation*}
    where $(w,\bar w),(w',\bar w')\in M$ are local coordinates on the base space $M$ and $x_3\in S^1_\Phi$ is along the fibre.} one can prove that it also forms a centrally extended Lie 2-algebra. We call $\hat\D,\hat{\tilde{\D}}$ the \textbf{homotopy affine Lie algebras of planar currents}. 

\begin{remark}\label{alignedtrvi}
    Recall that $\Gamma_\ell=0$. In the affine Lie 2-algebras $\hat\D,\hat{\tilde\D}$, the pushforward $\pi_*=0$ \eqref{integrate} is trivial if the THF $\Phi$ aligns with $\mathrm{d}\ell$. This renders the charge algebra $\hat\D$ both strict and skeletal, and hence lacks a degree-3 {Postnikov class} $\kappa=0$ \cite{Baez:2003fs,chen:2022,Wagemann+2021}; see also \S \ref{classification}. As one can construct gauge-invariant local perturbative observables from such a class (strictly speaking one does this from the cohomology $H(\hat\D)$ \cite{Arvanitakis:2020rrk}, which contains the Postnikov class as a $k$-invariant), this suggests that $S_{3d}$ hosts no perturbative anomalies on 3-manifolds. This is consistent with our result found in \S \ref{bordinv} that, in the aligned case, the surface holonomies are bordism invariants.
\end{remark}

All in all, the conserved Noether charges in our theory $S_{3d}$ form centrally-extended Lie 2-algebras $\hat\D,\hat{\tilde\D}$. These algebras are parameterized by the symmetries $\L_\text{sym}\subset C^\infty(Y)\oplus \Omega^1(Y)$ of $S_{3d}$, which are function spaces of infinite-dimension over $\mathbb{K}$. The infinite dimensionality of the symmetry algebra of our three-dimensional theory thus leads to integrability. This is of course in direct analogy with the affine Lie algebra underlying the charges in the 2d WZW model \cite{KNIZHNIK198483}. 

We conclude our paper by briefly mentioning that one of the authors (HC) has developed a notion of {\it 2-Kac-Moody algebra} in \cite{Chen:2023integrable}, which governs the global symmetries of the model \eqref{covariant3daction}. This is in analogy to the role that the affine Kac-Moody algebra $\widehat{\Omega_k\g}$ plays in the 2d WZW model \cite{book-integrable,KNIZHNIK198483}, whose corresponding Kac-Moody group $\widehat{\Omega_k G}$ \cite{Baez:2005sn} is crucial for quantizing the WZW model. It is known that the extension class $k$ (i.e. the level) defining the Kac-Moody group $\widehat{\Omega_kG}$ comes from the $S^1$-transgression map\footnote{Geometrically, the transgression map is a way of associating bundle gerbes --- a higher notion of bundles --- on a space to a line bundle on its loop space; see \cite{Willerton:2008gyk}.} \cite{Carey_1997,Sharpe:2015mja,Waldorf:2012,Waldorf2015TransgressiveLG}
\begin{equation*}
    H^3(G,\bbC^\times)\rightarrow H^2(\Omega G,\bbC^\times)\,.
\end{equation*}
Therefore, one expects that our model \eqref{covariant3daction} may admit an analogous description in terms of a "surface transgression" from a certain Lie 2-group cohomology class (cf. \cite{Ginot:2009pil,Angulo:2018}). This shall be a crucial step in quantizing our 3d model, and we leave this for a future endeavour.

\section{Outlook}\label{outlook}
In this paper, we have constructed a 3d integrable field theory from a higher-gauge theory by following through with the Costello-Yamazaki localization procedure \cite{Costello:2019tri}. We also studied in detail the properties of this 3d field theory, and proved several very interesting facts about its properties. We believe that there are many more open questions to be answered in regards to this theory, and the connection between higher-gauge theories and higher-dimensional integrable field theories in general. Here, we list a few of them in this section.

\paragraph{Higher Monodromies.} Recall that the conserved 2-monodromy matrices \eqref{conservedcharges} in our 3d field theory come labelled by the categorical characters $\mathcal{X}_k$ of the Lie 2-group $\mathbb{G}$ and homotopy classes $[\Sigma]$ of surfaces in $Y$. We are lead to two very interesting problems to tackle when considering the properties of these higher monodromies.
\begin{enumerate}
    \item \textbf{The label $[\Sigma]$: differential gerbes and 2-tangle invariants.} In 3d TQFTs, it is known that the algebra of the Wilson lines define invariants of framed links \cite{Witten:1988hc}. The usual singular homology is not enough to capture such data: indeed, framed link invariants satisfy certain knot and satellite relations that are seen only by differential --- possibly Dolbeault, when the chirality and foliation are misaligned --- cohomology \cite{FreedMonopole} (eg. the \textit{Deligne-Beilinson} cohomology \cite{Guadagnini2008DeligneBeilinsonCA} in Abelian Chern-Simons theory). As such, the label $[\Sigma]$ should be treated as differential non-Abelian gerbes \cite{Nikolaus2011FOUREV} which classify invariants of framed 2-tangles \cite{Baez:1995xq,BAEZ1998145} in $Y$ (or its thickening $Y\times [0,1]$).
    
    \item \textbf{The label $k$: orthogonality of categorical characters.} In regular representation theory, one can prove that there are countably many orthogonal irreducible characters of a compact Lie group that span the representation ring \cite{Woronowicz1988}. Correspondingly, the label $k$ should run over all indecomposable $\mathbb{G}$-module categories, but categorical character theory is not yet powerful enough to prove this (see eg. \S 9.6 in \cite{Bartlett:2009PhD} and {\bf Proposition 1.2.19} in \cite{Douglas:2018}). Moreover, while an inner product is known to exist for categorical characters \cite{Huang:2024}, there is no satisfactory notion of "orthogonality" for them at the moment. One of the authors (HC) of this paper is currently investigating this problem.
\end{enumerate}
Resolving these two issues are of major interest in recent literature. For instance, higher-tangle invariants would help define 4d non-semisimple TQFTs (which have a chance of definition exotic 4-manifold invariants \cite{Reutter:2020bav}), and categorical character theory would have very explicit applications in the construction of 4d gapped phases in condensed matter theory \cite{Bullivant:2019tbp,Delcamp:2023kew}.

\paragraph{Quantization of $S_{3d}$: the 2-Kac-Moody group.} An obvious next step to take would be to try and quantize the theory $S_{3d}$. Towards this, one of the authors (HC) has devised in \cite{Chen:2023integrable} a notion of "2-Kac-Moody algebra" $\widehat{\Sigma_s\G}$ which underlies the algebraic properties of the global Noether {charges} of the theory, where $s$ is the extension 2-cocycle \eqref{2cocycle}. The higher monodromy matrices \eqref{conservedcharges} would then be controlled by the integrated 2-Kac-Moody group $\widehat{\Sigma_s\mathbb{G}}$. As $S_{3d}$ is a topological-holomorphic field theory, it would be interesting to understand the structure of this object $\widehat{\Sigma_s\mathbb{G}}$ and its relation to the raviolo vertex operator algebra \cite{Garner:2023zqn,Alfonsi:2024qdr}.

The quantum Hilbert space associated to $S_{3d}$ are then given by unitary (categorical) representations $\operatorname{UMod}_{\mathsf{2Vect}}(\widehat{\Sigma_s\mathbb{G}})$ of this 2-Kac-Moody group. In analogy with the modules of vertex operator algebras in general \cite{KazhdanLusztig:1994}, this 2-category should have equipped higher modular and braided tensor structures, which are of major interest in both pure categorical algebra \cite{Johnson_Freyd_2023} as well as the construction of higher-skein "lasagna modules" in 4d TQFTs \cite{Morrison2019InvariantsO4,Manolescu2022SkeinLM}.

\paragraph{Other choices of disorder operators} In this article we have considered a unique choice of disorder operator $\omega = z^{-1}\rd z$, which led to a rich family of three-dimensional theories satisfying properties which are higher homotopical analogues of those satisfied by the WZW model in two dimensions. Notably, in \cite{Schenkel:2024dcd} they considered a variety of disorder operators, except for the one hereby considered. Their construction led to different $3$d theories which have a Lax connection with spectral parameter. It would be very interesting to perform a deep analysis of the holonomies, focusing on the invariance under homotopies relative boundary for the Lax connections constructed there. Morever, it would be appealing to analyze the symmetries of the theories obtained in \cite{Schenkel:2024dcd}, and construct the corresponding current algebras.

\newpage

\appendix

\section{Classification of Lie 2-groups and Lie 2-algebras}\label{classification}
Let us begin with a brief overview of the notion of 2-groups, Lie 2-algebras and 2-gauge theory. Recall a 2-group $\mathbb{G}$ is, by definition, a monoidal groupoid $\Gamma\rightrightarrows G$ in which all objects are invertible. There are many different but equivalent characterizations of a 2-group, namely as a pointed 2-groupoid, as a group object internal to the category of groups, or as a 2-term crossed-complex of groups as in {\bf Definition \ref{2grpdef}}. 

The fundamental theorem of Lie 2-groups state that there is a one-to-one correspondence between connected, simply-connected Lie 2-groups and Lie 2-algebras. {\bf Definition \ref{lie2alg}} gives Lie 2-algebras as a differential crossed-complex \cite{Baez:2003fs}, which is equivalent to the $L_2$-algebra definition by simply identifying $\mu_1=t$ and $\mu_2 = (\rhd,[-,-])$. The homotopy Jacobi identities are equivalent to \eqref{pfeif1} and the 2-Jacobi identities. 

The goal of this section is to briefly recall some classification results for Lie 2-groups and Lie 2-algebras. If $\mathbb{G},\mathbb{G}'$ are two Lie 2-groups, a {\it Lie 2-group homomorphism} $(\Phi,\Psi): \mathbb{G}\rightarrow\mathbb{G}'$ is a tuple of Lie group maps $\Phi:G\rightarrow G',\Psi:\mathsf{H}\rightarrow\mathsf{H}'$ such that
\begin{enumerate}
    \item $\bar\mu_1'\circ \Phi = \Psi\circ\bar\mu_1$, and
    \item $\Phi(x\rhd y) = (\Psi x)\rhd' (\Phi y)$ for all $x\in G,y\in\mathsf{H}$.
\end{enumerate}
Similarly, If $\mathfrak{G},\mathfrak{G}'$ are two Lie 2-algebras, then a {Lie 2-algebra homomorphism} $(\phi,\psi):\mathfrak{G}\rightarrow\mathfrak{G}'$ is a tuple of Lie algebra maps $\phi: \h\rightarrow \h',\psi: \g\rightarrow \g'$ such that
\begin{enumerate}
    \item $\mu_1'\circ \phi = \psi \circ \mu_1$, and
    \item $\phi(\sfx\rhd \sfy) = (\psi \sfx)\rhd'(\phi \sfy)$ for each $\sfx\in \g,\sfy\in \h$.
\end{enumerate}
As in the Lie 1-algebra case, this notion transports to maps between principal 2-bundles $\mathcal{P}\rightarrow\mathcal{P}$ \cite{Baez:2004in,Chen:2022hct}.

In contrast to Lie groups, the notion of "isomorphism" for Lie groupoids is categorical equivalence \cite{maclane:71}. A bit more explicitly, we say $\mathbb{G}\simeq\mathbb{G}'$ are equivalent iff there exist Lie 2-group maps $(\Phi,\Psi):\mathbb{G}\rightarrow\mathbb{G}'$ and $(\Phi,\Psi)^{-1}:\mathbb{G}'\rightarrow\mathbb{G}$ such that there are invertible natural transformations 
\begin{equation*}
    \mathcal{H}:(\Phi,\Psi)^{-1}\circ (\Phi,\Psi)\Rightarrow\id_{\mathbb{G}}, \qquad \mathcal{E}: (\Phi,\Psi)\circ (\Phi,\Psi)^{-1}\Rightarrow \id_{\mathbb{G}'}.
\end{equation*}
The following is a classic result first proven by Ho{\`a}ng Xu{\^ a}n S{\'i}nh \cite{Nguyen2014CROSSEDMA,Ang2018,Kapustin:2013uxa,baez2023ho}, who is a Vietnamese mathematician taught by Alexander Grothendieck.
\begin{theorem}
    2-groups $\mathbb{G}$ are classified by the following data: a group $\Pi_1=\operatorname{coker}\bar\mu_1$, a $\Pi_1$-module $\Pi_2=\operatorname{ker}\bar\mu_1$, and a degree-3 group cohomology class $\tau\in H^3(\Pi_1,\Pi_0)$ called the \textbf{Postnikov class}.
\end{theorem}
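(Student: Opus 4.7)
The plan is to prove both directions: from a 2-group $\mathbb{G}$, extract the invariants $(\Pi_1,\Pi_2,\tau)$; conversely, from the data $(\Pi_1,\Pi_2,\tau)$, reconstruct a 2-group, and show these assignments descend to a bijection on equivalence classes.

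First I would extract the invariants. Given $\mathbb{G}=(\mathsf{H}\xrightarrow{\bar\mu_1}G,\rhd)$, the first Peiffer identity $\bar\mu_1(x\rhd y)=x\bar\mu_1(y)x^{-1}$ shows $\bar\mu_1(\mathsf{H})\trianglelefteq G$ is normal, so $\Pi_1=\operatorname{coker}\bar\mu_1 = G/\bar\mu_1(\mathsf{H})$ is a group. The second Peiffer identity $\bar\mu_1(y)\rhd y'=yy'y^{-1}$ implies that $\operatorname{ker}\bar\mu_1\subset\mathsf{H}$ is central, hence Abelian, and that the conjugation action of $\bar\mu_1(\mathsf{H})$ on $\operatorname{ker}\bar\mu_1$ is trivial, so the $G$-action $\rhd$ on $\operatorname{ker}\bar\mu_1$ descends to a well-defined $\Pi_1$-module structure on $\Pi_2=\operatorname{ker}\bar\mu_1$.

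Next I would construct the Postnikov class. Pick a set-theoretic section $s:\Pi_1\to G$ with $s(1)=1$. Since $s(a)s(b)$ and $s(ab)$ differ by an element of $\bar\mu_1(\mathsf{H})$, choose $f(a,b)\in\mathsf{H}$ with $s(a)s(b)=\bar\mu_1(f(a,b))\,s(ab)$. Then associativity $s(a)(s(b)s(c))=(s(a)s(b))s(c)$ produces, after applying $\bar\mu_1$, a relation that forces the combination
\[
\tau(a,b,c) \;=\; f(a,b)\,f(ab,c)\,f(b,c)^{-1}\bigl(s(a)\rhd f(b,c)\bigr)^{-1}\cdot \bigl(\text{appropriate correction}\bigr)
\]
to land in $\operatorname{ker}\bar\mu_1=\Pi_2$. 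I would verify that $\tau:\Pi_1^{\times 3}\to\Pi_2$ is a normalized 3-cocycle in the standard group cohomology complex $C^\bullet(\Pi_1,\Pi_2)$, and that changing the section $s\to s'$ or the lift $f\to f'$ shifts $\tau$ by a 3-coboundary. This gives a well-defined $[\tau]\in H^3(\Pi_1,\Pi_2)$.

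For the converse, I would perform a skeletalization. Given $(\Pi_1,\Pi_2,\tau)$, define a 2-group $\mathbb{G}_\tau$ with $G=\Pi_1$, $\mathsf{H}=\Pi_2$, $\bar\mu_1=0$, and $\rhd$ the given module action, but with a \emph{weak} associator twisted by $\tau$. To realize this as a strict crossed-module as in Definition \ref{2grpdef}, I would pass through the equivalent presentation using the bar construction: write $\mathsf{H}_\tau = \Pi_2 \times F(\Pi_1)$ for a free resolution $F(\Pi_1)\to\Pi_1$, with $\bar\mu_1$ the projection and with $\rhd$ twisted by $\tau$ in the cross relations. Then show that applying the invariant-extraction procedure to this model recovers $(\Pi_1,\Pi_2,[\tau])$.

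Finally I would prove that the assignment $\mathbb{G}\mapsto(\Pi_1,\Pi_2,[\tau])$ factors through categorical equivalence of 2-groups. A weak equivalence $(\Phi,\Psi):\mathbb{G}\to\mathbb{G}'$ induces isomorphisms on $\Pi_1$ and $\Pi_2$ (since these are the homotopy groups of the classifying 2-groupoid), and I would check that a natural isomorphism between two such equivalences produces precisely a 2-cochain whose coboundary is the difference of the two pulled-back cocycles, so $[\tau]$ is equivalence-invariant. Combined with the skeletal model, this would give a bijection between equivalence classes of 2-groups and triples $(\Pi_1,\Pi_2,[\tau])$.

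The main obstacle I expect is the coherence bookkeeping in showing $\tau$ is a genuine 3-cocycle and that all choices (section $s$, lifts $f$, natural isomorphisms between equivalences) shift it by coboundaries in the precisely correct way. Essentially every step requires expanding an associator or Peiffer identity and recognizing the result as a bar-complex differential; the combinatorics are delicate but the underlying strategy is the standard Sinh classification argument, reformulated in the crossed-module language used throughout this paper.
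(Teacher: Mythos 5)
The paper never proves this theorem: it appears in Appendix A as a classical result attributed to Ho\`ang Xu\^an S\'inh and is supported only by citations, so there is no in-paper argument to compare against; your job here is therefore to supply a proof where the paper supplies none. Your outline is the standard S\'inh / Mac\,Lane--Whitehead classification argument and is essentially correct. The extraction of invariants is right: the first Peiffer identity makes $\bar\mu_1(\mathsf{H})$ normal so that $\Pi_1$ is a group, the second makes $\operatorname{ker}\bar\mu_1$ central in $\mathsf{H}$ (for $y\in\operatorname{ker}\bar\mu_1$ one has $yy'y^{-1}=\bar\mu_1(y)\rhd y'=y'$), and that same centrality is what trivializes the $\bar\mu_1(\mathsf{H})$-part of the $G$-action so that $\Pi_2$ becomes a $\Pi_1$-module. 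Two small repairs: the displayed formula for the cocycle should read $\tau(a,b,c)=f(a,b)\,f(ab,c)\,f(a,bc)^{-1}\bigl(s(a)\rhd f(b,c)\bigr)^{-1}$ --- your $f(b,c)^{-1}$ is a slip that the ``appropriate correction'' placeholder does not fully absorb, since one needs exactly this combination to see that associativity forces it into $\operatorname{ker}\bar\mu_1$; and the coefficient group in the paper's statement, $H^3(\Pi_1,\Pi_0)$, is a typo for $H^3(\Pi_1,\Pi_2)$, which you correctly and silently fix. The one step that is genuinely more than bookkeeping is the converse: the skeletal model with $\bar\mu_1=0$ and associator $\tau$ is a \emph{weak} 2-group, whereas Definition 2.1 of the paper is a strict crossed module, so your strictification via a free presentation of $\Pi_1$ is the crux of that direction --- it is precisely the identification of $H^3(\Pi_1,\Pi_2)$ with equivalence classes of crossed-module extensions $1\to\Pi_2\to\mathsf{H}\to G\to\Pi_1\to 1$, and while your sketch of it is correct in outline, it is the part that would need to be written out in full. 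Finally, since this is the discrete statement, your use of set-theoretic sections is legitimate; for honestly smooth Lie 2-groups one would instead need local sections and continuous (Segal--Mitchison) cohomology.
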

\noindent Note $\Pi_2\subset\mathsf{H}$ must be an Abelian group by the Peiffer identity \eqref{pfeif1}. The tuple $(\Pi_1,\Pi_2,\tau)$ is often called the \textit{Ho{\`a}ng data} of $\mathbb{G}$.

An analogous result holds for Lie 2-algebras. Here, one can state equivalence in the context of chain complexes. We say $\G\simeq\G'$ are equivalent iff there exist Lie 2-algebra maps $(\phi,\psi):\G\rightarrow \G',(\phi,\psi)^{-1}:\G'\rightarrow \G$ and invertible chain homotopies
\begin{equation*}
    \eta: (\phi,\psi)^{-1}\circ(\phi,\psi)\Rightarrow \id_\G,\qquad \varepsilon:(\phi,\psi)\circ (\phi,\psi)^{-1}\Rightarrow\id_{\G'}.
\end{equation*}
The following is also a classic result, which was claimed to be first proven by Gerstenhaber by Wagemann in \cite{Wagemann+2021}.
\begin{theorem}
Lie 2-algebras are classified up to equivalence by the following data: a Lie algebra $\mathfrak{n}=\operatorname{coker}\mu_1$, a $\mathfrak{n}$-module $V=\operatorname{ker}\mu_1$ and a degree-3 Lie algebra cohomology class $\kappa\in H^3(\mathfrak{n},V)$, called the {\bf Postnikov class}.
\end{theorem}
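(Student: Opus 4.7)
The plan is to set up a bijection between equivalence classes of Lie 2-algebras (allowed to be weak in the $L_\infty$ sense, i.e.\ equipped with a possible Jacobiator $\mu_3$) and triples $(\mathfrak{n}, V, [\kappa])$, by reducing any such algebra to a \emph{skeletal model} on which the classifying data is manifest.

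\textbf{Step 1: Extracting the invariants.} Given $\G = \h \xrightarrow{\mu_1} \g$, I first observe that $V := \ker\mu_1 \subset \h$ is an abelian ideal by the Peiffer identity \eqref{pfeif1} (as already noted in the paper), while $\mu_1(\h) \subset \g$ is a Lie ideal by the $\g$-equivariance of $\mu_1$. Hence $\mathfrak{n} := \operatorname{coker}\mu_1$ inherits a Lie bracket from $\g$, and $V$ inherits an $\mathfrak{n}$-module structure from $\rhd$; well-definedness on the quotient again uses Peiffer, since $\mu_1(\sfy)\rhd v = [\sfy,v]_\h = 0$ for $v \in V$.

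\textbf{Step 2: Construction of the Postnikov class.} I choose a vector space splitting $s: \mathfrak{n} \to \g$ of the projection $p: \g \to \mathfrak{n}$, and a splitting $t: \h/V \to \h$ of $\h \to \h/V$. Since $\mu_1$ induces an isomorphism $\h/V \xrightarrow{\sim} \mu_1(\h)$, the failure $\beta(\sfx, \sfx') = [s\sfx, s\sfx'] - s[\sfx,\sfx']$ of $s$ to be a Lie algebra map lands in $\mu_1(\h)$ and lifts uniquely via $t$ to $\tilde\beta: \mathfrak{n}^{\wedge 2} \to \h$ with $\mu_1\tilde\beta = \beta$. Define
\begin{equation*}
\kappa(\sfx_1,\sfx_2,\sfx_3) \;=\; \sum_{\text{cyc}} \bigl(\mu_2(s\sfx_1,\tilde\beta(\sfx_2,\sfx_3)) - \tilde\beta([\sfx_1,\sfx_2],\sfx_3)\bigr).
\end{equation*}
I then verify: (a) $\mu_1\kappa = 0$, so that $\kappa$ is $V$-valued, using Peiffer, $\g$-equivariance of $\mu_1$ and the Jacobi identity in $\g$; (b) $\kappa$ is a Chevalley--Eilenberg 3-cocycle of $\mathfrak{n}$ with coefficients in $V$, which follows from the graded Jacobi identity \eqref{ec:gradjacobid} by a direct expansion of $\delta\kappa$; and (c) different choices of $(s,\tilde\beta)$ change $\kappa$ by a CE coboundary. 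Thus $[\kappa] \in H^3(\mathfrak{n}, V)$ is a well-defined invariant.

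\textbf{Step 3: Skeletalisation.} The key step is to show that $\G$ is equivalent, as a 2-term $L_\infty$-algebra, to its \emph{skeleton} $\G^{\mathrm{sk}} = V \xrightarrow{0} \mathfrak{n}$, carrying the bracket $[-,-]_\mathfrak{n}$, the action $\rhd$, and the ternary bracket $\ell_3 := \kappa$. I construct the equivalence via the pair of chain maps
\begin{equation*}
(p, p_V): \G \rightleftarrows \G^{\mathrm{sk}} : (s, \iota),
\end{equation*}
where $p_V: \h \to V$ is the projection dual to $t$ and $\iota: V \hookrightarrow \h$ is the inclusion. These fail to be strict $L_\infty$-morphisms, but the data $(s,\tilde\beta)$ supply the higher components making them into weak $L_\infty$-quasi-isomorphisms, with the failure of strictness being measured exactly by $\kappa$. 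Closing the full system of $L_\infty$-morphism equations --- in particular terms of the form $\psi([\sfx,\sfx']) - [\psi\sfx, \psi\sfx'] - \mu_1 \phi_2(\sfx,\sfx')$ and their cubic analogues --- on the nose is where I expect the main technical obstacle to lie, as it requires careful bookkeeping linking $s, t, \tilde\beta$ to the higher morphism data.

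\textbf{Step 4: Classification of skeletal models and the converse.} On $V \xrightarrow{0} \mathfrak{n}$ the homotopy Jacobi identities for $(\ell_2, \ell_3)$ reduce to: (i) $[-,-]_\mathfrak{n}$ is a Lie bracket; (ii) $\rhd$ is a Lie action on $V$; and (iii) $\delta\kappa = 0$. Hence any triple $(\mathfrak{n}, V, \kappa)$ with $\kappa$ a 3-cocycle defines a skeletal Lie 2-algebra, giving the inverse construction. Finally, an equivalence between two skeletal models $(\mathfrak{n}, V, \kappa)$ and $(\mathfrak{n}, V, \kappa')$ sharing the same $\mathfrak{n}, V$ and module structure, once the underlying chain isomorphism is gauge-fixed to the identity, is encoded by a 2-cochain $\eta \in C^2(\mathfrak{n}, V)$ whose compatibility with $\ell_3$ forces $\kappa - \kappa' = \delta\eta$. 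Combined with Steps 1--3, this yields the claimed bijection between equivalence classes of Lie 2-algebras and triples $(\mathfrak{n}, V, [\kappa])$.
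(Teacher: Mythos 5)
Your construction is correct, but it takes a more hands-on route than the paper, which proves this statement as an immediate corollary of the homotopy transfer theorem (\S \ref{sec:homtransf}): one retracts the dgla $\G$ onto its cohomology $H(\G)=V\xrightarrow{0}\mathfrak{n}$ via a projection $p$, inclusion $\iota$ and contracting homotopy $\eta$, and the transferred ternary bracket $\tilde\ell_3 = \circlearrowright\, [\ell_2(\eta(\ell_2(-,-)),-)]$ is then automatically a consistent $L_\infty$-structure on the skeleton, with $\tilde\ell_3$ a Chevalley--Eilenberg $3$-cocycle representing $\kappa$ and with the quasi-isomorphism $\G\simeq H(\G)$ supplied by the theorem itself. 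Your splittings $s,t$ and the lift $\tilde\beta$ are precisely an explicit choice of $\iota$ and $\eta$ (one checks $\mu_1\eta([s\sfx,s\sfx'])=-\beta(\sfx,\sfx')$, so $\eta\circ\ell_2 = -\tilde\beta$ up to sign), and your cyclic formula for $\kappa$ reproduces the transferred bracket up to the $\tilde\beta([\sfx_1,\sfx_2],\sfx_3)$ correction, which implements the projection onto $V$ and changes the cocycle only within its class. What the paper's route buys is that the cocycle condition, the independence of choices, and --- crucially --- your Step 3 (promoting $(p,p_V)$ and $(s,\iota)$ to weak $L_\infty$-quasi-isomorphisms, which you rightly flag as the main technical obstacle) all come for free from the transfer theorem rather than from bookkeeping of the $L_\infty$-morphism equations. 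What your route buys is self-containedness and an explicit formula for the representative cocycle; it is the standard Gerstenhaber/Baez--Crans skeletalisation argument, and your honest verification that $\mu_1\kappa=0$ via Peiffer, equivariance and the two Jacobi identities is exactly right. One small caveat: as stated, the bijection with all of $H^3(\mathfrak{n},V)$ requires admitting weak ($L_\infty$) equivalences and, in your converse direction, weak skeletal models with $\ell_3=\kappa$; since the paper's {\bf Definition \ref{lie2alg}} is strict, you are implicitly invoking strictification to return to the paper's setting, which is fine but worth saying.
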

\noindent One can show that, if two principal 2-bundles $\mathcal{P},\mathcal{P}'\rightarrow X$ have structure 2-groups $\mathbb{G},\mathbb{G}'$ that have distinct Postnikov classes, then they cannot be isomorphic 2-bundles over $X$.

\section{Homotopy transfer}\label{sec:homtransf}
The BV-BRST formalism is a way to organize the algebraic and geometric data in a field theory with gauge symmetry into cochain complexes, which are designed such that their cohomologies describes precisely the gauge-invariant on-shell quantities. The mathematical structure that arises out of this construction is an {\it $L_\infty$-algebra}, which can be described as a $\mathbb{Z}$-graded differential complex 
$$\mathfrak{L}\cong \dots \rightarrow \mathfrak{l}_{-2}\xrightarrow{\ell_1}\l_{-1}\xrightarrow{\mu_1}\l_0\xrightarrow{\ell_1}\l_1\xrightarrow{\ell_1}\l_2\rightarrow \dots$$ equipped with higher-brackets $\ell_n\in \operatorname{Hom}^{2-n}(\L^{n\otimes},\L)$ for $n\geq 2$ satisfying the Koszul-Jacobi identities. In gauge field theories with structure (2-)group $\mathbb{G}$, such as the homotopy Maurer-Cartan theories \cite{Jurco:2018sby}, for instance, this complex is given by $\L = \Omega^\bullet(X)\otimes \G$ where $\G=\operatorname{Lie}\mathbb{G}$.

Now from a purely mathematical perspective, it is known that $L_\infty$-algebras retracts onto its cohomology \cite{stasheff2018,Alfonsi_2023}.\footnote{$L_\infty$-algebras are objects in the derived category of Lie algebras. More generally in any derived Abelian category, equivalences are precisely quasi-isomorphisms, ie. derived chain maps that induce isomorphisms in cohomologies.} In other words, given a $L_\infty$-algebra $\L$, the inclusion $\iota:H(\L)\hookrightarrow \L$ treating $H^\bullet(\L)\cong\operatorname{ker}\ell_1/\operatorname{im}\ell_1$ as a subcomplex of $\L$ admits a homotopy inverse (ie. an adjunction) given by the projection $p:\L\rightarrow H(\L)$ --- there exists {chain homotopies} $\eta: \iota\circ p\Rightarrow \id_{\G}$, $\varepsilon: p\circ \iota\Rightarrow \id_{H(\G)}$ that "witness" this retraction. To be a chain homotopy, the map $\eta=(\eta_n)$ --- where $\eta_n: \L_n\rightarrow \L_{n-1}$ --- for instance must satisfy
\begin{equation*}
    \iota \circ p - 1 = \eta \circ \ell_1 + \ell_1\circ \eta 
\end{equation*}
as linear maps on $\L$.

In this context, it can be shown that a $L_\infty$-algebra structure, ie. the higher brackets $\tilde\ell=(\tilde\ell_n)_{n\geq 2}$, on $H(\L)$ can be induced from that $\ell=(\ell_n)_{n\geq 2}$ on $\L$ by using the projector $p$; notice, by construction, the cohomology $H(\L)$ has trivial differential. This is given by the formula
\begin{equation}\label{homotopytransfer}
    \tilde\ell = p\circ (\ell + \ell\circ \eta\circ \ell + \ell \circ \eta\circ \ell\circ \eta\circ \ell + \dots ) \circ \iota,
\end{equation}
and it can be proven that $\tilde\ell$ indeed satisfies the Koszul-Jacobi identities on $H(\L)$ \cite{stasheff2018,Arvanitakis:2020rrk}. This is the celebrated \textbf{homotopy transfer theorem} for $L_\infty$-algebras: the $L_\infty$-algebra structure on $\L$ is transferred to a $L_\infty$-algebra structure on its cohomology $H(\L)$ through the projection quasi-isomorphism $p$.

An interesting case occurs when $\L$ is only a dgla, such that $\ell_n=0$ for all $n>2$. The formula \eqref{homotopytransfer} implies that homotopy transfer in fact induces higher homotopy brackets on $H(\L)$. Indeed, at degree-3, we see that we have a generally non-trivial contribution given for each $\sfx_1,\sfx_2,\sfx_3\in\L$ by
\begin{equation*}
    \tilde\ell_3([\sfx_1],[\sfx_2],[\sfx_3]) = ~\circlearrowright  [\ell_2(\eta(\ell_2([\sfx_1],[\sfx_2])),\eta([\sfx_3]))],
\end{equation*}
where $\circlearrowright$ denotes a cyclic sum and we have denoted the cohomology class of an element $\sfx\in\L$ by $p(\sfx)=[\sfx]\in H(\L)$. This map $\tilde\ell_3$ is usually called the \textbf{homotopy map} in the literature \cite{Chen:2012gz}, and can be seen to measure the failure of the chain homotopy $\eta$ to respect the graded Jacobi identity. This is relevant for us, as given a strict Lie 2-algebra $\G$, the associated BV-BRST $L_\infty$-algebra $\L = \Omega^\bullet(X)\otimes \G$ is indeed a dgla with $\ell_1=d - \mu_1$. However, upon performing a homotopy transfer, we see that a degree-3 homotopy map is induced on the cohomology $H(\L)$. 

This idea can be used to immediately to prove Gerstenhaber's theorem in \S \ref{classification}. The cohomology $H(\G)$ of a Lie 2-algebra $\G$ consist of $M\xrightarrow{0}\mathfrak{n}$, and the homotopy map $\tilde\ell_3$ is a 3-cocycle representing the Postnikov class $\kappa\in H^3(\mathfrak{n},M)$.

\printbibliography

@article{Waldorf2015TransgressiveLG,
  title={Transgressive loop group extensions},
  author={Konrad Waldorf},
  journal={Mathematische Zeitschrift},
  year={2015},
  volume={286},
  pages={325-360},
  url={https://api.semanticscholar.org/CorpusID:119641403}
}

@inproceedings{Ikeda:2012pv,
    author = "Ikeda, Noriaki",
    title = "{Lectures on AKSZ Sigma Models for Physicists}",
    booktitle = "{Workshop on Strings, Membranes and Topological Field Theory}",
    eprint = "1204.3714",
    archivePrefix = "arXiv",
    primaryClass = "hep-th",
    doi = "10.1142/9789813144613_0003",
    publisher = "WSPC",
    pages = "79--169",
    year = "2017"
}

@article{Calaque:2021sgp,
    author = "Calaque, Damien and Haugseng, Rune and Scheimbauer, Claudia",
    title = "{The AKSZ Construction in Derived Algebraic Geometry as an Extended Topological Field Theory}",
    eprint = "2108.02473",
    archivePrefix = "arXiv",
    primaryClass = "math.CT",
    reportNumber = "CPH-SYM-DNRF92",
    month = "8",
    year = "2021"
}

@article{Schreiber:2013pra,
    author = "Schreiber, Urs",
    title = "{Differential cohomology in a cohesive infinity-topos}",
    eprint = "1310.7930",
    archivePrefix = "arXiv",
    primaryClass = "math-ph",
    month = "10",
    year = "2013"
}

@article{Ginot:2009pil,
    author = "Ginot, Gregory and Xu, Ping",
    title = "{Cohomology of Lie 2-groups}",
    eprint = "0712.2069",
    archivePrefix = "arXiv",
    primaryClass = "math.AT",
    journal = "Enseign. Math.",
    volume = "55",
    pages = "373--396",
    year = "2009"
}

@article{Waldorf:2012,
      title={Transgression to Loop Spaces and its Inverse, I: Diffeological Bundles and Fusion Maps}, 
      author={Konrad Waldorf},
journal = {	Cah. Topol. Geom. Differ. Categ.}, 
year={2012},
volume={LIII}, 
pages={162-210},
      primaryClass={math.DG},
      url={https://arxiv.org/abs/0911.3212}
}

@article{Sharpe:2015mja,
    author = "Sharpe, Eric",
    title = "{Notes on generalized global symmetries in QFT}",
    eprint = "1508.04770",
    archivePrefix = "arXiv",
    primaryClass = "hep-th",
    doi = "10.1002/prop.201500048",
    journal = "Fortsch. Phys.",
    volume = "63",
    pages = "659--682",
    year = "2015"
}

@article{Nikolaus2011FOUREV,
  title={FOUR EQUIVALENT VERSIONS OF NONABELIAN GERBES},
  author={Thomas Nickelsen Nikolaus and Konrad Waldorf},
  journal={Pacific Journal of Mathematics},
  year={2011},
  volume={264},
  pages={355-420},
  url={https://api.semanticscholar.org/CorpusID:55265704}
}

@article{Ashwinkumar:2020gxt,
    author = "Ashwinkumar, Meer",
    title = "{Integrable Lattice Models and Holography}",
    eprint = "2003.08931",
    archivePrefix = "arXiv",
    primaryClass = "hep-th",
    doi = "10.1007/JHEP02(2021)227",
    journal = "JHEP",
    volume = "02",
    pages = "227",
    year = "2021"
}

@article{Morrison2019InvariantsO4,
  title={Invariants of 4–manifolds from Khovanov–Rozansky
link homology},
  author={Scott Morrison and Kevin Walker and Paul Wedrich},
  journal={Geometry \& Topology},
  year={2019},
  url={https://api.semanticscholar.org/CorpusID:198967556}
}

@article{Alfonsi:2024qdr,
    author = "Alfonsi, Luigi and Kim, Hyungrok and Young, Charles A. S.",
    title = "{Raviolo vertex algebras, cochains and conformal blocks}",
    eprint = "2401.11917",
    archivePrefix = "arXiv",
    primaryClass = "math.QA",
    month = "1",
    year = "2024"
}

@article{Manolescu2022SkeinLM,
  title={Skein lasagna modules and handle decompositions},
  author={Ciprian Manolescu and Kevin Walker and Paul Wedrich},
  journal={Advances in Mathematics},
  year={2022},
  url={https://api.semanticscholar.org/CorpusID:249538143}
}

@article{Smale:1962,
 ISSN = {00029327, 10806377},
 URL = {http://www.jstor.org/stable/2372978},
 author = {S. Smale},
 journal = {American Journal of Mathematics},
 number = {3},
 pages = {387--399},
 publisher = {Johns Hopkins University Press},
 title = {On the Structure of Manifolds},
 urldate = {2024-05-27},
 volume = {84},
 year = {1962}
}

@book{Freedman:1990,
 URL = {http://www.jstor.org/stable/j.ctt7ztz2d},
 abstract = {One of the great achievements of contemporary mathematics is the new understanding of four dimensions. Michael Freedman and Frank Quinn have been the principals in the geometric and topological development of this subject, proving the Poincar and Annulus conjectures respectively. Recognition for this work includes the award of the Fields Medal of the International Congress of Mathematicians to Freedman in 1986. In Topology of 4-Manifolds these authors have collaborated to give a complete and accessible account of the current state of knowledge in this field. The basic material has been considerably simplified from the original publications, and should be accessible to most graduate students. The advanced material goes well beyond the literature; nearly one-third of the book is new. This work is indispensable for any topologist whose work includes four dimensions. It is a valuable reference for geometers and physicists who need an awareness of the topological side of the field.Originally published in 1990.ThePrinceton Legacy Libraryuses the latest print-on-demand technology to again make available previously out-of-print books from the distinguished backlist of Princeton University Press. These paperback editions preserve the original texts of these important books while presenting them in durable paperback editions. The goal of the Princeton Legacy Library is to vastly increase access to the rich scholarly heritage found in the thousands of books published by Princeton University Press since its founding in 1905.},
 author = {Michael H. Freedman and Frank Quinn},
 publisher = {Princeton University Press},
 title = {Topology of 4-Manifolds (PMS-39)},
 urldate = {2024-05-27},
 year = {1990}
}

@book{Milnor+1965,
url = {https://doi.org/10.1515/9781400878055},
title = {Lectures on the H-Cobordism Theorem},
author = {John Milnor},
publisher = {Princeton University Press},
address = {Princeton},
doi = {doi:10.1515/9781400878055},
isbn = {9781400878055},
year = {1965},
lastchecked = {2024-05-27}
}

@article{baez2023ho,
  title={Ho{\`a}ng Xu{\^a}n S{\'i}nh's Thesis: Categorifying Group Theory},
  author={Baez, John C},
  journal={arXiv preprint arXiv:2308.05119},
  year={2023}
}

@article{davydov2017lagrangian,
  title={On Lagrangian algebras in group-theoretical braided fusion categories},
  author={Davydov, Alexei and Simmons, Darren},
  journal={Journal of Algebra},
  volume={471},
  pages={149--175},
  year={2017},
  publisher={Elsevier}
}

@book{Lurie:2009,
 ISBN = {9780691140490},
 URL = {http://www.jstor.org/stable/j.ctt7s47v},
 abstract = {Higher category theory is generally regarded as technical and forbidding, but part of it is considerably more tractable: the theory of infinity-categories, higher categories in which all higher morphisms are assumed to be invertible. InHigher Topos Theory, Jacob Lurie presents the foundations of this theory, using the language of weak Kan complexes introduced by Boardman and Vogt, and shows how existing theorems in algebraic topology can be reformulated and generalized in the theory's new language. The result is a powerful theory with applications in many areas of mathematics.The book's first five chapters give an exposition of the theory of infinity-categories that emphasizes their role as a generalization of ordinary categories. Many of the fundamental ideas from classical category theory are generalized to the infinity-categorical setting, such as limits and colimits, adjoint functors, ind-objects and pro-objects, locally accessible and presentable categories, Grothendieck fibrations, presheaves, and Yoneda's lemma. A sixth chapter presents an infinity-categorical version of the theory of Grothendieck topoi, introducing the notion of an infinity-topos, an infinity-category that resembles the infinity-category of topological spaces in the sense that it satisfies certain axioms that codify some of the basic principles of algebraic topology. A seventh and final chapter presents applications that illustrate connections between the theory of higher topoi and ideas from classical topology.},
 author = {Jacob Lurie},
 publisher = {Princeton University Press},
 title = {Higher Topos Theory (AM-170)},
 urldate = {2024-03-26},
 year = {2009}
}

@article{Schenkel:2024dcd,
    author = "Schenkel, Alexander and Vicedo, Benoit",
    title = "{5d 2-Chern-Simons theory and 3d integrable field theories}",
    eprint = "2405.08083",
    archivePrefix = "arXiv",
    primaryClass = "hep-th",
    month = "5",
    year = "2024"
}

@article{lurie2008classification,
  title={On the classification of topological field theories},
  author={Lurie, Jacob},
  journal={Current developments in mathematics},
  volume={2008},
  number={1},
  pages={129--280},
  year={2008},
  publisher={International Press of Boston}
}

@article{Atiyah:1988,
     author = {Atiyah, Michael F.},
     title = {Topological quantum field theory},
     journal = {Publications Math\'ematiques de l'IH{\'E}S},
     pages = {175--186},
     publisher = {Institut des Hautes \'Etudes Scientifiques},
     volume = {68},
     year = {1988},
     mrnumber = {1001453},
     zbl = {0692.53053},
     language = {en},
     url = {http://www.numdam.org/item/PMIHES_1988__68__175_0/}
}

@inbook{Tillmann_2004, 
place={Cambridge}, 
series={London Mathematical Society Lecture Note Series}, 
title={The definition of CFT}, 
booktitle={Topology, Geometry and Quantum Field Theory: Proceedings of the 2002 Oxford Symposium in Honour of the 60th Birthday of Graeme Segal}, 
publisher={Cambridge University Press}, 
year={2004}, 
pages={432–575}, 
collection={London Mathematical Society Lecture Note Series}
}

@article{Guadagnini2008DeligneBeilinsonCA,
  title={Deligne-Beilinson Cohomology and Abelian Link Invariants},
  author={Enore Guadagnini and Frank Thuillier},
  journal={Symmetry Integrability and Geometry-methods and Applications},
  year={2008},
  volume={4},
  pages={078},
  url={https://api.semanticscholar.org/CorpusID:10648052}
}

@article{stasheff2018,
  title={$L_\infty$ and $A_\infty$ structures: then and now},
  author={Stasheff, Jim},
  journal={Higher Structures},
volume = {3},
number ={1},
pages = {292–326},
  year={2019}
}

@article{Arvanitakis:2020rrk,
    author = "Arvanitakis, Alex S. and Hohm, Olaf and Hull, Chris and Lekeu, Victor",
    title = "{Homotopy Transfer and Effective Field Theory I: Tree-level}",
    eprint = "2007.07942",
    archivePrefix = "arXiv",
    primaryClass = "hep-th",
    reportNumber = "Imperial-TP-2020-CH-02",
    doi = "10.1002/prop.202200003",
    journal = "Fortsch. Phys.",
    volume = "70",
    number = "2-3",
    pages = "2200003",
    year = "2022"
}

@article{Costello:2013sla,
    author = "Costello, Kevin",
    editor = "Donagi, Ron and Douglas, Michael R. and Kamenova, Ljudmila and Rocek, Martin",
    title = "{Integrable lattice models from four-dimensional field theories}",
    eprint = "1308.0370",
    archivePrefix = "arXiv",
    primaryClass = "hep-th",
    doi = "10.1090/pspum/088/01483",
    journal = "Proc. Symp. Pure Math.",
    volume = "88",
    pages = "3--24",
    year = "2014"
}

@article{Costello:2013zra,
    author = "Costello, Kevin",
    title = "{Supersymmetric gauge theory and the Yangian}",
    eprint = "1303.2632",
    archivePrefix = "arXiv",
    primaryClass = "hep-th",
    month = "3",
    year = "2013"
}

@inproceedings{Tong:2016kpv,
    author = "Tong, David",
    title = "{Lectures on the Quantum Hall Effect}",
    eprint = "1606.06687",
    archivePrefix = "arXiv",
    primaryClass = "hep-th",
    month = "6",
    year = "2016"
}

@article{Soncini:2014ara,
    author = "Soncini, Emanuele and Zucchini, Roberto",
    title = "{4-D semistrict higher Chern-Simons theory I}",
    eprint = "1406.2197",
    archivePrefix = "arXiv",
    primaryClass = "hep-th",
    reportNumber = "DIFA-14",
    doi = "10.1007/JHEP10(2014)079",
    journal = "JHEP",
    volume = "10",
    pages = "079",
    year = "2014"
}

@ARTICLE{Huang:2024,
       author = {{Huang}, Mo and {Xu}, Hao and {Zhang}, Zhi-Hao},
        title = "{The 2-character theory for finite 2-groups}",
      journal = {arXiv e-prints},
     keywords = {Mathematics - Representation Theory, Mathematics - Category Theory, Mathematics - Quantum Algebra},
         year = 2024,
        month = apr,
          eid = {arXiv:2404.01162},
        pages = {arXiv:2404.01162},
          doi = {10.48550/arXiv.2404.01162},
archivePrefix = {arXiv},
       eprint = {2404.01162},
 primaryClass = {math.RT},
       adsurl = {https://ui.adsabs.harvard.edu/abs/2024arXiv240401162H},
      adsnote = {Provided by the SAO/NASA Astrophysics Data System}
}

@article{KNIZHNIK198483,
title = {Current algebra and Wess-Zumino model in two dimensions},
journal = {Nuclear Physics B},
volume = {247},
number = {1},
pages = {83-103},
year = {1984},
issn = {0550-3213},
doi = {https://doi.org/10.1016/0550-3213(84)90374-2},
author = {V.G. Knizhnik and A.B. Zamolodchikov},
abstract = {We investigate quantum field theory in two dimensions invariant with respect to conformal (Virasoro) and non-abelian current (Kac-Moody) algebras. The Wess-Zumino model is related to the special case of the representations of these algebras, the conformal generators being quadratically expressed in terms of currents. The anomalous dimensions of the Wess-Zumino fields are found exactly, and the multipoint correlation functions are shown to satisfy linear differential equations. In particular, Witten's non-abelean bosonisation rules are proven.}
}

@article{Julia:2002df,
    author = "Julia, Bernard and Silva, Sebastian",
    title = "{On covariant phase space methods}",
    eprint = "hep-th/0205072",
    archivePrefix = "arXiv",
    reportNumber = "LPT-ENS-01-28, AEI-2001-062",
    month = "5",
    year = "2002"
}

@article{Ward:1977ta,
    author = "Ward, R. S.",
    title = "{On Selfdual gauge fields}",
    doi = "10.1016/0375-9601(77)90842-8",
    journal = "Phys. Lett. A",
    volume = "61",
    pages = "81--82",
    year = "1977"
}

@article{Martins:2007,
author = {Jo{\~a}0 Faria Martins},
title = {{On the homotopy type and the fundamental crossed complex of the skeletal filtration of a CW-complex}},
volume = {9},
journal = {Homology, Homotopy and Applications},
number = {1},
publisher = {International Press of Boston},
pages = {295 -- 329},
keywords = {crossed complex, CW-complex, higher homotopy van Kampen Theorem, invariants of homotopy types, skeletal filtration},
year = {2007},
}

@article{Whitehead:1941,
 ISSN = {0003486X},
 URL = {http://www.jstor.org/stable/1968907},
 author = {J. H. C. Whitehead},
 journal = {Annals of Mathematics},
 number = {2},
 pages = {409--428},
 publisher = {Annals of Mathematics},
 title = {On Adding Relations to Homotopy Groups},
 urldate = {2024-02-28},
 volume = {42},
 year = {1941}
}

@article{Costello:2017dso,
    author = "Costello, Kevin and Witten, Edward and Yamazaki, Masahito",
    title = "{Gauge Theory and Integrability, I}",
    eprint = "1709.09993",
    archivePrefix = "arXiv",
    primaryClass = "hep-th",
    reportNumber = "IPMU17-0136",
    doi = "10.4310/ICCM.2018.v6.n1.a6",
    journal = "ICCM Not.",
    volume = "06",
    number = "1",
    pages = "46--119",
    year = "2018"
}

@article{Costello:2018gyb,
    author = "Costello, Kevin and Witten, Edward and Yamazaki, Masahito",
    title = "{Gauge Theory and Integrability, II}",
    eprint = "1802.01579",
    archivePrefix = "arXiv",
    primaryClass = "hep-th",
    reportNumber = "IPMU18-0025",
    doi = "10.4310/ICCM.2018.v6.n1.a7",
    journal = "ICCM Not.",
    volume = "06",
    number = "1",
    pages = "120--146",
    year = "2018"
}

@article{Bittleston:2020hfv,
    author = "Bittleston, Roland and Skinner, David",
    title = "{Twistors, the ASD Yang-Mills equations and 4d Chern-Simons theory}",
    eprint = "2011.04638",
    archivePrefix = "arXiv",
    primaryClass = "hep-th",
    doi = "10.1007/JHEP02(2023)227",
    journal = "JHEP",
    volume = "02",
    pages = "227",
    year = "2023"
}

@article{Benini:2020skc,
    author = "Benini, Marco and Schenkel, Alexander and Vicedo, Benoit",
    title = "{Homotopical Analysis of 4d Chern-Simons Theory and Integrable Field Theories}",
    eprint = "2008.01829",
    archivePrefix = "arXiv",
    primaryClass = "hep-th",
    doi = "10.1007/s00220-021-04304-7",
    journal = "Commun. Math. Phys.",
    volume = "389",
    number = "3",
    pages = "1417--1443",
    year = "2022"
}

@article{Fukushima:2020kta,
    author = "Fukushima, Osamu and Sakamoto, Jun-ichi and Yoshida, Kentaroh",
    title = "{Comments on $\eta$-deformed principal chiral model from 4D Chern-Simons theory}",
    eprint = "2003.07309",
    archivePrefix = "arXiv",
    primaryClass = "hep-th",
    reportNumber = "KUNS-2802",
    doi = "10.1016/j.nuclphysb.2020.115080",
    journal = "Nucl. Phys. B",
    volume = "957",
    pages = "115080",
    year = "2020"
}

@article{Fukushima:2020tqv,
    author = "Fukushima, Osamu and Sakamoto, Jun-Ichi and Yoshida, Kentaroh",
    title = "{Faddeev-Reshetikhin model from a 4D Chern-Simons theory}",
    eprint = "2012.07370",
    archivePrefix = "arXiv",
    primaryClass = "hep-th",
    reportNumber = "KUNS-2847",
    doi = "10.1007/JHEP02(2021)115",
    journal = "JHEP",
    volume = "02",
    pages = "115",
    year = "2021"
}

@article{Fukushima:2021ako,
    author = "Fukushima, Osamu and Sakamoto, Jun-ichi and Yoshida, Kentaroh",
    title = "{Non-Abelian Toda field theories from a 4D Chern-Simons theory}",
    eprint = "2112.11276",
    archivePrefix = "arXiv",
    primaryClass = "hep-th",
    reportNumber = "KUNS-2910",
    doi = "10.1007/JHEP03(2022)158",
    journal = "JHEP",
    volume = "03",
    pages = "158",
    year = "2022"
}

@article{Costello:2020lpi,
    author = "Costello, Kevin and Stefa\'nski, Bogdan",
    title = "{Chern-Simons Origin of Superstring Integrability}",
    eprint = "2005.03064",
    archivePrefix = "arXiv",
    primaryClass = "hep-th",
    doi = "10.1103/PhysRevLett.125.121602",
    journal = "Phys. Rev. Lett.",
    volume = "125",
    number = "12",
    pages = "121602",
    year = "2020"
}

@article{Berkovits:2024reg,
    author = "Berkovits, Nathan and Pitombo, Rodrigo S.",
    title = "{4D Chern-Simons and the pure spinor AdS5\texttimes{}S5 superstring}",
    eprint = "2401.03976",
    archivePrefix = "arXiv",
    primaryClass = "hep-th",
    doi = "10.1103/PhysRevD.109.106015",
    journal = "Phys. Rev. D",
    volume = "109",
    number = "10",
    pages = "106015",
    year = "2024"
}

@article{Lacroix:2023qlz,
    author = "Lacroix, Sylvain and Wallberg, Anders",
    title = "{An elliptic integrable deformation of the Principal Chiral Model}",
    eprint = "2311.09301",
    archivePrefix = "arXiv",
    primaryClass = "hep-th",
    reportNumber = "CERN-TH-2023-205",
    doi = "10.1007/JHEP05(2024)006",
    journal = "JHEP",
    volume = "05",
    pages = "006",
    year = "2024"
}

@article{Lacroix:2024wrd,
    author = "Lacroix, Sylvain and Wallberg, Anders",
    title = "{Geometry of the spectral parameter and renormalisation of integrable sigma-models}",
    eprint = "2401.13741",
    archivePrefix = "arXiv",
    primaryClass = "hep-th",
    reportNumber = "CERN-TH-2024-008",
    doi = "10.1007/JHEP05(2024)108",
    journal = "JHEP",
    volume = "05",
    pages = "108",
    year = "2024"
}

@article{Witten:1992fb,
    author = "Witten, Edward",
    title = "{Chern-Simons gauge theory as a string theory}",
    eprint = "hep-th/9207094",
    archivePrefix = "arXiv",
    reportNumber = "IASSNS-HEP-92-45",
    journal = "Prog. Math.",
    volume = "133",
    pages = "637--678",
    year = "1995"
}

@article{Caudrelier:2020xtn,
    author = "Caudrelier, Vincent and Stoppato, Matteo and Vicedo, Benoit",
    title = "{On the Zakharov\textendash{}Mikhailov action: $4\hbox {d}$ Chern\textendash{}Simons origin and covariant Poisson algebra of the Lax connection}",
    eprint = "2012.04431",
    archivePrefix = "arXiv",
    primaryClass = "hep-th",
    doi = "10.1007/s11005-021-01425-x",
    journal = "Lett. Math. Phys.",
    volume = "111",
    number = "3",
    pages = "82",
    year = "2021"
}

@article{Lacroix:2021iit,
    author = "Lacroix, Sylvain",
    title = "{Four-dimensional Chern\textendash{}Simons theory and integrable field theories}",
    eprint = "2109.14278",
    archivePrefix = "arXiv",
    primaryClass = "hep-th",
    doi = "10.1088/1751-8121/ac48ed",
    journal = "J. Phys. A",
    volume = "55",
    number = "8",
    pages = "083001",
    year = "2022"
}

@article{Liniado:2023uoo,
    author = "Liniado, Joaquin and Vicedo, Benoit",
    title = "{Integrable Degenerate $\mathcal{E}$-Models from 4d Chern\textendash{}Simons Theory}",
    eprint = "2301.09583",
    archivePrefix = "arXiv",
    primaryClass = "hep-th",
    doi = "10.1007/s00023-023-01317-x",
    journal = "Annales Henri Poincare",
    volume = "24",
    number = "10",
    pages = "3421--3459",
    year = "2023"
}

@article{Sean:private,
    author = {Sanford, Sean},
    title = {Universal traces and 2-characters for 2-representations of groups},
    journal = {private communiations},
    year = 2024
}

@PHDTHESIS{Bartlett:2009PhD,
       author = {{Bartlett}, Bruce},
        title = "{On unitary 2-representations of finite groups and topological quantum field theory}",
     keywords = {Mathematics - Quantum Algebra, Mathematical Physics, Mathematics - Algebraic Topology, Mathematics - Category Theory, Mathematics - Representation Theory},
       school = {-},
         year = 2009,
        month = jan,
       adsurl = {https://ui.adsabs.harvard.edu/abs/2009PhDT.......295B},
      adsnote = {Provided by the SAO/NASA Astrophysics Data System}
}

@article{Scrdua2017OnTH,
  title={On transversely holomorphic foliations with homogeneous transverse structure},
  author={B. Sc{\'a}rdua and Liliana Jurado},
  journal={Proceedings of the International Geometry Center},
  year={2017},
  url={https://api.semanticscholar.org/CorpusID:119587144}
}

@article{Aganagic:2017tvx,
    author = "Aganagic, Mina and Costello, Kevin and McNamara, Jacob and Vafa, Cumrun",
    title = "{Topological Chern-Simons/Matter Theories}",
    eprint = "1706.09977",
    archivePrefix = "arXiv",
    primaryClass = "hep-th",
    month = "6",
    year = "2017"
}

@article{Costello:2019tri,
    author = "Costello, Kevin and Yamazaki, Masahito",
    title = "{Gauge Theory And Integrability, III}",
    eprint = "1908.02289",
    archivePrefix = "arXiv",
    primaryClass = "hep-th",
    reportNumber = "IPMU19-0110",
    month = "8",
    year = "2019"
}

@article{Jurco:2018sby,
    author = {Jur\v{c}o, Branislav and Raspollini, Lorenzo and S\"amann, Christian and Wolf, Martin},
    title = "{$L_\infty$-Algebras of Classical Field Theories and the Batalin-Vilkovisky Formalism}",
    eprint = "1809.09899",
    archivePrefix = "arXiv",
    primaryClass = "hep-th",
    reportNumber = "EMPG-18-19, DMUS-MP-18/05",
    doi = "10.1002/prop.201900025",
    journal = "Fortsch. Phys.",
    volume = "67",
    number = "7",
    pages = "1900025",
    year = "2019"
}

@article{Delduc:2019whp,
    author = "Delduc, Francois and Lacroix, Sylvain and Magro, Marc and Vicedo, Benoit",
    title = "{A unifying 2D action for integrable $\sigma $-models from 4D Chern\textendash{}Simons theory}",
    eprint = "1909.13824",
    archivePrefix = "arXiv",
    primaryClass = "hep-th",
    doi = "10.1007/s11005-020-01268-y",
    journal = "Lett. Math. Phys.",
    volume = "110",
    number = "7",
    pages = "1645--1687",
    year = "2020"
}

@article{Alvarez:1997ma,
    author = "Alvarez, Orlando and Ferreira, Luiz A. and Sanchez Guillen, J.",
    title = "{A New approach to integrable theories in any dimension}",
    eprint = "hep-th/9710147",
    archivePrefix = "arXiv",
    reportNumber = "US-FT-31-97, IFT-P-066-97, UMTG-202",
    doi = "10.1016/S0550-3213(98)00400-3",
    journal = "Nucl. Phys. B",
    volume = "529",
    pages = "689--736",
    year = "1998"
}

@article{Zucchini:2021bnn,
    author = "Zucchini, Roberto",
    title = "{4-d Chern-Simons Theory: Higher Gauge Symmetry and Holographic Aspects}",
    eprint = "2101.10646",
    archivePrefix = "arXiv",
    primaryClass = "hep-th",
    reportNumber = "DIFA UNIBO 2021",
    doi = "10.1007/JHEP06(2021)025",
    journal = "JHEP",
    volume = "06",
    pages = "025",
    year = "2021"
}

@article{KazhdanLusztig:1994,
 ISSN = {08940347, 10886834},
 URL = {http://www.jstor.org/stable/2152762},
 author = {D. Kazhdan and G. Lusztig},
 journal = {Journal of the American Mathematical Society},
 number = {2},
 pages = {335--381},
 publisher = {American Mathematical Society},
 title = {Tensor Structures Arising from Affine Lie Algebras. III},
 urldate = {2023-11-18},
 volume = {7},
 year = {1994}
}

@ARTICLE{Ganter:2014,
       author = {{Ganter}, Nora and {Usher}, Robert},
        title = "{Representation and character theory of finite categorical groups}",
        volume = {31}, 
        year = {2016}, 
        number = {21}, 
    pages = {542-570},
      journal = {Theory and Applications of Categories},
     keywords = {Mathematics - Category Theory, Mathematics - Representation Theory, 20J99},
        month = jul,
          eid = {arXiv:1407.6849},
          doi = {10.48550/arXiv.1407.6849},
archivePrefix = {arXiv},
       eprint = {1407.6849},
 primaryClass = {math.CT},
       adsurl = {http://www.tac.mta.ca/tac/volumes/31/21/31-21abs.html}
}

@ARTICLE{Ganter:2006,
       author = {{Ganter}, Nora and {Kapranov}, Mikhail},
        title = "{Representation and character theory in 2-categories}",
      journal = {arXiv Mathematics e-prints},
     keywords = {Mathematics - K-Theory and Homology, Mathematics - Algebraic Topology, Mathematics - Category Theory, 55U99},
         year = 2006,
        month = feb,
          eid = {math/0602510},
        pages = {math/0602510},
          doi = {10.48550/arXiv.math/0602510},
archivePrefix = {arXiv},
       eprint = {math/0602510},
 primaryClass = {math.KT},
       adsurl = {https://ui.adsabs.harvard.edu/abs/2006math......2510G},
      adsnote = {Provided by the SAO/NASA Astrophysics Data System}
}

@article{Alfonsi_2023,
	doi = {10.1016/j.geomphys.2023.104903},
  
	url = {https://doi.org/10.1016%2Fj.geomphys.2023.104903},
  
	year = 2023,
	month = {9},
  
	publisher = {Elsevier {BV}
},
  
	volume = {191},
  
	pages = {104903},
  
	author = {Luigi Alfonsi and Charles Young},
  
	title = {Higher current algebras, homotopy Manin triples, and a rectilinear adelic complex},
  
	journal = {Journal of Geometry and Physics}
}

@article{Porst2008Strict2A,
  title={Strict 2-Groups are Crossed Modules},
  author={Sven-S. Porst},
  journal={arXiv: Category Theory},
  year={2008}
}

@article{Wockel2008Principal2A,
url = {https://doi.org/10.1515/form.2011.020},
title = {Principal 2-bundles and their gauge 2-groups},
title = {},
author = {Christoph Wockel},
pages = {565--610},
volume = {23},
number = {3},
journal = {Forum Mathematicum},
doi = {doi:10.1515/form.2011.020},
year = {2011},
lastchecked = {2023-08-29}
}

@article{Garner:2023zqn,
    author = "Garner, Niklas and Williams, Brian R.",
    title = "{Raviolo vertex algebras}",
    eprint = "2308.04414",
    archivePrefix = "arXiv",
    primaryClass = "math.QA",
    month = "8",
    year = "2023"
}

@article{Reutter:2020bav,
    author = "Reutter, David",
    title = "{Semisimple 4-dimensional topological field theories cannot detect exotic smooth structure}",
    eprint = "2001.02288",
    archivePrefix = "arXiv",
    primaryClass = "math.GT",
    month = "1",
    year = "2020"
}

@article{Gaiotto:2014kfa,
    author = "Gaiotto, Davide and Kapustin, Anton and Seiberg, Nathan and Willett, Brian",
    title = "{Generalized Global Symmetries}",
    eprint = "1412.5148",
    archivePrefix = "arXiv",
    primaryClass = "hep-th",
    doi = "10.1007/JHEP02(2015)172",
    journal = "JHEP",
    volume = "02",
    pages = "172",
    year = "2015"
}

@article{BAEZ1998145,
title = {Higher-Dimensional Algebra III.n-Categories and the Algebra of Opetopes},
journal = {Advances in Mathematics},
volume = {135},
number = {2},
pages = {145-206},
year = {1998},
issn = {0001-8708},
doi = {https://doi.org/10.1006/aima.1997.1695},
url = {https://www.sciencedirect.com/science/article/pii/S0001870897916959},
author = {John C. Baez and James Dolan},
abstract = {We give a definition of weakn-categories based on the theory of operads. We work with operads having an arbitrary setSof types, or “S-operads,” and given such an operadO, we denote its set of operations by elt(O). Then for anyS-operadOthere is an elt(O)-operadO+whose algebras areS-operads overO. LettingIbe the initial operad with a one-element set of types, and definingI0+=I,I(i+1)+=(Ii+)+, we call the operations ofI(n−1)+the “n-dimensional opetopes.” Opetopes form a category, and presheaves on this category are called “opetopic sets.” A weakn-category is defined as an opetopic set with certain properties, in a manner reminiscent of Street's simplicial approach to weakω-categories. Similarly, starting from an arbitrary operadOinstead ofI, we define “n-coherentO-algebras,” which arentimes categorified analogs of algebras ofO. Examples include “monoidaln-categories,” “stablen-categories,” “virtualn-functors” and “representablen-prestacks.” We also describe hown-coherentO-algebra objects may be defined in any (n+1)-coherentO-algebra.}
}

@article{Chen:2023integrable,
    author = "Chen, Hank and Girelli, Florian",
    title = "{Integrability from categorification}",
    eprint = "2307.03831",
    archivePrefix = "arXiv",
    primaryClass = "math-ph",
    month = "7",
    year = "2023"
}

@article{Kong:2020wmn,
    author = "Kong, Liang and Tian, Yin and Zhang, Zhi-Hao",
    title = "{Defects in the 3-dimensional toric code model form a braided fusion 2-category}",
    eprint = "2009.06564",
    archivePrefix = "arXiv",
    primaryClass = "cond-mat.str-el",
    doi = "10.1007/JHEP12(2020)078",
    journal = "JHEP",
    volume = "12",
    pages = "078",
    year = "2020"
}

@article{Chen2z:2023,
    author = "Chen, Hank",
    title = "{Drinfeld double symmetry of the 4d Kitaev model}",
    eprint = "2305.04729",
    archivePrefix = "arXiv",
    primaryClass = "cond-mat.str-el",
    doi = "10.1007/JHEP09(2023)141",
    journal = "JHEP",
    volume = "09",
    pages = "141",
    year = "2023"
}

@article{Chen:2022hct,
    author = "Chen, Hank and Girelli, Florian",
    title = "{Gauging the Gauge and Anomaly Resolution}",
    eprint = "2211.08549",
    archivePrefix = "arXiv",
    primaryClass = "hep-th",
    month = "11",
    year = "2022"
}

@article{Johnson_Freyd_2023,
   title={Minimal nondegenerate extensions},
   ISSN={1088-6834},
   url={http://dx.doi.org/10.1090/jams/1023},
   DOI={10.1090/jams/1023},
   journal={Journal of the American Mathematical Society},
   publisher={American Mathematical Society (AMS)},
   author={Johnson-Freyd, Theo and Reutter, David},
   year={2023},
   month={7} }

@article{Woronowicz1988,
author = {Woronowicz, S.L.},
journal = {Inventiones mathematicae},
keywords = {monoidal -category; Tannaka-Krein duality theorem; matrix pseudogroups},
number = {1},
pages = {35-76},
title = {Tannaka-Krein duality for compact matrix pseudogroups. Twisted SU (N) groups.},
url = {http://eudml.org/doc/143589},
volume = {93},
year = {1988},
}

@article{Reshetikhin:1991tc,
    author = "Reshetikhin, N. and Turaev, V. G.",
    title = "{Invariants of three manifolds via link polynomials and quantum groups}",
    doi = "10.1007/BF01239527",
    journal = "Invent. Math.",
    volume = "103",
    pages = "547--597",
    year = "1991"
}

@article{Delcamp:2023kew,
    author = "Delcamp, Clement and Tiwari, Apoorv",
    title = "{Higher categorical symmetries and gauging in two-dimensional spin systems}",
    eprint = "2301.01259",
    archivePrefix = "arXiv",
    primaryClass = "hep-th",
    month = "1",
    year = "2023"
}

@article{Nguyen2014CROSSEDMA,
  title={CROSSED MODULES AND STRICT GR-CATEGORIES},
  author={Tien Quang Nguyen and Thi Cuc Pham and Thu Thuy Nguyen},
  journal={Communications of The Korean Mathematical Society},
  year={2014},
  volume={29},
  pages={9-22},
  url={https://api.semanticscholar.org/CorpusID:55316397}
}

@article{KongTianZhou:2020,
title = {The center of monoidal 2-categories in 3+1D Dijkgraaf-Witten theory},
journal = {Advances in Mathematics},
volume = {360},
pages = {106928},
year = {2020},
issn = {0001-8708},
doi = {https://doi.org/10.1016/j.aim.2019.106928},
url = {https://www.sciencedirect.com/science/article/pii/S0001870819305432},
author = {Liang Kong and Yin Tian and Shan Zhou},
keywords = {Braided monoidal 2-categories, Drinfeld center, Topological quantum field theory},
abstract = {In this work, for a finite group G and a 4-cocycle ω∈Z4(G,k×), we compute explicitly the center of the monoidal 2-category 2VecGω of ω-twisted G-graded 1-categories of finite dimensional k-vector spaces. This center gives a precise mathematical description of the topological defects in the associated 3+1D Dijkgraaf-Witten TQFT. We prove that this center is a braided monoidal 2-category with a trivial sylleptic center.}
}

@article{Geiller:2022fyg,
    author = "Geiller, Marc and Girelli, Florian and Goeller, Christophe and Tsimiklis, Panagiotis",
    title = "{Diffeomorphisms as quadratic charges in 4d BF theory and related TQFTs}",
    eprint = "2211.00068",
    archivePrefix = "arXiv",
    primaryClass = "hep-th",
    month = "10",
    year = "2022"
}

@article{Wen:2019,
  title = {Classification of $3+1\mathrm{D}$ Bosonic Topological Orders (II): The Case When Some Pointlike Excitations Are Fermions},
  author = {Lan, Tian and Wen, Xiao-Gang},
  journal = {Phys. Rev. X},
  volume = {9},
  issue = {2},
  pages = {021005},
  numpages = {37},
  year = {2019},
  month = {4},
  publisher = {American Physical Society},
  doi = {10.1103/PhysRevX.9.021005},
  url = {https://link.aps.org/doi/10.1103/PhysRevX.9.021005}
}

@article{Lax:1968fm,
    author = "Lax, P. D.",
    title = "{Integrals of Nonlinear Equations of Evolution and Solitary Waves}",
    journal = "Commun. Pure Appl. Math.",
    volume = "21",
    pages = "467--490",
    year = "1968"
}

@article{Alexandrov:1995kv,
    author = "Alexandrov, M. and Schwarz, A. and Zaboronsky, O. and Kontsevich, M.",
    title = "{The Geometry of the master equation and topological quantum field theory}",
    eprint = "hep-th/9502010",
    archivePrefix = "arXiv",
    reportNumber = "UCD-94-01, UCD\&B-94-01",
    doi = "10.1142/S0217751X97001031",
    journal = "Int. J. Mod. Phys. A",
    volume = "12",
    pages = "1405--1429",
    year = "1997"
}

@article{Cole:2024noh,
    author = "Cole, Lewis T. and Weck, Peter",
    title = "{Integrability in Gravity from Chern-Simons Theory}",
    eprint = "2407.08782",
    archivePrefix = "arXiv",
    primaryClass = "hep-th",
    month = "7",
    year = "2024"
}

@article{Cole:2024ess,
    author = "Cole, Lewis T. and Cullinan, Ryan A. and Hoare, Ben and Liniado, Joaquin and Thompson, Daniel C.",
    title = "{Gauging The Diamond: Integrable Coset Models from Twistor Space}",
    eprint = "2407.09479",
    archivePrefix = "arXiv",
    primaryClass = "hep-th",
    month = "7",
    year = "2024"
}

@book{Wagemann+2021,
author = {Friedrich Wagemann},
doi = {doi:10.1515/9783110750959},
url = {https://doi.org/10.1515/9783110750959},
title = {Crossed Modules},
year = {2021},
publisher = {De Gruyter},
ISBN = {9783110750959},
lastchecked = {2022-06-28}
}

@article{Zucchini:2019rpp,
    author = "Zucchini, Roberto",
    title = "{Operational total space theory of principal 2-bundles I: Operational geometric framework}",
    eprint = "1905.10057",
    archivePrefix = "arXiv",
    primaryClass = "math-ph",
    doi = "10.1016/j.geomphys.2020.103826",
    journal = "J. Geom. Phys.",
    volume = "156",
    pages = "103826",
    year = "2020"
}

@article{Zucchini:2019pbv,
    author = "Zucchini, Roberto",
    title = "{Operational total space theory of principal 2-bundles II: 2-connections and 1- and 2--gauge transformations}",
    eprint = "1907.00155",
    archivePrefix = "arXiv",
    primaryClass = "math-ph",
    doi = "10.1016/j.geomphys.2020.103825",
    journal = "J. Geom. Phys.",
    volume = "156",
    pages = "103825",
    year = "2020"
}

@article{Geiller:2020edh,
    author = "Geiller, Marc and Goeller, Christophe and Merino, Nelson",
    title = "{Most general theory of 3d gravity: Covariant phase space, dual diffeomorphisms, and more}",
    eprint = "2011.09873",
    archivePrefix = "arXiv",
    primaryClass = "hep-th",
    doi = "10.1007/JHEP02(2021)120",
    journal = "JHEP",
    volume = "02",
    pages = "120",
    year = "2021"
}

@article{Willerton:2008gyk,
    author = "Willerton, Simon",
    title = "{The twisted Drinfeld double of a finite group via gerbes and finite groupoids}",
    doi = "10.2140/agt.2008.8.1419",
    journal = "Algebr. Geom. Topol.",
    volume = "8",
    number = "3",
    pages = "1419--1457",
    year = "2008"
}

@article{Douglas:2018,
  title={Fusion 2-categories and a state-sum invariant for 4-manifolds},
  author={Christopher L. Douglas and David J. Reutter},
  journal={arXiv: Quantum Algebra},
  year={2018},
  url={https://api.semanticscholar.org/CorpusID:119305837}
}

@article{Baez:2004,
author = {Baez, John C. and Lauda, Aaron D.},
journal = {Theory and Applications of Categories [electronic only]},
keywords = {2-group; categorical group; Chern-Simons theory; group cohomology},
language = {eng},
pages = {423-491},
publisher = {Mount Allison University, Department of Mathematics and Computer Science, Sackville},
title = {Higher-dimensional algebra. V: 2-Groups.},
url = {http://eudml.org/doc/124217},
volume = {12},
year = {2004},
}

@misc{Angulo:2018,
  doi = {10.48550/arxiv.1810.05740},
  
  url = {https://arxiv.org/abs/1810.05740},
  
  author = {Angulo, Camilo},
  
  keywords = {Differential Geometry (math.DG), Category Theory (math.CT), FOS: Mathematics, FOS: Mathematics},
  
  title = {A cohomology theory for Lie 2-algebras and Lie 2-groups},
  
  publisher = {arXiv},
  
  year = {2018},
  
  copyright = {arXiv.org perpetual, non-exclusive license}
}

@article{Carey_1997,
	doi = {10.1016/s0393-0440(96)00014-9},
  
	url = {https://doi.org/10.1016%2Fs0393-0440%2896%2900014-9},
  
	year = 1997,
	month = {1},
  
	publisher = {Elsevier {BV}
},
  
	volume = {21},
  
	number = {2},
  
	pages = {183--197},
  
	author = {A.L. Carey and M.K. Murray and B.L. Wang},
  
	title = {Higher bundle gerbes and cohomology classes in gauge theories},
  
	journal = {Journal of Geometry and Physics}
}

@article{Johnson-Freyd:2020,
    author = "Johnson-Freyd, Theo",
    title = "{(3+1)D topological orders with only a $\mathbb{Z}_2$-charged particle}",
    eprint = "2011.11165",
    archivePrefix = "arXiv",
    primaryClass = "math.QA",
    month = "11",
    year = "2020"
}

@article{Freed_2021,
	doi = {10.2140/gt.2021.25.1165},
  
	url = {https://doi.org/10.2140%2Fgt.2021.25.1165},
  
	year = 2021,
	month = {5},
  
	publisher = {Mathematical Sciences Publishers},
  
	volume = {25},
  
	number = {3},
  
	pages = {1165--1330},
  
	author = {Daniel S Freed and Michael J Hopkins},
  
	title = {Reflection positivity and invertible topological phases},
  
	journal = {Geometry {\&} Topology}
}

@article{Freed:2014,
    author = "Freed, Daniel S.",
    editor = "Donagi, Ron and Douglas, Michael R. and Kamenova, Ljudmila and Rocek, Martin",
    title = "{Anomalies and Invertible Field Theories}",
    eprint = "1404.7224",
    archivePrefix = "arXiv",
    primaryClass = "hep-th",
    doi = "10.1090/pspum/088/01462",
    journal = "Proc. Symp. Pure Math.",
    volume = "88",
    pages = "25--46",
    year = "2014"
}

@inproceedings{Witten:2019,
    author = "Witten, Edward and Yonekura, Kazuya",
    title = "{Anomaly Inflow and the $\eta$-Invariant}",
    booktitle = "{The Shoucheng Zhang Memorial Workshop}",
    eprint = "1909.08775",
    archivePrefix = "arXiv",
    primaryClass = "hep-th",
    month = "9",
    year = "2019"
}

@inproceedings{FreedMonopole,
    author = "Freed, Daniel S.",
    title = "Dirac charge quantization and generalized differential cohomology",
    eprint = "hep-th/0011220",
    archivePrefix = "arXiv",
    month = "11",
    year = "2000"
}

@misc{Baez:2002highergauge,
  doi = {10.48550/ARXIV.HEP-TH/0206130},
  
  url = {https://arxiv.org/abs/hep-th/0206130},
  
  author = {Baez, John C.},
  
  keywords = {High Energy Physics - Theory (hep-th), FOS: Physical sciences, FOS: Physical sciences},
  
  title = {Higher Yang-Mills Theory},
  
  publisher = {arXiv},
  
  year = {2002},
  
  copyright = {Assumed arXiv.org perpetual, non-exclusive license to distribute this article for submissions made before January 2004}
}

@article{Benini_2019,
	doi = {10.1007/jhep03(2019)118},
  
	url = {https://doi.org/10.1007%2Fjhep03%282019%29118},
  
	year = 2019,
	month = {3},
  
	publisher = {Springer Science and Business Media {LLC}
},
  
	volume = {2019},
  
	number = {3},
  
	author = {Francesco Benini and Clay C{\'{o}}rdova and Po-Shen Hsin},
  
	title = {On 2-group global symmetries and their anomalies},
  
	journal = {Journal of High Energy Physics}
}

@article{Bochniak_2021,
	doi = {10.1007/jhep09(2021)068},
  
	url = {https://doi.org/10.1007%2Fjhep09%282021%29068},
  
	year = 2021,
	month = {9},
  
	publisher = {Springer Science and Business Media {LLC}
},
  
	volume = {2021},
  
	number = {9},
  
	author = {A. Bochniak and L. Hadasz and P. Korcyl and B. Ruba},
  
	title = {Dynamics of a lattice 2-group gauge theory model},
  
	journal = {Journal of High Energy Physics}
}

@book{Yekuteli:2015,
author = {Yekutieli, Amnon},
title = {Nonabelian Multiplicative Integration on Surfaces},
publisher = {WORLD SCIENTIFIC},
year = {2015},
doi = {10.1142/9537},
address = {},
edition   = {},
URL = {https://www.worldscientific.com/doi/abs/10.1142/9537},
eprint = {https://www.worldscientific.com/doi/pdf/10.1142/9537}
}

@book{book-integrable, place={Cambridge}, series={Cambridge Monographs on Mathematical Physics}, title={Introduction to Classical Integrable Systems}, DOI={10.1017/CBO9780511535024}, publisher={Cambridge University Press}, author={Olivier Babelon, Denis Bernard and Michel Talon}, year={2003}, collection={Cambridge Monographs on Mathematical Physics}}

@book{maclane:71,
  added-at = {2009-09-18T21:22:09.000+0200},
  address = {New York},
  author = {MacLane, Saunders},
  biburl = {https://www.bibsonomy.org/bibtex/29e8ca8b4bf357cc41e40e98cca25cb8c/minas},
  interhash = {51566d046db4c3ea930c2b5ca79173f1},
  intrahash = {9e8ca8b4bf357cc41e40e98cca25cb8c},
  keywords = {CategoryTheory},
  mrclass = {18-02},
  mrnumber = {MR0354798 (50 \#7275)},
  mrreviewer = {H.-B. Brinkmann},
  note = {Graduate Texts in Mathematics, Vol. 5},
  pages = {ix+262},
  publisher = {Springer-Verlag},
  timestamp = {2009-09-18T21:22:09.000+0200},
  title = {Categories for the Working Mathematician},
  year = 1971
}

@article{Zhu:2019,
   title={Topological nonlinear $\sigma$-model, higher gauge theory, and a systematic construction of (3+1)D topological orders for boson systems},
   volume={100},
   ISSN={2469-9969},
   url={http://dx.doi.org/10.1103/PhysRevB.100.045105},
   doi={10.1103/physrevb.100.045105},
   number={4},
   journal={Physical Review B},
   publisher={American Physical Society (APS)},
   author={Zhu, Chenchang and Lan, Tian and Wen, Xiao-Gang},
   year={2019},
   month={7} }

@misc{Ang2018,
      title={Higher categorical groups and the classification of topological defects and textures}, 
      author={J.P. Ang and Abhishodh Prakash},
      year={2018},
      eprint={1810.12965},
      archivePrefix={arXiv},
      primaryClass={math-ph}
}

@InProceedings{Segal1985,
author="Segal, G. B.",
editor="Hirzebruch, Friedrich
and Schwermer, Joachim
and Suter, Silke",
title="Loop groups",
booktitle="Arbeitstagung Bonn 1984",
year="1985",
publisher="Springer Berlin Heidelberg",
address="Berlin, Heidelberg",
pages="155--168",
isbn="978-3-540-39298-9"
}

@book{book-charclass,
	author = {John Milnor and James D. Stacheff},
	publisher = {Princeton University Press},
	series = {Annals of Mathematics Studies},
	title = {{Characteristic Classes}},
	year = {1974},
	}

@article{chen:2022,
  title = {Categorified Drinfel'd double and $BF$ theory: 2-groups in 4D},
  author = {Chen, Hank and Girelli, Florian},
  journal = {Phys. Rev. D},
  volume = {106},
  issue = {10},
  pages = {105017},
  numpages = {35},
  year = {2022},
  month = {11},
  publisher = {American Physical Society},
  doi = {10.1103/PhysRevD.106.105017},
  url = {https://link.aps.org/doi/10.1103/PhysRevD.106.105017}
}

@article{Bullivant:2019tbp,
	archiveprefix = {arXiv},
	author = {Bullivant, Alex and Delcamp, Clement},
	date-added = {2022-02-02 11:24:47 -0500},
	date-modified = {2022-02-02 11:24:47 -0500},
	doi = {10.1007/JHEP01(2020)107},
	eprint = {1909.07937},
	journal = {JHEP},
	pages = {107},
	primaryclass = {cond-mat.str-el},
	title = {{Excitations in strict 2-group higher gauge models of topological phases}},
	volume = {01},
	year = {2020},
	bdsk-url-1 = {https://doi.org/10.1007/JHEP01(2020)107}}

@article{Crane:1994ty,
    author = "Crane, Louis and Frenkel, Igor",
    title = "{Four-dimensional topological field theory, Hopf categories, and the canonical bases}",
    eprint = "hep-th/9405183",
    archivePrefix = "arXiv",
    doi = "10.1063/1.530746",
    journal = "J. Math. Phys.",
    volume = "35",
    pages = "5136--5154",
    year = "1994"
}

@article{Brown,
	author = {R. Brown},
	collection = {London Mathematical Society Lecture Note Series},
	date-added = {2022-01-27 13:22:11 -0500},
	date-modified = {2022-01-27 13:30:54 -0500},
	doi = {10.1017/CBO9780511526305},
	journal = {London Mathematical Society Lecture Note Series},
	pages = {187--210},
	place = {Cambridge},
	publisher = {Cambridge University Press},
	title = {Computing Homotopy Types Using Crossed $N$-Cubes of Groups},
	volume = {1},
	year = {1992},
	bdsk-url-1 = {https://doi.org/10.1017/CBO9780511526305}}

@article{Baez:1995xq,
    author = "Baez, J. C. and Dolan, J.",
    title = "{Higher dimensional algebra and topological quantum field theory}",
    eprint = "q-alg/9503002",
    archivePrefix = "arXiv",
    doi = "10.1063/1.531236",
    journal = "J. Math. Phys.",
    volume = "36",
    pages = "6073--6105",
    year = "1995"
}

@article{Baez:2004in,
	archiveprefix = {arXiv},
	author = {Baez, John and Schreiber, Urs},
	date-added = {2022-01-26 15:00:12 -0500},
	date-modified = {2022-01-26 15:00:12 -0500},
	eprint = {hep-th/0412325},
	month = {12},
	title = {{Higher gauge theory: 2-connections on 2-bundles}},
	year = {2004}}

@article{Baez:2003fs,
	archiveprefix = {arXiv},
	author = {Baez, John C. and Crans, Alissa S.},
	date-added = {2022-01-26 14:57:14 -0500},
	date-modified = {2022-01-26 14:57:14 -0500},
	eprint = {math/0307263},
	journal = {Theor. Appl. Categor.},
	pages = {492--528},
	title = {{Higher-Dimensional Algebra VI: Lie 2-Algebras}},
	volume = {12},
	year = {2004}}

@article{Mikovic:2016xmo,
	archiveprefix = {arXiv},
	author = {Mikovic, Aleksandar and Oliveira, Miguel Angelo and Vojinovic, Marko},
	date-added = {2022-01-26 11:48:15 -0500},
	date-modified = {2022-01-26 11:48:15 -0500},
	eprint = {1610.09621},
	month = {10},
	primaryclass = {math-ph},
	title = {{Hamiltonian analysis of the BFCG theory for a strict Lie 2-group}},
	year = {2016}}

@article{Cole:2023umd,
    author = "Cole, Lewis T. and Cullinan, Ryan A. and Hoare, Ben and Liniado, Joaquin and Thompson, Daniel C.",
    title = "{Integrable Deformations from Twistor Space}",
    eprint = "2311.17551",
    archivePrefix = "arXiv",
    primaryClass = "hep-th",
    month = "11",
    year = "2023"
}

@article{Penna:2020uky,
    author = "Penna, Robert F.",
    title = "{Twistor Actions for Integrable Systems}",
    eprint = "2011.05831",
    archivePrefix = "arXiv",
    primaryClass = "hep-th",
    doi = "10.1007/JHEP09(2021)140",
    journal = "JHEP",
    volume = "09",
    pages = "140",
    year = "2021"
}

@article{Song_2023,
   title={Higher Chern-Simons based on (2-)crossed modules},
   volume={2023},
   ISSN={1029-8479},
   url={http://dx.doi.org/10.1007/JHEP07(2023)207},
   DOI={10.1007/jhep07(2023)207},
   number={7},
   journal={Journal of High Energy Physics},
   publisher={Springer Science and Business Media LLC},
   author={Song, Danhua and Wu, Mengyao and Wu, Ke and Yang, Jie},
   year={2023},
   month=jul }

@article{Zwiebach:1992ie,
    author = "Zwiebach, Barton",
    title = "{Closed string field theory: Quantum action and the B-V master equation}",
    eprint = "hep-th/9206084",
    archivePrefix = "arXiv",
    reportNumber = "IASSNS-HEP-92-41, MIT-CTP-2102",
    doi = "10.1016/0550-3213(93)90388-6",
    journal = "Nucl. Phys. B",
    volume = "390",
    pages = "33--152",
    year = "1993"
}

@article{Kim:2019owc,
	archiveprefix = {arXiv},
	author = {Kim, Hyungrok and Saemann, Christian},
	date-added = {2022-01-25 23:36:32 -0500},
	date-modified = {2022-01-25 23:36:32 -0500},
	doi = {10.1088/1751-8121/ab8ef2},
	eprint = {1911.06390},
	journal = {J. Phys. A},
	number = {44},
	pages = {445206},
	primaryclass = {hep-th},
	reportnumber = {EMPG-19-24},
	title = {{Adjusted parallel transport for higher gauge theories}},
	volume = {53},
	year = {2020},
	bdsk-url-1 = {https://doi.org/10.1088/1751-8121/ab8ef2}}

@article{Baez:1995ph,
	archiveprefix = {arXiv},
	author = {Baez, John C.},
	date-added = {2022-01-25 23:19:12 -0500},
	date-modified = {2022-01-25 23:19:12 -0500},
	doi = {10.1007/BF00398315},
	eprint = {q-alg/9507006},
	journal = {Lett. Math. Phys.},
	pages = {129--143},
	title = {{Four-Dimensional BF theory with cosmological term as a topological quantum field theory}},
	volume = {38},
	year = {1996},
	bdsk-url-1 = {https://doi.org/10.1007/BF00398315}}

@article{Martins:2010ry,
	archiveprefix = {arXiv},
	author = {Martins, Joao Faria and Mikovic, Aleksandar},
	date-added = {2022-01-25 23:07:00 -0500},
	date-modified = {2022-01-25 23:07:00 -0500},
	doi = {10.4310/ATMP.2011.v15.n4.a4},
	eprint = {1006.0903},
	journal = {Adv. Theor. Math. Phys.},
	number = {4},
	pages = {1059--1084},
	primaryclass = {hep-th},
	title = {{Lie crossed modules and gauge-invariant actions for 2-BF theories}},
	volume = {15},
	year = {2011},
	bdsk-url-1 = {https://doi.org/10.4310/ATMP.2011.v15.n4.a4}}

@article{Chen:2012gz,
	archiveprefix = {arXiv},
	author = {Chen, Zhuo and Sti{\'e}non, Mathieu and Xu, Ping},
	date-added = {2022-01-25 22:57:58 -0500},
	date-modified = {2022-01-25 22:57:58 -0500},
	doi = {10.4310/jdg/1367438648},
	eprint = {1202.0079},
	journal = {J. Diff. Geom.},
	number = {2},
	pages = {209--240},
	primaryclass = {math.DG},
	title = {{Poisson 2-groups}},
	volume = {94},
	year = {2013},
	bdsk-url-1 = {https://doi.org/10.4310/jdg/1367438648}}

@article{Cordova:2018cvg,
	archiveprefix = {arXiv},
	author = {C\'ordova, Clay and Dumitrescu, Thomas T. and Intriligator, Kenneth},
	date-added = {2022-01-25 22:39:35 -0500},
	date-modified = {2022-01-25 22:39:35 -0500},
	doi = {10.1007/JHEP02(2019)184},
	eprint = {1802.04790},
	journal = {JHEP},
	pages = {184},
	primaryclass = {hep-th},
	title = {{Exploring 2-Group Global Symmetries}},
	volume = {02},
	year = {2019},
	bdsk-url-1 = {https://doi.org/10.1007/JHEP02(2019)184}}

@article{Dubinkin:2020kxo,
	archiveprefix = {arXiv},
	author = {Dubinkin, Oleg and Rasmussen, Alex and Hughes, Taylor L.},
	date-added = {2022-01-25 22:32:59 -0500},
	date-modified = {2022-01-25 22:32:59 -0500},
	doi = {10.1016/j.aop.2020.168297},
	eprint = {2007.05539},
	journal = {Annals Phys.},
	pages = {168297},
	primaryclass = {cond-mat.str-el},
	title = {{Higher-form Gauge Symmetries in Multipole Topological Phases}},
	volume = {422},
	year = {2020},
	bdsk-url-1 = {https://doi.org/10.1016/j.aop.2020.168297}}

@article{Kapustin:2013uxa,
	archiveprefix = {arXiv},
	author = {Kapustin, Anton and Thorngren, Ryan},
	date-added = {2022-01-25 22:30:57 -0500},
	date-modified = {2022-01-25 22:30:57 -0500},
	eprint = {1309.4721},
	month = {9},
	primaryclass = {hep-th},
	title = {{Higher symmetry and gapped phases of gauge theories}},
	year = {2013}}

@article{Baez:2005sn,
	archiveprefix = {arXiv},
	author = {Baez, John C. and Stevenson, Danny and Crans, Alissa S. and Schreiber, Urs},
	date-added = {2022-01-25 22:29:12 -0500},
	date-modified = {2022-01-25 22:29:12 -0500},
	eprint = {math/0504123},
	month = {4},
	title = {{From loop groups to 2-groups}},
	year = {2005}}

@article{Martins:2006hx,
	archiveprefix = {arXiv},
	author = {Martins, Joao Faria and Porter, Timothy},
	date-added = {2022-01-24 11:30:05 -0500},
	date-modified = {2022-01-24 11:30:05 -0500},
	eprint = {math/0608484},
	journal = {Theor. Appl. Categor.},
	pages = {118--150},
	title = {{On Yetter's invariant and an extension of the Dijkgraaf-Witten invariant to categorical groups}},
	volume = {18},
	year = {2007}}

@article{Bai_2013,
	author = {Chengming Bai and Yunhe Sheng and Chenchang Zhu},
	date-added = {2022-01-12 16:14:18 -0500},
	date-modified = {2022-01-12 16:14:18 -0500},
	doi = {10.1007/s00220-013-1712-3},
	journal = {Communications in Mathematical Physics},
	month = {4},
	number = {1},
	pages = {149--172},
	publisher = {Springer Science and Business Media {LLC}},
	title = {Lie 2-Bialgebras},
	url = {https://doi.org/10.1007%2Fs00220-013-1712-3},
	volume = {320},
	year = 2013,
	bdsk-url-1 = {https://doi.org/10.1007%2Fs00220-013-1712-3},
	bdsk-url-2 = {https://doi.org/10.1007/s00220-013-1712-3}}

@article{Girelli:2021zmt,
	archiveprefix = {arXiv},
	author = {Girelli, Florian and Tsimiklis, Panagiotis},
	date-added = {2022-01-11 15:16:07 -0500},
	date-modified = {2022-01-11 15:16:07 -0500},
	eprint = {2105.01817},
	month = {5},
	primaryclass = {hep-th},
	title = {{Discretization of 4d Poincar\'e BF theory: from groups to 2-groups}},
	year = {2021}}

@article{Witten:1988hc,
	author = {Witten, Edward},
	date-added = {2018-03-10 03:45:15 +0000},
	date-modified = {2018-03-10 03:45:15 +0000},
	doi = {10.1016/0550-3213(88)90143-5},
	journal = {Nucl. Phys.},
	pages = {46},
	reportnumber = {IASSNS-HEP-88-32},
	slaccitation = {%%CITATION = NUPHA,B311,46;%%},
	title = {{(2+1)-Dimensional Gravity as an Exactly Soluble System}},
	volume = {B311},
	year = {1988},
	bdsk-url-1 = {https://dx.doi.org/10.1016/0550-3213(88)90143-5}}

@article{Turaev:1992hq,
	author = {Turaev, V. G. and Viro, O. Y.},
	date-added = {2018-03-10 03:25:30 +0000},
	date-modified = {2018-03-10 03:25:30 +0000},
	doi = {10.1016/0040-9383(92)90015-A},
	journal = {Topology},
	pages = {865-902},
	slaccitation = {%%CITATION = TPLGA,31,865;%%},
	title = {{State sum invariants of 3 manifolds and quantum 6j symbols}},
	volume = {31},
	year = {1992},
	bdsk-url-1 = {https://dx.doi.org/10.1016/0040-9383(92)90015-A}}

@article{Bullivant:2017qrv,
	archiveprefix = {arXiv},
	author = {Bullivant, Alex and Hu, Yuting and Wan, Yidun},
	date-added = {2017-07-03 09:37:01 +0000},
	date-modified = {2017-07-03 09:37:01 +0000},
	eprint = {1706.03611},
	primaryclass = {cond-mat.str-el},
	slaccitation = {%%CITATION = ARXIV:1706.03611;%%},
	title = {{Twisted Quantum Double Model of Topological Orders with Boundaries}},
	year = {2017}}

\end{document}